\newcounter{LP}
\newtheorem{theorem}{Theorem}[section]
\newtheorem*{theorem*}{Theorem}
\newtheorem{lemma}[theorem]{Lemma}
\newtheorem{definition}[theorem]{Definition}
\newtheorem{corollary}[theorem]{Corollary}
\newtheorem{proposition}[theorem]{Proposition}
\newtheorem{observation}[theorem]{Observation}
\newtheorem{assumption}[theorem]{Assumption}
\newtheorem{example}[theorem]{Example}
\theoremstyle{remark}
\newtheorem*{remark}{Remark}
\newtheorem*{notation}{Notation}
\newcommand{\Dis}{{\mathrm{Dis}}}
\newcommand{\Con}{{\mathrm{Con}}}
\begin{document}
\title{Contracts with Private Cost per Unit-of-Effort\thanks{An extended abstract appeared in the \emph{Proceedings of the 22nd ACM Conference on Economics and Computation}. This research was supported by the Israel Science Foundation (grant no.~336/18) and by the Taub Family Foundation. }}
%\title{Contracts with Private Marginal Cost}
% A Myerson Lemma for Contracts
% Truthfulness of Single-Parameter Contracts
% Implementability of Single-Parameter Contracts
% Single-Parameter Contracts

\author{ Tal Alon\thanks{Technion -- Israel Institute of Technology. Email: alontal@campus.technion.ac.il.}
\and Paul D\"utting\thanks{Google Research. Email: duetting@google.com.}  
\and Inbal Talgam-Cohen\thanks{Technion -- Israel Institute of Technology. Email: italgam@cs.technion.ac.il.}
}

%%
%% By default, the full list of authors will be used in the page
%% headers. Often, this list is too long, and will overlap
%% other information printed in the page headers. This command allows
%% the author to define a more concise list
%% of authors' names for this purpose.
%\renewcommand{\shortauthors}{Trovato and Tobin, et al.}

\date{}

% Title page for title and abstract only.
%\begin{titlepage}

\maketitle

\begin{abstract}
Economic theory distinguishes between principal-agent settings in which the agent has a private type and settings in which the agent takes a hidden action. Many practical problems, however, involve aspects of both. For example, brand X may seek to hire an influencer Y to create sponsored content to be posted on social media platform Z. This problem has a hidden action component (the brand may not be able or willing to observe the amount of effort exerted by the influencer), but also a private type component (influencers may have different costs per unit-of-effort).

This ``effort'' and ``cost per unit-of-effort'' perspective naturally leads to a principal-agent problem with hidden action and single-dimensional private type, which generalizes both the classic principal-agent hidden action model of contract theory \`a la \cite{GrossmanHart83} and the (procurement version) of single-dimensional mechanism design \`a la \cite{Myerson81}. 
A natural goal in this model is to design an incentive-compatible contract, which consist of an allocation rule that maps types to actions, and a payment rule that maps types to payments for the stochastic outcomes of the chosen action.

Our main contribution is a linear programming (LP) duality based characterization of implementable allocation rules for this model, which applies to both discrete and continuous types. This characterization shares important features of Myerson's celebrated characterization result, but also departs from it in significant ways. We present several applications, including a polynomial-time algorithm for finding the optimal contract with a constant number of actions. This is in sharp contrast to recent work on hidden action problems with multi-dimensional private information, which has shown that the problem of computing an optimal contract for constant numbers of actions is APX-hard.

%Keywords: Myerson's theory; moral hazard; adverse selection; single-parameter; duality characterization

\end{abstract}

%\vspace{3mm}

%\noindent

%\onehalfspacing

%\newpage

%\end{titlepage}
% TODO: 

%(maybe later add in 2-outcome result as evidence that implementability is the real issue)

% \section{Introduction}\label{sec:intro}
% \begin{itemize}
%       \item model that combines hidden action with private types; single dimensional, hidden cost per unit of effort
%       \item Motivation: Agents with types are hard, so we introduce a new single-parameter model.\inbal{need to be careful about saying new due to \cite{ChadeS19}}
%       \item It can capture natural examples like online platforms.
%     \item comes in two flavors: discrete type (finite set of possible costs) or continuous type (distribution over costs)
%     \item contract in this model: mapping from cost to level of effort (called menu of contracts in related work)
%     \item two types of incentive constraints --- from private type, and from hidden action; agent should reveal hidden cost truthfully, and choose appropriate level of effort
% \end{itemize}

\section{Introduction}

Mechanism design (MD) \cite[e.g.,][]{Myerson81} is a corner stone of economic theory, with major successes in both theory and practice.
MD studies mechanisms for resource allocation among agents with private information, known as \emph{hidden types}, with the goal of maximizing social welfare or revenue. 
The past two decades have seen a surge of interest in the study of mechanism design through an algorithmic and computational lens. 
Major success stories range from computational advertising and spectrum auctions to applications in internet routing and loadbalancing. 
An equally important and central branch in economics is contract theory (CT) \cite[e.g.,][]{GrossmanHart83}. While the natural focus of mechanism design is on the allocation of goods, contract theory has a natural focus on the \emph{allocation of effort}. Contracts are a main tool for effort allocation since they use payments (monetary or other) to determine which actions agents will take. 
While the computer science community has largely ignored contract theory, driven by increased practical demand, there is a growing momentum and a recent set of papers have started to explore applications of contract theory from a computational viewpoint \cite[e.g.,][]{babaioff2006combinatorial,DuttingRT19,DuttingRT20,guruganesh2020contracts}.

The increased practical demand for a computational and algorithmic approach to contracts is caused by an accelerating movement of contract-based markets from the analog/pen-and-paper world to the digital/electronic world. This includes online markets for crowdsourcing, sponsored content creation, affiliate marketing, freelancing and more. 
The economic value of these markets is substantial.\footnote{For example, according to Statista, influencer marketing on Instagram was worth 5.67 billion U.S.~dollars in 2018. See \url{https://www.statista.com/statistics/950920/global-instagram-influencer-marketing-spending/}.}
In such applications, platforms serve as a bridge between the two sides of the market (e.g., advertising brands and content creators), and are thus well-situated to play the role of market makers, applying their position and data to design better contracts. 

\paragraph{\bf The need for CT $\times$ MD.}

Our work is motivated by the fact that in many of the applications that motivate the surge of interest in contracts, we actually see features of both --- contract theory and mechanism design; and while there is some work on this in economics (which we survey in Section~\ref{sec:related-work}) --- with the exception of \cite{guruganesh2020contracts} --- we are not aware of any prior work on the combination of the two from the computer science perspective.

The starting point of our work is that many of the applications at the intersection of CT and MD naturally involve an agent who can exert different levels of effort (his actions), and as in classic contract theory this leads to a stochastic outcome/reward to the principal, but the cost per unit-of-effort may differ between agents and is naturally modelled as private information. For example, a brand (the principal) may approach an influencer on a social media platform (the agent) to create branded or sponsored content on their behalf; and the opportunity cost of different influencers for different amounts of effort and outcome levels may differ.

The point is that this naturally leads to a model that combines the classic-principal agent problem with hidden action with features of \emph{single-dimensional} mechanism design that we propose and study in this paper.

\paragraph{\bf Single vs.~multi-parameter.}

%Develop a unified theory of mechanisms and contracts, which can be applied to settings with a mixture of hidden types and actions. %(see Table \ref{tab:vision}). 

%Consider a single principal delegating tasks to a single agent. 
%The defining feature of contract design is that the \emph{agent's actions are hidden}. But in applications like influencer marketing, in addition to her hidden action the agent also has private information regarding her abilities and costs. This is known as the agent's \emph{hidden type}, and raises the issue of \emph{truthfulness} alongside more standard contract design questions. Once agents have types, using the same ``cookie-cutter'' contract for all agents becomes suboptimal. Offering type-dependent contracts can achieve much better guarantees, but might also incentivize agents to misrepresent their type to the platform. 

To concretely discuss agent types and explain where our contribution diverges from \cite{guruganesh2020contracts}, we briefly introduce the classic model for a principal-agent contractual relation \cite[e.g.,][]{GrossmanHart83,carroll2015robustness}. A basic principal-agent setting is described by $n$ (hidden) actions the agent can choose among, $m$ (observable) possible outcomes the actions can lead to, and an $n\times m$ matrix~$F$ whose $i$th row maps action $i$ to a distribution over the outcomes. In addition, there is a cost vector with $n$ costs, one per action, specifying how much costly effort the agent must invest to take that action. Finally there is a reward vector with $m$ rewards, one per outcome, specifying how much the principal gains when the agent achieves that outcome. Given a contract (payment per outcome), the agent picks the utility-maximizing action (which maximizes his expected payment from the contract minus his cost of effort), and the principal gets as revenue the expected reward minus payment. 

In our proposed model, there are $n$ different effort levels $i$ (actions). Action $i$ costs $\gamma_i$ units of effort. As before, taking an action triggers a stochastic outcome according to some probability distribution $F_i$ (the $i$-th row of $F$). A new feature relative to \cite{guruganesh2020contracts} is that agent's have a \emph{single-dimensional} private type $c$ --- their cost per unit-of-effort. So their cost for taking action $i$ is $c \cdot \gamma_i$.

The agent's hidden type by \cite{guruganesh2020contracts}, in contrast, is his mapping from actions to outcomes, as modeled by the matrix~$F$ (and is therefore naturally \emph{multi-dimensional}). The cost vector is assumed to be public knowledge. The (discrete) distribution of types is also publicly known. In this model, \cite{guruganesh2020contracts} establish the hardness of computing an optimal truthful contract menu (where optimal refers to the principal's revenue). 
We study a complementary model, where $F$ is known but costs are hidden. The fundamental difference from a computational viewpoint is that in the new model, the type of an agent can be represented by a \emph{single parameter} --- his cost per unit of effort. 
It is well-known from auction design that single-parameter types may allow positive results even when hardness results hold for multi-parameter types. 

Table~\ref{tab:vision} positions our model with respect to the classic work in both mechanism design and contract theory, and on the intersection of the two.

\begin{table*}[t]	
	\begin{center}
        \begin{tabular}{|c|c|c|}
			\toprule
			& Known action & Hidden action \\ \hline
			Known type & Trivial & \makecell{Classic contract theory\\
			\citep{GrossmanHart83}}\\
			&&\\
			\hline
			
			Hidden type & \makecell{Myerson's theory\\
			\citep{Myerson81}} & Our model\\
			(single-dim) & & \\
			\hline
			
			Hidden type & E.g., \citet{CaiDW13} & \cite{guruganesh2020contracts} \\
			(multi-dim) & & \\
			\bottomrule
		\end{tabular}
	\end{center}
	\caption{Our model's relations to other settings in the literature}
	%{\bf Revenue maximization in single-parameter settings.} ``Hidden action'' means that an agent's choice of action is not observable directly but only through its stochastic outcomes -- this is the defining feature of contract theory. ``?''~marks the limits of our knowledge (for multi-parameter settings, \cite{GuruganeshSW20} address the combination of mechanism and contract design -- see Project Description for details).
	%Full welfare extraction
	\label{tab:vision}
\end{table*}

%allows the mechanism designer to focus on

\subsection{Our Results}
\label{sub:our-results}

One of the most useful tools that Myerson's theory (see \cite[][]{Myerson81}) provides for single-parameter mechanism design is a characterization of implementable allocation rules. Known as Myerson's lemma, this characterization 
is useful since it defines the design space --- the designer only needs to consider monotone allocation rules,% 
\footnote{In the context of mechanism design, an allocation rule maps an agent's (reported) type to his allocation under the mechanism; a payment rule maps his type to what he pays (or in procurement is paid).}
and any such rule is guaranteed to have a corresponding payment rule such that the resulting mechanism is truthful. 

Our main result is a ``Myerson's lemma'' for the more general model of contracts with private costs. Such a lemma should incorporate both the original characterization of Myerson, and the characterization of implementable actions from principal-agent theory of \cite{GrossmanHart83}.
These two previously-known characterizations seem quite different: The first characterizes implementable allocation rules as monotone, which means (in the procurement variant) that agents (service providers)  with high reported costs are not assigned by the mechanism to provide service.% 
\footnote{Or, if there are multiple levels of service, agents with high costs are assigned to provide lower, less costly such levels.} The second characterizes implementable actions --- those for which the principal has some contractual payment scheme over the outcomes that incentivizes the agent to choose this action. This characterization says that for an action $i$ to be implementable, there can be no convex combination of the other actions that achieves the same distribution as $i$ over outcomes, at a lower combined cost. 

To contrast with our unified characterization, it's useful to restate monotonicity (in the sense of Myerson's theory) as follows: Assume for simplicity a discrete type space, and consider all agent types and whether they're assigned by the allocation rule to provide service or not. Consider the aggregate %value of 
service by the allocated agents and its overall cost. An allocation rule is monotone (and thus implementable) if and only if there is no combination of agent types that would together provide the same aggregate service at a lower combined cost. 

Our unified characterization result is:

%In Section~\ref{sec:charac} We provide our characterization of an implementable allocation rule in contracts with private costs. The allocation rule maps a (reported) cost to an assigned action for the agent.

\begin{theorem*}[Informal characterization --- see Theorem~\ref{thm:charac-disc}]
Consider an allocation rule mapping agent types to assigned actions. The rule is implementable if and only if there is no weighted combination of agent types and actions that together achieve the same distribution over outcomes at a lower combined cost.
\end{theorem*}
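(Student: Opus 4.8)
The plan is to phrase implementability as a linear feasibility problem over payment rules and then extract the characterization from an LP-duality certificate. Fix a candidate allocation rule $a(\cdot)$ and, in the discrete-type case, take as variables the payment vectors $t(c)\in\mathbb{R}^m$ -- one coordinate per outcome, one vector per type $c$. Truthfulness is the linear system requiring, for every true type $c$, every report $c'$, and every action $j$, that the on-path utility $\langle F_{a(c)},t(c)\rangle-c\,\gamma_{a(c)}$ weakly exceed $\langle F_j,t(c')\rangle-c\,\gamma_j$, the payoff from reporting $c'$ and then best-responding with action $j$; the $c'=c$ instances already encode that $a(c)$ is a best response to the contract $t(c)$, so no separate obedience constraints are needed. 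The allocation rule is implementable exactly when this system has a solution.

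The core step is to dualize. By a theorem of the alternative (Motzkin/Farkas), the system is infeasible iff there are nonnegative multipliers $\lambda_{c,c',j}$ on the constraints for which the payment terms cancel identically and the leftover cost terms sum to a strictly positive number. Collecting the coefficient of each fixed vector $t(c')$ shows that ``the payments cancel'' means exactly that, for every type, its $\lambda$-weighted assigned distribution is reproduced outcome-by-outcome as a nonnegative combination of the action distributions $F_j$; and ``leftover cost is positive'' means that, when each unit of action $j$ attributed to a mimicking type $c$ is charged its true cost $c\,\gamma_j$, this reproduction is strictly cheaper than the cost the rule assigns. Packaged together, this says precisely that some weighted combination of types and actions achieves the same distribution over outcomes at a strictly lower combined cost, and implementability is its negation -- the desired characterization. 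The ``implementable $\Rightarrow$ no such combination'' direction needs no duality (sum the relevant truthfulness inequalities against such a combination's weights, cancel payments, and contradict the cost inequality); the converse is where strong LP duality does the work. As sanity checks, restricting to a single type recovers the \cite{GrossmanHart83} characterization of implementable actions, while making the outcome deterministically reveal the action recovers Myerson-style monotonicity of effort in the cost parameter -- there the certificate degenerates into a circulation, i.e.\ a cyclic-monotonicity violation, which for a single parameter collapses to a pairwise one.

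For continuous type spaces I would not dualize an infinite program directly. Instead I would (i) apply the discrete argument to every finite set of types; (ii) note that the ``no bad combination'' condition only ever uses finitely supported weights, so it holds for the continuum iff it holds for all finite subsets; and (iii) assemble from the finite-type solutions a single measurable payment rule for the continuum, using one-dimensionality of the type to control monotonicity/convexity in the limit. The step I expect to be the main obstacle is exactly this passage to the continuum: showing that feasibility of the uncountable truthfulness system is governed by its finite subsystems and that consistent finite payment rules glue into a measurable one, with no duality gap appearing in the limit. A secondary, bookkeeping-level subtlety is checking that the per-type cancellation condition can be restated as one condition about a single weighted combination of types and actions, and handling multipliers that leave some type with zero total weight, so that the economic reading is literally correct.
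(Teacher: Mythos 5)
Your discrete argument is essentially the same route the paper takes: write the IC and obedience requirements as a linear feasibility system in the payment variables $t^c_j$, dualize, and read the characterization off the alternative certificate. Two points of divergence are worth flagging. First, you describe the dual cancellation as exact (the assigned distribution is ``reproduced outcome-by-outcome''), i.e.\ you expect equality constraints in the dual. The paper keeps the limited-liability constraints $t^c_j\ge 0$ in the primal, so its dual constraint is an \emph{inequality} (``weakly dominant distributions''), not an equality. Both versions characterize implementability --- dropping limited liability is without loss of generality for that question --- but the paper deliberately develops the machinery with $t\ge0$ kept in, because the constant payment offset and its interplay with limited liability is load-bearing for the later applications (e.g.\ Theorem~\ref{thm:uniform}, where one needs to pin down the utility of the highest type). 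Second, what you call a bookkeeping-level subtlety --- rescaling the multipliers so that $\sum_{c',k}\lambda_{(c,c',k)}=1$ for every $c$, so the weights can be read as a ``deviation plan'' and the conditions take the clean form of Theorem~\ref{thm:charac-disc} --- is not completely cosmetic: it is a separate normalization lemma (Lemma~\ref{lemma:normalize-dis-char}), and one does have to verify that rescaling preserves feasibility of the (inequality) dual constraints and the sign of the objective. You should at least carry this out rather than note it in passing.

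For continuous types your plan is a genuinely different and, I think, substantially harder route. You propose treating the continuum via its finite subsystems, then assembling a measurable payment rule by a compactness/limiting argument, and you correctly anticipate this as the main obstacle. The paper sidesteps all of it: it first proves directly (via the standard two-point Myerson rearrangement, Proposition~\ref{prop:impl-is-mon-cont}) that any implementable rule is monotone, and then uses the \emph{finiteness of the action set} to conclude the rule is piecewise constant with at most $n$ breakpoints. Lemma~\ref{lemma:reduction} then reduces continuous implementability exactly to a \emph{finite} discrete-type LP on the breakpoints, augmented with one binding condition per breakpoint that ties each interval's right endpoint to the allocation on its left (this is what the $R/L$ duplication encodes). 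The dual argument is then rerun on that finite LP. No measurability, no compactness, and no duality-gap concerns arise, because the problem was never infinite-dimensional to begin with. If you pursue your route, the place it will bite is precisely the gluing step: consistent payments on nested finite subsets need not converge without the piecewise-constancy structure that the paper exploits.
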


Our characterization provides a computationally tractable way of checking (in time polynomial in the number of actions, outcomes and agent types) whether a given allocation rule is implementable (see Corollary~\ref{cor:poly-LP}).
For a continuous type space, the characterization is similar in spirit but slightly more involved and appears in Theorem~\ref{thm:char-cont}.

One implication of our characterization is that monotonicity of the allocation rule is no longer sufficient for implementability (as we demonstrate in Proposition~\ref{prop:non-monotone}). Intuitively this happens because the overall cost can now be improved not only by assigning the same actions to agents with lower costs, but also by taking an altogether different combination of actions.

We show two additional applications of our characterization results. The first stands in stark contrast to the APX-hardness of optimal contract design for multi-parameter types demonstrated by \cite{guruganesh2020contracts} (which holds even with only constantly-many actions):

\begin{theorem*}[Tractability for constantly-many actions --- see Theorem~\ref{thm:const-n}]
Finding the optimal contract for single-parameter types is solvable in polynomial time for a constant number of actions.
\end{theorem*}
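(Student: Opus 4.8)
The plan is to reduce the problem to solving polynomially many linear programs: we enumerate a polynomial-size family of candidate allocation rules, and for each we solve one LP that simultaneously tests implementability and finds the revenue-maximizing payment rule; the algorithm then returns the best candidate.

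\textbf{Structure of implementable rules.} Let $u(c)$ denote the interim utility of an agent of type $c$ under the (truthful) menu. A standard envelope argument shows $u$ is a pointwise maximum of affine functions $c \mapsto F_i \cdot t^{c'} - c\gamma_i$, hence convex and non-increasing, with subgradient $-\gamma_{a(c)}$ at $c$. Convexity then forces $\gamma_{a(c)}$ to be non-increasing in $c$: in any implementable rule, agents with higher cost per unit-of-effort are assigned lower-effort actions. (This ``effort-monotonicity'' is only necessary, not sufficient --- cf.\ Proposition~\ref{prop:non-monotone} --- which is why the LP step below still has to certify implementability.) I would further argue that, without loss of optimality, on each maximal block of types sharing a common assigned effort level one may use a single action, namely the one of largest expected reward $F_i \cdot r$ among the actions that can be incentivized on that block: on such a block $u$ is affine with the prescribed slope, so the expected payment $F_{a(c)} \cdot t^c$ is pinned down, and the only remaining degree of freedom affecting revenue is the expected reward of the chosen action. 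Hence it suffices to consider allocation rules that partition the ordered type space into at most $n$ intervals, each labeled by an action, with the labels' efforts non-increasing along the order.

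\textbf{Counting and per-candidate optimization.} In the discrete case with $k$ types, such a rule is determined by at most $n-1$ interval boundaries among the $k$ types together with an effort-non-increasing labeling of the at-most-$n$ intervals by actions, giving $O(k^{n-1}n^{n})$ candidates --- polynomial for constant $n$. Fix a candidate rule $a(\cdot)$ and introduce payment vectors $t^c \ge 0$, one per type. Once $a$ is fixed, the following are all linear in the payments: (i) best-response consistency, $F_{a(c)}\cdot t^c - c\gamma_{a(c)} \ge F_i \cdot t^c - c\gamma_i$ for every action $i$; (ii) cross-type incentive compatibility, $F_{a(c)}\cdot t^c - c\gamma_{a(c)} \ge F_i \cdot t^{c'} - c\gamma_i$ for every type $c'$ and action $i$; (iii) participation, $F_{a(c)}\cdot t^c - c\gamma_{a(c)} \ge 0$; and the objective, expected revenue $\sum_c \Pr[c]\,(F_{a(c)}\cdot r - F_{a(c)}\cdot t^c)$, is linear as well. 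Solving this LP either certifies that $a$ is not implementable (infeasibility) or returns the best revenue attainable with allocation $a$; the maximum over candidates is the optimal contract, since by the structural step the optimal contract's rule is among the candidates. The running time is polynomially many polynomially-sized LPs. For continuous types the argument is analogous: the rule is pinned down by an action labeling ($\le n^n$ choices) and at most $n$ real thresholds, the menu is then parametrized by these thresholds plus a single constant via the envelope identity $u(c) = u(\bar c) + \int_c^{\bar c} \gamma_{a(s)}\,ds$, and for a fixed labeling one is left with a constant-dimensional optimization of a piecewise-smooth objective built from the given type distribution, which can be solved exactly.

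\textbf{Main obstacle.} The delicate part is the structural reduction --- in particular, justifying that a single action may be used per effort-block while still respecting payment nonnegativity and all cross-type IC constraints, which requires care with how the agent breaks ties among equal-utility actions (handled by breaking ties in the principal's favor, or by an infinitesimal perturbation argument), and carrying the same reduction through for continuous type spaces. Once the design space is collapsed to the polynomial family above, the remainder is routine: fixing the allocation linearizes best responses, incentive compatibility, participation, and the objective, so each candidate is dispatched by a single linear program.
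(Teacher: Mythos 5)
Your core algorithm is exactly the paper's: use monotonicity as a necessary condition (Proposition~\ref{prop:impl-is-mon-dis}) to cut the search space from $n^{|C|}$ to polynomially many candidate rules when $n$ is constant, then for each fixed candidate solve one LP in the payments --- fixing the allocation makes all IC constraints and the revenue objective linear --- to simultaneously test implementability and extract optimal payments (this is Corollary~\ref{cor:poly-LP}, i.e., \ref{LP1} plus payment minimization), and return the best feasible candidate. Where you differ is the intermediate ``without loss of optimality, collapse each effort-block to the single incentivizable action of largest expected reward'' step. That step is both unnecessary and not rigorously justified as written: for discrete types there is no payment identity, so the interim utility/envelope argument does not pin down the expected payment on a block, and under limited liability two actions with equal effort can require different minimum expected payments (and interact differently with the cross-type constraints), so picking the max-reward action per block is not obviously harmless. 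The paper sidesteps this entirely: it simply enumerates \emph{all} monotone allocation rules, bounding their number by the number of multisets of $|C|$ actions from $[n]$, namely $\binom{|C|+n}{|C|}=O(|C|^{n})$, which is polynomial for constant $n$; no per-block reduction or tie-breaking argument is needed. If you drop your collapsing step and enumerate all monotone rules directly (your counting already gives a polynomial bound, modulo a harmless off-by-one in the number of intervals), your argument coincides with the paper's proof of Theorem~\ref{thm:const-n}. Note also that the paper's formal theorem is stated for discrete types only, so your sketched continuous-type extension (exact optimization over real thresholds for a fixed labeling) is extra and would need more care than ``can be solved exactly.''
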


Our second application (Theorem~\ref{thm:uniform}) uses the simple but powerful observation that if an allocation rule is implementable, the rest of Myerson's theory holds for it despite the general setting of contracts with private types. In particular, Myerson's payment identity holds and the expected revenue is equal to the expected virtual welfare (up to a constant). 
It is thus tempting to consider mechanisms that maximize virtual welfare --- would they turn out to be implementable in our general setting, just like virtual welfare maximizers turn out to be monotone (and thus implementable) in Myerson's original setting? We establish that this is indeed the case when the agent's private cost is distributed uniformly:

\begin{theorem*}[Optimal contract for uniform costs --- see Theorem~\ref{thm:uniform}]
For uniform costs, the virtual welfare maximizing allocation rule is implementable.
%For uniform costs, the virtual welfare maximizing allocation rule is implementable.
\end{theorem*}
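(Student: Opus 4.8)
The plan is to prove the theorem by verifying the implementability characterization of Theorem~\ref{thm:char-cont} for the virtual-welfare-maximizing rule (alternatively, one may work with the discrete characterization of Theorem~\ref{thm:charac-disc} on a uniform grid and pass to the limit). Write $R_i=\langle F_i,r\rangle$ for the expected reward of action $i$, and let $\varphi(c)=c+F(c)/f(c)$ be the virtual cost; for the uniform distribution on $[\underline c,\overline c]$ this is $\varphi(c)=2c-\underline c$, which is strictly increasing and (with $\underline c\ge 0$) nonnegative. By Myerson's payment identity --- which, as noted just before the theorem statement, applies once implementability is established --- the principal's expected revenue from an implementable rule $a$ equals $\int\bigl(R_{a(c)}-\varphi(c)\,\gamma_{a(c)}\bigr)f(c)\,dc$, so the candidate optimal rule is $a(c)\in\arg\max_{i\in[n]}\bigl(R_i-\varphi(c)\,\gamma_i\bigr)$, with ties broken consistently (say, towards smaller $\gamma_i$). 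The entire content of the theorem is that this particular $a$ is implementable.

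The first step is to record the structure of $a$. Since $\varphi$ is strictly increasing, $a(c)=b(\varphi(c))$ where $b(t)\in\arg\max_i(R_i-t\gamma_i)$. The value function $t\mapsto\max_i(R_i-t\gamma_i)$ is convex and piecewise linear, so $b$ can be chosen with $\gamma_{b(t)}$ non-increasing in $t$; hence $\gamma_{a(\cdot)}$ is non-increasing --- the monotonicity one expects from Myerson's theory. Moreover, for $t>0$ every maximizer of $R_i-t\gamma_i$ is an \emph{implementable action} in the sense of \cite{GrossmanHart83}: if a convex combination of other actions reproduced its outcome distribution at strictly smaller effort, that combination would attain strictly larger $R_i-t\gamma_i$, a contradiction. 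So along $a$ each assigned action is individually implementable and the assigned effort level is monotone --- but, as Proposition~\ref{prop:non-monotone} shows, this by itself does not give implementability, which is exactly why the argument must invoke the distribution.

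The heart of the proof is to rule out, for this $a$, the ``improving weighted combination'' whose non-existence the characterization equates with implementability: a weighting of (type, action) pairs that reproduces the aggregate outcome distribution induced by $a$ but has strictly smaller aggregate (true) cost. Because the aggregate outcome distribution --- hence the aggregate expected reward --- is preserved, and because $a(c)$ pointwise maximizes $R_i-\varphi(c)\gamma_i$ at virtual cost $\varphi(c)$, such a combination cannot decrease the $\varphi$-weighted aggregate effort $\int\varphi(c)\,\gamma_{a(c)}\,f(c)\,dc$; it can only weakly increase it. It then remains to convert ``no improvement in virtual-cost-weighted effort'' into ``no improvement in true-cost-weighted effort,'' contradicting the assumed improvement --- and this is precisely where uniformity enters: since $\varphi(c)-c=F(c)/f(c)=c-\underline c$ is \emph{affine} in $c$, the gap between the two cost objectives is governed by $\int(c-\underline c)\bigl(\gamma_{a(c)}-\tilde\gamma(c)\bigr)\,dc$ (with $\tilde\gamma$ the alternative's expected effort), and an Abel-summation / integration-by-parts argument against the monotonicity of $\gamma_{a(\cdot)}$ and the normalization constraints the characterization imposes on the weights should force the two inequalities to be inconsistent. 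I expect this last conversion --- fixing precisely which normalization the characterization places on the combination, and checking that it is the affine form of $F/f$ rather than mere regularity of $f$ that makes the two objectives move together --- to be the main obstacle; the preceding steps are the standard envelope/convexity facts plus a direct appeal to the characterization. A fallback route, should the dual computation prove unwieldy, is to build an incentive-compatible payment rule directly: take the Myerson expected payments $T(c)=c\,\gamma_{a(c)}+\int_c^{\overline c}\gamma_{a(s)}\,ds$, realize them by a contract for type $c$ under which $a(c)$ is a best response at cost $c$, and verify the hidden-type incentive constraints using monotonicity of $\gamma_{a(\cdot)}$ and, again, the affine form of $\varphi$ in the uniform case.
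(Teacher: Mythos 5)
Your high-level strategy matches the paper's, and you correctly identify the decisive role of uniformity: because $\varphi(c)=2c$ (with $\underline c=0$) is a scalar multiple of $c$, a deviation plan that strictly lowers the true-cost objective in the characterization of Theorem~\ref{thm:char-cont} also strictly lowers the corresponding \emph{virtual}-cost objective, which should contradict the pointwise virtual-welfare maximality of $x^*$. But the step you yourself flag as ``the main obstacle'' is a genuine gap, not merely an unfinished computation. The characterization's cost condition is a \emph{discrete} sum over the finitely many breakpoints $z_i$ with weights normalized so that each breakpoint carries total mass~$2$ (one unit from each side $L,R$); it is \emph{not} the integral $\int\varphi(c)\,\gamma_{a(c)}f(c)\,dc$. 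You cannot simply invoke pointwise maximality of $R_i-\varphi(c)\gamma_i$ to conclude anything about the breakpoint sums without first building a bridge between the two. The paper closes this gap constructively: from a hypothetical violating $\lambda$ it builds a \emph{randomized} allocation rule that agrees with $x^*$ except on $\epsilon$-wide strips next to each breakpoint, where it mixes actions according to $\lambda_{(R,\cdot,\cdot,\cdot)}$ and $\lambda_{(L,\cdot,\cdot,\cdot)}$ (Eq.~\ref{eq:new-alloc}); then Lemmas~\ref{lemma:higher-reward}--\ref{lemma:lower-cost} show that for small $\epsilon$ this rule has weakly higher expected reward and strictly lower expected virtual cost, contradicting $x^*$'s optimality. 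The factor-$2$ in $\varphi(c)=2c$ exactly matches the factor-$2$ coming from duplicating each breakpoint into $(L,i)$ and $(R,i)$, and that numerical match is what the paper is really leaning on; your affine-vs-proportional discussion glosses over this.

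There is a second, quieter gap: you never address Assumption~\ref{assumption:high-cost-type}, which the formal Theorem~\ref{thm:uniform} invokes. The paper needs it so that $x^*(\bar c)=0$ and so that one can look for a payment rule with $t_0^c\equiv 0$ (i.e., type~$\bar c$ gets utility zero); this is what lets Proposition~\ref{prop:rev-wel} certify not just implementability but \emph{optimality}. Your fallback route --- directly constructing payments from the Myerson envelope $T(c)$ --- runs into a related problem: the payment identity only fixes the \emph{expected} transfer $T^c_{x^*(c)}$, not the full vector $(t_0^c,\dots,t_m^c)$, and choosing a vector that simultaneously respects limited liability, the within-type hidden-action IC constraints, and the across-type hidden-type IC constraints is exactly what makes this problem hard (this is why the paper routes through the LP characterization in the first place). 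So: right key idea, right skeleton, but the load-bearing construction is missing and the optimality bookkeeping around Assumption~\ref{assumption:high-cost-type} is absent.
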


%\textcolor{red}{PD: Add a paragraph to say how we prove this, and that this proof strategy does not translate to our regular distributions?}

%\textcolor{red}{PD: Also mention the computational question of whether or not there is a poly-time algorithm for the discrete case with a general number of actions?}

%\subsection{Discussion}
%\inbal{unify with the conclusion section?}
We leave as our main open question whether implementability holds for the candidate rule (the optimal monotone rule \`a la Myerson) beyond uniform distributions, and if not what should the mechanism be.

\subsection{Related Work}\label{sec:related-work}

%\textcolor{red}{PD: What do you think about restructuring paragraphs 2-3 as follows: First mention Myerson [22]. Then Gottlieb and Moreira [13], then Chade and Swinkels [7] and Guruganesh et al [15] and the other EC paper? For Chade-Swinkels we could emphasize that their sufficient conditions are on the candidate contract (if that's correct)}

Contract theory is an important and well-studied sub-field of microeconomics with many practical implications; for leading textbooks see \cite{bolton2005contract,salanie,Laffont}. 
Many works on contract design study it entirely separately from mechanism design (screening) --- the former deals with hidden actions of the agent (moral hazard), and the latter with private types of the agent (adverse selection). 
An analysis of the basic single principal, single agent setting (with no types) is found in the seminal work of \citet{GrossmanHart83}, and \citet{Holmstrom80} summarizes some of the classic foundations (see also \cite[]{Nobel16} for the scientific background on the Nobel prize shared by Hart and Holmstr\"om). One of the main take-aways is that the optimal contract can be found in this setting by solving (polynomially-many) linear programs.

A much smaller collection of works studies the combination of moral hazard and adverse selection. 
One classic such work is by \citet{Myerson82}, who studies ``generalized'' principal-agent problems where agents have both private information and ``private decision domains'' (hidden action). The key insight is that the principal may, without loss of generality, restrict herself to incentive compatible direct mechanisms. 
This extends the revelation principle to situations where there are moral hazard factors. We apply this insight in our results. 
\citet{GottliebM13} study a simple setting with an effort/no effort binary choice for the agent and two possible outcomes, so their types are two-dimensional vectors. \citet{GottliebM15} study a more general setting where types and efforts are multi-dimensional (possibly
infinite-dimensional). Their work identifies assumptions under which ``optimal contracts are simple'' in the sense that the optimal (menu of) contracts consists of a single contract for all types. Our work focuses on solving for optimal ``menus'', i.e., contracts with separate payments per type.

\citet{ChadeS19} study a setup similar to ours in which there is a single-parameter type for the agent and the goal is to design the optimal (menu of) contracts. They make different assumptions than us, such as continuous action space or the simplifying MLRP assumption (a strengthening of first-order stochastic dominance among any pair of distributions associated with the actions). They obtain two different sufficient conditions under which they are able to optimally solve the design problem. Their solution involves minimizing the cost of implementing any given action at any given surplus for any given type in a pure moral hazard
problem.
Our focus is on a necessary and sufficient condition for implementability, and on computationally efficient solutions. 

More recent work has started to explore contract design through the computational lens (but without private types)  \cite[][]{babaioff2006combinatorial,HoSV16,DuttingRT19,KR19,DuttingRT20,DEFK21}.

\citet{guruganesh2020contracts} are the first to consider contract design with typed agents from a computational point of view. Importantly, their agent types are multi-dimensional, namely, an agent's type determines the outcome distributions corresponding to every available effort level.
They establish computational hardness of optimal contract design under moral hazard and adverse selection. Their hardness results motivate their exploration of the approximation power of simple classes of contracts (such as linear, fixed-provision contracts).

\citet{CastiglioniM021} study a similar multi-dimensional problem in which a principal seeks to design a single (non-type dependent) contract for a Bayesian agent, whose costs and distributions over outcomes are drawn from known distributions. They establish hardness results for computing the optimal contract, and argue that simple linear contracts provide optimal approximation guarantees subject to poly-time computability.

Further afield, some works consider hidden types of principals rather than agents \cite[e.g.,][]{BernheimW86}.

\section{Model} % and Definitions}
\label{sec:model}

%\add{instead of $b$ do $c'$ to be consistent with $\lambda_{c,c',k}$}
In this section we introduce our model in this paper: a contract design setting with hidden action and a single-parameter private type.
Section~\ref{sub:instance} describes a problem instance, Section~\ref{sub:contract-def} defines contractual solutions, Section~\ref{sub:running-example} presents a running example and Section~\ref{sec:comparison} positions our model relative to classic contract and mechanism design problems. 

\begin{notation}
Let $[n]=\{0,\dots,n\}$ for every $n\in\mathbb{N}$ (i.e., zero is included). 
\end{notation}

\subsection{Single-Parameter Principal-Agent Instance} 
\label{sub:instance}

An instance (a.k.a.~setting) of our model consists of two players, a \emph{principal} and an \emph{agent}. An \emph{action} set~$[n]$ is available to the agent. The agent's chosen action leads to an \emph{outcome} $j\in [m]$, with \emph{reward} $r_j\geq 0$ for the principal. We assume without loss of generality that outcomes are ordered in a non-decreasing order by their \emph{rewards} $r_0\leq...\leq r_m$, and that there is an outcome with no reward for the principal, i.e., $r_0=0$. In what follows we do not distinguish between the outcomes and their rewards. 

Action $i\in[n]$ requires $\gamma_i \ge 0$ \emph{units of effort} from the agent, and induces a distribution (probability mass function) $F_{i}$ over the $m$ outcomes/rewards. Let $F_{i,j}$ denote the probability of outcome~$j$ when the agent takes action $i$. We assume that only the first action requires no effort from the agent, i.e., $\gamma_0=0$, and that actions are ordered by the amount of effort they require, i.e., $\gamma_0< ...\leq \gamma_n$. We also assume that the first outcome occurs if and only if the agent takes the first action, i.e., $F_{0,0}=1$ and $F_{i,0}=0$ for $ 1\leq i\leq n$. Thus, the principal can monitor whether or not the agent takes the zero-cost action. This assumption is a simple way to model the agent's opportunity to opt out of the contract when individual rationality (i.e., guarantee for non-negative utility) is not satisfied \cite[e.g.,][]{HoSV16}.  %--- see e.g. \citet{DuttingRT19}.} 
%\inbal{reworded a bit and removed the ref to \citet{DuttingRT19}}
%It is a realistic assumption since the principal can view when the agent left the office.%
 %the agent always has a choice with zero cost and thus non-negative utility.}

Let 
\begin{equation}
R_i=\mathbb{E}_{j\sim F_i}[r_j]=\sum_{j\in [m]} F_{i,j} r_j
\label{eq:expected-reward}
\end{equation}
be the \emph{expected reward} of action $i\in[n]$. We assume (as in \cite[][]{DuttingRT19}) that there are no ``dominated'' actions: every two actions $i<i'$ have distinct expected rewards $R_i\ne R_{i'}$, and the action that requires more units of effort has the higher expected reward,  i.e., $R_{i}<R_{i'}$. Thus actions are also ordered by expected reward $0=R_0<\dots< R_n$. %\inbal{changed to strict inequalities} 

\paragraph{Agent's type.} The agent has a single-parameter \emph{type} $c$. The type is drawn from a distribution~$G$ supported over the set of all possible types $C\subseteq \mathbb{R}_{\ge 0}$.
An agent's type captures his \emph{cost per unit-of-effort}, or in other words, his \emph{marginal} cost for effort. 
When an agent of type $c$ takes action $i$, the principal gains a reward $r_j$ drawn according to distribution $F_i$, and the agent loses $\gamma_i c$ (the number of effort units that action $i$ requires multiplied by the agent's cost per unit). 

We distinguish between instances with a \emph{discrete} type space $C$ and those with a \emph{continuous} one. We denote the former by $\Dis(F,\gamma,r,G,C)$, {in which case $G$ is a discrete density function}. We denote the latter by $\Con(F,\gamma,r,G,C)$, in which case $g$ denotes the probability density function, and $G$ denotes the cumulative distribution function. When clear from the context, we sometimes omit some parameters from $\Dis$ and $\Con$. 
For the continuous case we assume (similarly to~\cite{Myerson81}) that $C=[0,\bar{c}]$ for $0<\bar{c}<\infty$.%
\footnote{Our results hold more generally for $C=[\underline{c},\bar{c}]\subseteq \mathbb{R}_{\ge 0}$.} 
%re exist some break-off type from which the only reasonable action to incentivize is the first one.} 

%\textcolor{red}{Paul: This assumption is not satisfied in the example in Section 2.3.} \tal{This assumption is for continuous type set, is it not clear? We can fix it by adding large type to the example.} \inbal{reworded}

\paragraph{\bf Summary of an instance and who knows what.}
To summarize, an instance is described by
distributions $F=(F_0,\dots,F_n)$ and corresponding ``effort levels'' $\gamma=(\gamma_0,\dots,\gamma_n)$ for the actions,
rewards $r=(r_0,\dots,r_m)$ for the outcomes, and a type distribution $G$ over support $C$ for the agent. 
We omit from the notation components of the setting that are clear from the context.

The instance itself is publicly known.
The action $i$ which the agent actually takes is \emph{hidden} from the principal, who only observes its stochastic reward $r_j$. The agent's realized type $c$ is \emph{privately-known} only to the agent himself.

\subsection{Contracts}
\label{sub:contract-def}

Our notion of a contract is a generalization of the standard one (see, e.g., \cite[][]{DuttingRT19}). The generalization is to accommodate for agent types, and is the direct revelation version of the ``menu of contracts'' notion studied by~\cite{guruganesh2020contracts}.
More formally, a \emph{contract} $(x,t)$ is composed of an \emph{allocation rule $x:C\to[n]$} and a \emph{payment rule $t:C\to \mathbb{R}^{m+1}_{\ge 0}$}. 
A type report $c' \in C$ is solicited from the agent (where $c'$ may differ from the true type $c$), and the allocation rule maps $c'$ to an action 
$x(c')$. 
An important difference from mechanism design is that the action can only be \emph{recommended} to the agent by the contract, rather than \emph{enforced} like an allocation by an auction. 
%We assume that $x(\infty)=0$, i.e., when the reported per-unit cost is $\infty$, the no-effort action is recommended. %(the expected payment in this case is assumed to be zero).
Whether or not the agent adopts the recommendation depends on the payment rule. A payment rule $t$ maps $c'$ to $m$~non-negative payments or \emph{transfers} $(t^{c'}_0,...,t^{c'}_m)$, one for each outcome. 
The transfers are associated with outcomes rather than actions since the actions are hidden from the principal.
For action $i\in[n]$ let
\begin{equation}
T^{c'}_i=\mathbb{E}_{j\in F_i}[t^{c'}_j]=\sum_{j\in [m]} F_{i,j} t^{c'}_j\label{eq:expected-transfer}
\end{equation}
denote the expected payment from principal to agent with reported type $c'$ for taking action $i$.
All transfers are required to be non-negative; this guarantees the standard \emph{limited liability} property for the agent, who is never required to pay out-of-pocket (see, e.g., \cite[][]{innes1990limited, gollier1997risk,carroll2015robustness}).% 
%\footnote{Limited liability can be replaced by a risk-averseness assumption on the agent. With neither of these alternative assumptions, the principal-agent model loses its bite --- there is no economic justification for having a contract to begin with, as the principal can simply ``sell the project'' to the agent for the maximum welfare achievable by the most efficient action. \inbal{with types we may want to rephrase a bit} \tal{Yes. I don't think that selling the firm to the agent will work in this case - we need another justification for LL.}}

\begin{remark}
%It is possible for the allocation rule $x$ to be randomized, mapping type $c'$ to a distribution over actions (or equivalently, recommending a random action to the agent given his type). In the settings we consider, randomized contracts have no extra power relative to deterministic ones. Thus we restrict attention to deterministic contracts {unless stated otherwise}.
It is possible for the allocation rule $x$ to be randomized in the sense of mapping type~$c'$ to a distribution over actions. %; in this case the payment rule $t$ can be interpreted as mapping to expected payments. 
In Appendix~\ref{appx:rand} we show that such randomized allocation rules have no extra power in the settings we consider. Thus we restrict attention to deterministic rules unless stated otherwise. More complex contract formats in which the randomization is also over payment rules are beyond the scope of our work, and they constitute an interesting avenue for future research.  Such contracts are formally presented in Appendix~\ref{appx:rand}, along with a short discussion of their power.

%We discuss this and more general notions of randomization in the full version.
\end{remark}

\paragraph{\bf The game.}
A contract $(x,t)$ induces the following two-stage game:
\begin{itemize}[leftmargin=0.63in]
    \item[Stage 1:] The agent submits a type report $c'$, fixing the contractual payment vector $t^{c'}=(t^{c'}_0,...,t^{c'}_m)$.
    \item[Stage 2:] The agent chooses an action $i$ (which is not necessarily the action $x(c')$ prescribed to it by the contract), and incurs a cost of $\gamma_ic$ (where~$c$ is his true type). An outcome $j$ is realized according to distribution $F_i$. The principal is rewarded~$r_j$ and pays $t^{c'}_j$ to the agent. 
\end{itemize}

\begin{remark}
Once the game reaches Stage 2, we are back in a standard principal-agent setting with no types, in which the agent faces a standard contract (simply a vector of $m$ payments), and chooses accordingly a costly action that rewards the principal. 
\end{remark}

\paragraph{\bf Utilities.} 

Let $c$ be the true type, $c'$ the reported type, $i$ the action chosen by the agent and $j$ the realized outcome. The players' utilities are as follows: $t^{c'}_j-\gamma_i c$ for the agent, and $r_j-t^{c'}_j$ for the principal. In expectation over the random outcome $j\sim F_i$ these are $T^{c'}_i-\gamma_i c$ for the agent and $R_i-T^{c'}_i$ for the principal. The sum of the players' expected utilities is the expected \emph{welfare} from action $i$, namely $R_i-\gamma_i c$.

Let us now consider the agent's rational behavior given his expected utility. When facing payment vector $t^{c'}$ in Stage~2, an agent whose true type is $c$ will choose the action $i^*(c',c)$ that maximizes his expected utility:
$$
i^*(c',c)\in \arg\max_{i\in [n]}\{T^{c'}_i-\gamma_i c\}.
$$
As is standard in the contract design literature, if there are several actions with the same maximum expected utility for the agent, we assume consistent tie-breaking in favor of the principal (see, e.g., \cite[][]{guruganesh2020contracts}).
Thus $i^*(c',c)$ is well-defined.
%In our context this means that if the action recommended by the contract is utility-maximizing, the agent follows the recommendation. 
%
When reporting his type in Stage~1, the agent will report $c'$ that maximizes $T^{c'}_{i^*(c',c)}-\gamma_{i^*(c',c)} c$, that is, his expected utility given his anticipated choice of action in Stage~2. 
%Mention that when there is a tie-breaking we allocate the low cost action

\paragraph{\bf Incentive compatibility.}

We say that a contract $(x,t)$ is \emph{incentive compatible (IC)} if for every type $c\in C$, it is in the best interest of an agent of type $c$ to both truthfully report $c$ in Stage~1 and to take the prescribed action $x(c)$ in Stage~2. Formally:

\begin{definition}
A contract $(x,t)$ is \emph{IC} if \;$\forall c\in C$, 
\begin{enumerate}
    \item $x(c)=i^*(c,c)$; and
    \item $c\in \arg\max_{c'} \{T^{c'}_{i^*(c',c)}-\gamma_{i^*(c',c)} c\}$. 
\end{enumerate}
\end{definition}
\noindent
Condition (1) in the above definition ensures that the prescribed action for type $c$ maximizes the agent's expected utility when he truthfully reports his type~$c$.%
\footnote{The contracts we consider are designed in favor of the principal and so the action recommended by $x$ can be compatible with the tie-breaking rule determining $i^*$.}
Condition (2) ensures that reporting truthfully is a (weakly) dominant strategy.

Note that focusing on IC contracts is without loss of generality by the revelation principal (see \cite[][]{Myerson82}). %\inbal{verify} Paul: LGTM
Also, incentive compatibility usually goes hand in hand with individual rationality (IR), which in our context requires that the utility of an agent who reports truthfully %and takes the action recommended by the contract 
is non-negative. %\tal{should we say anything about opting out?} 
Here the assumption that $\gamma_0=0$ comes in handy, as it means that the agent can always choose an action that requires zero effort. Together with limited liability this ensures non-negative utility. %for a truthful agent. 

We mention that classic principal-agent theory also assumes individual rationality for the principal. In our model, the principal can always guarantee herself non-negative utility by paying zero for all actions, incentivizing no-effort for all types.

\paragraph{\bf Objective.} 
The principal's goal is to design an \emph{optimal} contract $(x,t)$, i.e., a contract that satisfies IC and maximizes her expected utility, where the expectation is over the agent's random type $c$ drawn from $G$, as well as over the random outcome of the agent's prescribed action $x(c)$:
$$
\mathbb{E}_{c\sim G}[R_{x(c)}-T^{c}_{x(c)}] = \mathbb{E}_{c \sim G} \left[\mathbb{E}_{j \sim F_{x(c)}}[r_j - t^c_j] \right].
$$

\subsection{An Example}
\label{sub:running-example}

The following principal-agent instance will serve as the running example of this paper.

\begin{example}\label{ex:runing}
There are two agent types, four actions with required effort levels $\gamma_0=0$, $\gamma_1=1$, $\gamma_2=3$, $\gamma_3=10$, and three outcomes with rewards $r_1=0$, $r_2=10$, $r_3=30$ (that is, $n=m=3$). The distributions over outcomes are $F_0=(1,0,0)$, $F_1=(0,1,0)$, $F_2=(0,0.5,0.5)$, and $F_3=(0,0,1)$. The two types are $c = 1$ and $c = 4$, and they occur with equal probability.
\end{example}

%Here is an example with three outcomes, four actions, and two types. The rewards for the three outcomes are $r_1 = 0$, $r_2 = 10$, and $r_3 = 30$. The efforts required by the four actions are $\gamma_0 = 0$, $\gamma_1 = 1$, $\gamma_2 = 3$, and $\gamma_3 = 10$. .

A possible contract for Example~\ref{ex:runing} is $x(1) = 3$ with payments $t^1 = (0,0,14)$, and $x(4) = 1$ with payments $t^4 = (0,4,0)$, as depicted in Figure~\ref{fig:ex}. Intuitively, the ``stronger'' type who incurs less cost per unit-of-effort ($1$ rather than $4$) is recommended a more strenuous action (the fourth rather than second action).

\begin{figure}[h!]
\begin{minipage}{0.49\textwidth}
\centering
\begin{tabular}{lll|l}
\toprule
$t_1(1) = 0$ & $t_2(1) = 0$ & $t_3(1) = 14$\\
$r_1 = 0$ & $r_2 = 10$ & $r_3 = 30$ \\
\midrule
1 & 0 & 0 & $\gamma_0 = 0$ \\
0 & 1 & 0 & $\gamma_1 = 1$  \\
0 & 0.5 & 0.5 & $\gamma_2 = 3$ \\
0 & 0 & 1 & $\gamma_3 = 10$\\
\bottomrule
\end{tabular}
\end{minipage}
\begin{minipage}{0.49\textwidth}
\centering
\begin{tabular}{lll|l}
\toprule
$t_1(4) = 0$ & $t_2(4) = 4$ & $t_3(4) = 0$\\
$r_1 = 0$ & $r_2 = 10$ & $r_3 = 30$ \\
\midrule
1 & 0 & 0 & $\gamma_0 = 0$ \\
0 & 1 & 0 & $\gamma_1 = 1$  \\
0 & 0.5 & 0.5 & $\gamma_2 = 3$ \\
0 & 0 & 1 & $\gamma_3 = 10$\\
\bottomrule
\end{tabular}
\end{minipage}
\caption{A contract for Example~\ref{sub:running-example}. The left tableau shows the payments for reported type $c' = 1$ (the recommended action is the fourth one), and the right tableau those for reported type $c' = 4$ (the recommended action is the second one).}\label{fig:ex}
\end{figure}

For this contract to be IC, we would like the agent to choose the tableau (read payments) that corresponds to his true type, as well as the recommended action for that type. For an agent with type $c = 1$ this means choosing the left tableau and fourth action. This gives him expected utility of $1 \cdot 14 - 10 \cdot 1 = 4$. An agent with type $c = 4$ should choose the right tableau and second action. This choice yields expected utility of $1 \cdot 4 - 1 \cdot 4 = 0$.
For neither type should the agent wish to pretend that he is of a different type and/or choose a different action. So, for example, the agent with type $c = 1$ should not be incentivized to pretend to be of type $c' = 4$ and take action $1$. Indeed, in this case his expected utility would be $1 \cdot 4 - 1 \cdot 1 = 3$, which is smaller than the expected utility he gets for truthfulness.
This is in fact true for all types and possible deviations, and so this contract is indeed IC. The principal's expected utility is $0.5 \cdot (30-14) + 0.5 \cdot (10-4) = 11$. 

%\inbal{Example - two tables (like in the presentations) for two types. I want to show that the agent chooses the right table and the right action.}

\subsection{Relation to Classic Settings}
\label{sec:comparison}

Consider a simple instance of our model in which there is only a single type in $C$, so that the type of the agent is publicly known. For such instances our model reduces to the classic principal-agent model of \citet{GrossmanHart83}.% 
\footnote{As in \cite[][]{carroll2015robustness}, we couple the classic model with the particular form of agent risk-aversion captured by limited liability.}
The goal in that model is to design a standard contract, composed of a single payment vector and an (implicit) action recommendation, such that the agent takes the recommended action and the expected utility of the principal is maximized.

At the other extreme, consider an instance of our model in which every action $i$ deterministically leads to a distinct reward $r_i$ (i.e., the distributions $F$ are point mass; without loss of generality, when viewed as a tableau as in Figure~\ref{fig:ex} they constitute the identity matrix). In this case, the agent's action is not hidden. For such instances our model reduces to a reverse (procurement) variant of the single-parameter mechanism design setting of \citet{Myerson81}. In this variant, the agent acts as a seller of a service whose cost for providing the service is private. The service can be supplied at different (known) levels $\gamma_0,\dots,\gamma_n$, and the values $\{r_i\}_i$ of the principal/buyer for these levels are publicly known. The goal is to design a truthful procurement auction that maximizes the buyer's expected revenue. 
The special case in which there are two actions ($n=1$) corresponds to the standard Bayesian procurement setting in which the auction's outcome is whether or not to buy from the agent/seller depending on his declared cost (drawn from a known distribution). In this special case, a decision to buy corresponds to a recommendation of the costly action ($x(c)=1$), and not buying corresponds to the zero-cost action ($x(c)=0$).

\section{Characterization of Contract Implementability}
\label{sec:charac}

A main driver of mechanism design with single-dimensional types has been Myerson's theory (see \cite[][]{Myerson81}). It  characterizes the type of auction allocation rules that can be turned into incentive compatible auctions --- i.e., that are \emph{implementable} --- as those which are monotone. Furthermore, for each implementable allocation rule it provides an essentially unique payment rule that turns this rule into an IC mechanism.\footnote{Technically, \citet{Myerson81} only considered the continuous type case. A similar characterization, however, applies also in the discrete type case \cite[e.g.,][]{BergemannP07,Elkind07}. Allocation rules still have to be monotone, and while the payment identity no longer applies, it can be shown that payments have to be in a certain range, and hence there is still a revenue-optimal choice for any given allocation rule.} 
In this section we develop such a theory for the single-parameter principal-agent model with hidden action and private cost. The main question is to find necessary and sufficient conditions for an allocation rule to be implementable as an IC contract. 

%Since we focus on IC contracts, a key property of an allocation rule for us is its implementability:

\begin{definition}
\label{def:implementable}
An allocation rule $x:C\to[n]$ is \emph{implementable} if there exists a payment rule $t:C\to \mathbb{R}^{m+1}_{\ge 0}$ such that contract $(x,t)$ is IC. 
\end{definition}

In Definition~\ref{def:implementable} the payment vector $t$ is required to be non-negative, thus imposing limited liability.
We note that this requirement is without loss of generality since every allocation rule that is implementable with arbitrary, possibly negative payments is also implementable with non-negative payments (by adding some offset to all payments). So, in principle, equivalent characterizations can be obtained without requiring payments to be non-negative. For our results in Section~\ref{sec:applications}, however, the constant offset and its interplay with limited liability will play a role, and for this it will be useful to develop the general machinery while requiring non-negative payments. %\tal{I think constant might sound as if we are referring to the same number for all contracts, maybe rephrasing as ``adding some offset to all payments", and ``the offset and its...".}

In Section~\ref{sec:properties} we define relevant properties of allocation rules, in particular monotonicity in the context of contracts. Sections~\ref{sec:discrete} and \ref{sec:continuous} give our characterization for discrete and continuous types, respectively, and show monotonicity is necessary but not sufficient for implementability. Throughout this section and unless stated otherwise, by an ``allocation rule'' we refer to such a rule in the context of contracts.

\subsection{Properties of Allocation Rules} \label{sec:properties}

Two properties of allocation rules that will play a role in our characterization results are monotonicity and the special case of piecewise constant monotonicity.

%\paragraph{\bf monotonicity} 
A monotone allocation rule recommends actions that are (weakly) more costly in terms of effort --- and hence also more rewarding in expectation for the principal --- to agents whose cost per unit-of-effort is lower. Intuitively, such agents are better-suited to take on effort-intensive tasks.

\begin{definition}\label{def:monotone}
    An allocation rule $x:C\to[n]$ is \emph{monotone} if for every $c,c'\in C$, 
    $$
    c<c'\implies x(c)\ge x(c').
    $$
\end{definition}

%\paragraph{\bf Monotone piecewise constant allocation rule.} 
%The following defines a special family of monotone allocation rules. 

A special class of monotone allocation rules are piecewise constant allocation rules (see Figure~\ref{fig:Fig2}). Informally, these are allocation rules such that: (i) the allocation function $x(\cdot)$ is locally constant in distinct intervals of $[0,\bar{c})$; and (ii) the allocation is decreasing with intervals. 
%to types with costs in consecutive intervals are decreasing.

%action $n-i$ is allocated (recommended) to types with costs in the $i$th interval (counting from $0$).\footnote{This is when} Formally:

\begin{definition}
\label{def:piecewise-constant}
    An allocation rule $x:C\to[n]$ is \emph{monotone piecewise constant} if there exist $\ell+2$ \emph{breakpoints} $0=z_0< ...< z_{\ell}< z_{\ell+1}=\bar{c}$ where $\ell\leq n$, such that $x(z_{i})\geq x(z_{i+1})$ $\forall i\in [\ell]$ and for every $c\in C$,
    \begin{eqnarray*}
    c \in (z_{i},z_{i+1}) \implies x(c)=x(z_i);
    \end{eqnarray*}
    and without loss of generality $x(\bar{c})=x(z_\ell)$.%
    \footnote{Note that in the continuous model, changing the allocation $x(\cdot)$ for a finite number of types does not change the objective. This is also the reason that the allocation of an interval $(z_i,z_{i+1})$ can be imposed without loss of generality on its left endpoint $z_i$ but not on its right one $z_{i+1}$.
    %Consistently including and excluding the left and right endpoint (respectively) of $[z_i,z_{i+1})$ is for simplicity and without loss of generality.
    }
    %$Z=\{z_0,...,z_{\ell+1}\}$ is referred to as the set of $x$'s breakpoints.
\end{definition}

%\tal{Do we want to say that every monotone allocation rule is monotone piecewise constant since the image of $x$ is finite? Maybe something like that?}

Note that in our setting, since the image of $x$ is finite, every monotone allocation rule can be viewed as corresponding to a monotone piecewise constant function. %as follows. 

%\tal{I think its better to define it this way than as a piecewise function. It is much less confusing (comparing to other definitions i tried) and is the only picewise constant rule we care about. OK?}
%% PAUL: SGTM
 
%\begin{definition}
%    An allocation rule $x:C\to[n]$ is \emph{picewise constant} if and only if there exist $0<\ell$ and  $0=z_0< ...< z_{\ell}<z_{\ell+1}=\infty$,\footnote{Generally, $\ell$ need not be finite. We define it as finite, since piecewise constant allocation rules where $\ell<0$.} such that for every $c,c'\in C$, and $i\in [\ell]$
%    $$
%    c,c'\in [z_{i},z_{i+1}) \implies x(c)=x(c').
%    $$
%    Z=$\{z_0,...,z_{\ell}\}$ is referred to as the breakpoints of $x$.
%\end{definition}
 
%\begin{definition}
%    An allocation rule $x:C\to[n]$ is \emph{picewise constant} if and only if there exist a set $Z=\{z_i\mid 0<z_i,0\leq i\}$, where $z_0=0$ and $z_{i-1}<z_{i}$ for all $0<i$ such that
%    $$
%    c,c'\in [z_{i-1},z_{i}) \implies x(c)=x(c') \forall c,c'\in C,0<z_i\in Z.
%    $$
%    Z=$\{z_0,z_1...\}$ is referred to as the breakpoints of $x$.
%\end{definition}
 
%Given a contract $t$, the agent chooses a distribution on actions~$\lambda=(\lambda_1,...,\lambda_n)$ that maximizes her expected payoff (payment minus cost) $\sum_{i=1}^n\lambda_i\cdot (\mathbb{E}_{o_j\sim F_{ i}}[t_j]-\gamma_i\cdot c)$. 
%In this case, we say that the contract \emph{implements} the distribution $\lambda$.

\subsection{Characterization for Discrete Types}
\label{sec:discrete}

We now present our characterization of implementable allocation rules for discrete types. 
Our approach must encompass the standard LP-based argument in contract theory for establishing whether or not a given action is implementable (see, e.g., Appendix A.2 of~\citep{DuttingRT19}).%
\footnote{In the classic principal-agent model with no private type, an action is \emph{implementable} if there exists a payment rule such that this action maximizes the agent's expected utility.}
At the same time, it must generalize Myerson's characterization of implementable allocation rules in single-parameter mechanism design problems (for discrete types). Our characterization thus shares features of Myerson's theory but also departs from it in significant ways.
%(see Proposition~\ref{prop:impl-is-mon-dis}).

\begin{theorem}[Discrete Characterization]\label{thm:charac-disc} 
An allocation rule $x: C \rightarrow [n]$ is implementable if and only if there exist no weights $\lambda_{(c,c',k)} \geq 0$ for pairs of types $c,c' \in C$ and actions $k \in [n]$ which satisfy $\sum_{c'\in C}\sum_{k\in [n]}\lambda_{(c,c',k)}=1$ $\forall c\in C$ and the following conditions: 
\begin{enumerate}
    \item {\sc Weakly dominant distributions:}\label{itm:distrib-dicrete-char}
    \begin{eqnarray*}
    \sum_{{c'\in C}}\sum_{k\in [n]}  F_{k,j} \lambda_{(c',c,k)} \geq F_{x(c),j}  &\forall c \in C,j \in [m].
    %\sum_{{c'\in C}}\sum_{k\in [n]}  F_{k,j} \lambda_{(c',c,k)} \geq F_{x(c),j} \sum_{c'\in C}\sum_{k\in [n]}\lambda_{(c,c',k)}  &\forall c \in C,j \in [m].
    \end{eqnarray*}
    \item {\sc Strictly lower joint cost:}\label{itm:costs-dicrete-char}
    \begin{eqnarray*}
    \sum_{c,c'\in C}\sum_{k\in [n]} \lambda_{(c,c',k)}\gamma_k c <
    \sum_{c\in C} \gamma_{x(c)} c.
    \end{eqnarray*}
\end{enumerate}
\end{theorem}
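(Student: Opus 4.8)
The plan is to express implementability as the feasibility of a linear system in the payments, turn that into an optimization LP, and recover the theorem's two conditions as the optimality condition of its dual.

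\textbf{Step 1 (reduction to a linear system).} I would first show that $(x,t)$ is IC if and only if
\[
T^c_{x(c)}-\gamma_{x(c)}\,c \;\ge\; T^{c'}_k-\gamma_k\,c \quad\text{for all } c,c'\in C,\ k\in[n]. \qquad (\ast)
\]
Informally, ``report $c$ and take $x(c)$'' weakly beats every ``report $c'$, then take action $k$'' deviation. For ``$(\ast)\Rightarrow$ IC'': taking $c'=c$ makes $x(c)$ a stage-$2$ best response (so $x(c)=i^*(c,c)$ under the principal-favoring tie-break), and for any report $c'$ the induced action gives utility $\max_k\{T^{c'}_k-\gamma_k c\}\le T^c_{x(c)}-\gamma_{x(c)}c$, so truthtelling is optimal. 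For ``IC $\Rightarrow(\ast)$'': $x(c)=i^*(c,c)$ together with condition (2) of the IC definition give $T^c_{x(c)}-\gamma_{x(c)}c\ge T^{c'}_{i^*(c',c)}-\gamma_{i^*(c',c)}c\ge T^{c'}_k-\gamma_k c$, using that $i^*(c',c)$ maximizes $T^{c'}_k-\gamma_k c$ over $k$. Since $T^c_i=\sum_j F_{i,j}t^c_j$, the system $(\ast)$ is linear in $\{t^c_j\}$, so $x$ is implementable iff $(\ast)$ has a solution with $t\ge 0$ (limited liability).

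\textbf{Step 2 (an LP detecting feasibility).} I would then pass to the (finite) linear program $\max\sum_{c}u_c$ subject to $u_c\le\big(T^c_{x(c)}-\gamma_{x(c)}c\big)-\big(T^{c'}_k-\gamma_k c\big)$ for all $c,c',k$, with $t^c_j\ge 0$ and each $u_c$ free. The constraint with $(c',k)=(c,x(c))$ forces $u_c\le0$, so the optimal value is always $\le0$, and it equals $0$ exactly when some $t\ge0$ makes every $u_c=0$ --- i.e.\ exactly when $(\ast)$ holds. Hence $x$ is implementable iff this LP has value $0$. It is feasible (e.g.\ $t=0$ and each $u_c$ sufficiently negative) and bounded above by $0$, so strong duality holds and both optima are attained.

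\textbf{Step 3 (dualizing).} Assign multipliers $\lambda_{(c,c',k)}\ge0$ to the constraints. Since $u_c$ is a \emph{free} variable with objective coefficient $1$, its dual constraint is the \emph{equality} $\sum_{c',k}\lambda_{(c,c',k)}=1$ for every $c$ --- precisely the normalization in the theorem, and the reason a bare Farkas-lemma argument (which only yields homogeneous, un-normalized inequalities) does not suffice. The nonnegative variable $t^c_j$ contributes the inequality $\sum_{\tilde c,k}F_{k,j}\lambda_{(\tilde c,c,k)}\ge F_{x(c),j}\sum_{c',k}\lambda_{(c,c',k)}$, which after substituting the normalization is exactly condition~\ref{itm:distrib-dicrete-char} (weakly dominant distributions) --- an inequality rather than an equality precisely because limited liability forces $t\ge0$. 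The dual objective, again using the normalization, is $\sum_{c,c',k}\lambda_{(c,c',k)}\gamma_k c-\sum_c\gamma_{x(c)}c$, to be minimized over $\lambda\ge0$ meeting the normalization and condition~\ref{itm:distrib-dicrete-char}. The ``self-loop'' assignment $\lambda_{(c,c,x(c))}=1$ (all other weights $0$) is feasible with objective $0$, so this minimum is $\le0$, and it is $0$ iff no feasible $\lambda$ has a negative objective, i.e.\ iff there is no $\lambda\ge0$ satisfying the normalization, condition~\ref{itm:distrib-dicrete-char}, and $\sum_{c,c',k}\lambda_{(c,c',k)}\gamma_k c<\sum_c\gamma_{x(c)}c$ (condition~\ref{itm:costs-dicrete-char}). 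By strong duality this dual optimum equals the primal optimum, which by Step~2 is $0$ iff $x$ is implementable. Therefore $x$ is implementable iff no such $\lambda$ exists, which is the stated characterization.

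\textbf{Main obstacle.} The delicate part is engineering the LP so that duality returns \emph{exactly} the stated conditions, above all the per-type normalization $\sum_{c',k}\lambda_{(c,c',k)}=1$: a Farkas certificate for $(\ast)$ is only scale-invariant, and one cannot rescale it type-by-type without breaking the weak-dominance inequalities, so the fresh free variables $u_c$ --- one per type --- are what pin the normalization down. The rest is bookkeeping: making the IC$\,\Leftrightarrow\,(\ast)$ equivalence exact via the tie-breaking convention, and tracking signs so that ``LP value $0$'' and ``no obstructing $\lambda$'' coincide.
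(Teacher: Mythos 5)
Your proof is correct, and it takes a genuinely different route from the paper's in one substantive respect. The paper dualizes the bare feasibility LP ($\max 0$ subject to the IC inequalities and $t\ge 0$), which yields a homogeneous dual certificate $\lambda$; it then needs a separate Lemma (\ref{lemma:normalize-dis-char} in Appendix~\ref{appx:Char}) to show that any such certificate with negative objective can be rescaled type-by-type to satisfy $\sum_{c',k}\lambda_{(c,c',k)}=1$ for every $c$ without destroying feasibility or the sign of the objective. Your construction instead introduces an auxiliary free variable $u_c$ per type with objective coefficient $1$ and replaces $\max 0$ with $\max\sum_c u_c$; because each $u_c$ is free, its dual constraint is the \emph{equality} $\sum_{c',k}\lambda_{(c,c',k)}=1$, so the per-type normalization is produced directly by duality rather than established after the fact. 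This removes the need for the normalization lemma entirely and makes the derivation of the theorem's two conditions from the dual essentially mechanical. Your Step~1 (the equivalence of IC and the linear system $(\ast)$ under the principal-favoring tie-break) is implicit in the paper but correct to make explicit, and Steps~2--3 are careful about feasibility and boundedness so that strong duality applies. One small point to verify when writing this up formally: the dual constraint coming from $t^c_j\ge 0$ is $\sum_{\tilde c,k}F_{k,j}\lambda_{(\tilde c,c,k)} \ge F_{x(c),j}\sum_{c',k}\lambda_{(c,c',k)}$, and it is the first (``reported type'') index of $\lambda$ that is summed on the left and the second index that is fixed to $c$; your sketch has this right, but it is an easy place to transpose indices.
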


%\inbal{explain in words}
One can think of the weights $\lambda_{(c,c',k)}$ in Theorem~\ref{thm:charac-disc} as a \emph{deviation plan}, by which whenever the true type of the agent is $c$ then with probability $\lambda_{(c,c',k)}$ he deviates to type $c'$ and to action~$k$ (i.e., he pretends to be of type $c'$ and takes action $k$).\footnote{Note that $\lambda_{(c,c,x(c))}$ might be strictly positive. In this case, part of type $c$'s deviation plan is to report truthfully and take the allocated action.} Then, the first condition says that for every type~$c$, the combination of distributions over outcomes resulting from deviations to $c$ dominates the distribution of a non-deviating agent whose true type is $c$.
%on aggregate this deviation plan shouldn't lead to weakly higher expected payoff for the principal, and 
{The second condition compares the total expected cost of the deviation plan to the total cost of a non-deviating, truthful agent; it says that in summation over all agent types, the total cost of the deviation plan is strictly lower than that of truthfulness.}
%The second condition compares the expected cost of the deviation plan for type $c$ to the cost of a non-deviating, truthful agent of the same type; it says that in summation over all agent types, the total cost of the deviation plan is strictly lower than that of truthfulness. 
Taking the two conditions together, Theorem~\ref{thm:charac-disc} states that an allocation rule is implementable precisely when there is no deviation plan for the agent which costs less than being truthful, and enables the agent to dominate the distribution over outcomes achievable by a truthful agent of any type (by pretending to be of that type).

\begin{proof}[Proof of Theorem \ref{thm:charac-disc}]

We take the following LP-based approach to determine whether or not $x$ is implementable. The LP for finding an IC contract that implements $x$ has $|C|m$ payment variables $\{t^c_j\}$ which must be non-negative by limited liability, 
%\inbal{we should probably explain here (and in the cont.~case) that this constraint is non-binding at this point, and will become binding only with an additional constraint added in Sec.~\ref{sec:applications}}, 
and $|C|^2(n+1)$ constraints ensuring that type~$c$'s expected utility from action $x(c)$ given contract $t^c$ is at least the expected utility from any other action $k\in [n]$ given any other contract $t^{c'}$. 
Recall from Eq.~\eqref{eq:expected-transfer} that $T^{c}_{k}=\sum_{j=1}^{m} F_{k,j} t^{c}_j$ is the expected transfer to an agent for reporting type $c$ and taking action $k$.
The LP is:
\begin{equation*}
\begin{array}{ll@{}ll}\tag{LP1}\label{LP1}
\text{max} & 0 &\\
\text{s.t.} & T_{x(c)}^{c} -\gamma_{x(c)} c \geq T^{c'}_{k} -\gamma_{k} c & &\forall c,c' \in C,k \in [n],\\
& t^{c}_j\geq 0 & & \forall c\in C, j\in [m].
\end{array}
\end{equation*}
The dual of the above linear program has $C^2 (n+1)$ non-negative variables $\lambda_{(c,c',k)}$ indexed by $c,c'\in C$ and $k\in [n]$, and $Cm$ constraints.
%As one interpretation, the indexing  $(c,c',k)$ corresponds to type, reported type, and action (respectively). 
The dual is:
\begin{equation*}
\begin{array}{ll@{}ll}
\tag{D1}
\label{Dual1}
\text{min} & \sum_{c,c'\in C}\sum_{k\in [n]} \lambda_{(c,c',k)} (\gamma_k-\gamma_{x(c)})c \text{\space} &\\
\text{s.t.}&   \sum_{c'\in C,k\in [n]}F_{k,j} \lambda_{(c',c,k)} \geq
F_{x(c),j} (\sum_{c''\in C,k\in [n]} \lambda_{(c,c'',k)}) & &  \forall c \in C,j\in [m],  \\
&  \lambda_{(c,c',k)}\geq 0 & &  \forall c,c'\in C, k\in [n].
\end{array}
\end{equation*}
% For us, I wrote down the process of taking the dual after DUAL
%Theorem~\ref{thm:charac-disc} 
By strong duality, \ref{LP1} is feasible (i.e., $x$ is implementable) if and only if its dual has an optimal solution with objective value of zero. 
Notice that the dual is feasible (e.g., setting $\lambda_{(c,c,x(c))}=1$ for every type $c$ and all other $\lambda$s to zero yields a feasible solution).
Thus \ref{LP1} is \emph{not} feasible if and only if there exists a feasible solution $\lambda$ to the dual which has objective value strictly below zero. Lemma~\ref{lemma:normalize-dis-char} in Appendix~\ref{appx:Char} shows that the existence of $\lambda$ implies the existence of a feasible solution with strictly negative objective $\lambda'$ in which $\sum_{c''\in C}\sum_{k\in [n]}\lambda'_{(c,c'',k)}=1$ $\forall c\in C$. It can be verified that when $\sum_{c''\in C}\sum_{k\in [n]}\lambda'_{(c,c'',k)}=1$ $\forall c\in C$, strictly negative objective value simplifies to 
$\sum_{c,c'\in C}\sum_{k\in [n]} \lambda'_{(c,c',k)} \gamma_k c<\sum_{c\in C}\gamma_{x(c)}c,$ and the first set of constraints simplifies to $\sum_{c'\in C}\sum_{k\in [n]}F_{k,j} \lambda'_{(c',c,k)}\geq F_{x(c),j}$ $\forall c\in C, j\in [m].$ This completes the proof.
\end{proof}

In Appendix~\ref{appx:relation-lit} we elaborate on the relation between Theorem~\ref{thm:charac-disc} and classic characterizations. Specifically, we show how it reduces to the classic characterizations when, as specified in Section~\ref{sec:comparison}, there is either no hidden type (standard contract setting) or no hidden action (standard procurement setting).

\paragraph{\bf Computational implications.} 

An immediate corollary of the LP-based proof of the characterization is that we can tractably: (i)~verify whether a given allocation rule is implementable; and (ii)~compute optimal payments for this allocation rule if it is.
For (i) we verify whether \ref{LP1} has a feasible solution, and for (ii) we minimize the expected payment $T_{x(c)}^c= \sum_{j} F_{x(c),j} t^c_j$ over the feasibility region of \ref{LP1}. 

\begin{corollary}
\label{cor:poly-LP}
For a discrete type space $C$, the problem of determining whether or not an allocation rule is implementable is solvable in time $O(poly(n,m,|C|^2))$. Moreover,  if an allocation rule is implementable, then we can find optimal payments for this allocation rule in time $O(poly(n,m,|C|^2))$.
\end{corollary}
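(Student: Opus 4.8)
The plan is to read both statements off directly from the linear program~\ref{LP1} constructed in the proof of Theorem~\ref{thm:charac-disc}. Recall that \ref{LP1} has $|C|m$ non-negative transfer variables $\{t^c_j\}$ and $|C|^2(n+1)$ incentive-compatibility constraints (together with the non-negativity constraints on the transfers), all of whose coefficients are entries of $F$ and of $\gamma$; hence its encoding length is polynomial in $n$, $m$, $|C|$ and in the bit-complexity of the instance data. By Definition~\ref{def:implementable} and the strong-duality argument inside the proof of Theorem~\ref{thm:charac-disc}, $x$ is implementable if and only if \ref{LP1} is feasible. Since feasibility of a rational linear program can be decided in time polynomial in its encoding length (e.g.\ by the ellipsoid method, or by any polynomial-time interior-point method), part~(i) follows with the stated $O(\mathrm{poly}(n,m,|C|^2))$ running time.

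For part~(ii), assume $x$ is implementable. Over all payment rules $t$ for which $(x,t)$ is IC, the principal's expected utility $\mathbb{E}_{c\sim G}[R_{x(c)}-T^c_{x(c)}]$ is maximized exactly when $\sum_{c\in C} G(c)\sum_{j\in[m]} F_{x(c),j}\, t^c_j$ is minimized, since the $R_{x(c)}$ terms do not depend on $t$. This is a linear objective in $\{t^c_j\}$ to be minimized over the feasible region of \ref{LP1}, i.e.\ a linear program of the same polynomial size as \ref{LP1}. Its objective is bounded below by $0$ because every $t^c_j\ge 0$ and each row $F_{x(c),\cdot}$ is a probability distribution, so the infimum is attained (a linear function bounded below on a non-empty polyhedron attains its minimum, at a vertex). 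A polynomial-time LP algorithm therefore returns optimal transfers $\{t^c_j\}$ in time $O(\mathrm{poly}(n,m,|C|^2))$; pairing these transfers with $x$ yields, by construction, an IC contract implementing $x$ whose expected utility for the principal is optimal among all such contracts.

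There is no substantive obstacle here: both parts are immediate invocations of polynomial-time linear programming once one has the explicit program \ref{LP1} from Theorem~\ref{thm:charac-disc}. The only point that warrants an extra sentence is the attainment of the optimum in part~(ii) --- which is precisely where limited liability (non-negativity of transfers) is used, since it makes the payment-minimization objective bounded below and rules out the degenerate possibility of an infimum that is never achieved.
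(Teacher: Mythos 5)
Your proposal is correct and matches the paper's own argument: the paper likewise deduces the corollary directly from \ref{LP1}, deciding implementability by checking feasibility and computing optimal payments by minimizing the expected payment over the feasibility region of \ref{LP1}, both via polynomial-time linear programming. Your extra remarks on boundedness/attainment via limited liability are a harmless (and reasonable) elaboration, not a departure.
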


We remark that if the type space is large ($|C|\gg n$) but the allocation rule is represented succinctly (e.g.,~by specifying which ranges of types map to which of the $n$ actions), the running time stays polynomial in $n,m$ (see Corollary~\ref{cor:poly-LP-cont} for the extreme case where $C$ is infinite).

\paragraph{\bf Monotonicity is necessary but not sufficient.} 

We next observe that beyond the two extremes with just hidden type or just hidden action, the conditions in Theorem~\ref{thm:charac-disc} imply that any implementable allocation rule for a setting with private type and hidden action has to be monotone in the sense of Definition~\ref{def:monotone}.

\begin{proposition}
\label{prop:impl-is-mon-dis}
For discrete types, every implementable allocation rule is monotone. 
\end{proposition}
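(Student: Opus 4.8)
The plan is to prove the contrapositive: if an allocation rule $x$ is not monotone, then it is not implementable, which by Theorem~\ref{thm:charac-disc} means we must exhibit a deviation plan $\{\lambda_{(c,c',k)}\}$ satisfying the normalization $\sum_{c',k}\lambda_{(c,c',k)}=1$ for all $c$, the weakly-dominant-distributions condition, and the strictly-lower-joint-cost condition. So suppose $x$ is not monotone: there exist two types $c_1 < c_2$ with $x(c_1) < x(c_2)$, i.e., the lower-cost type is recommended the cheaper action. Intuitively, these two types should swap their recommended actions (and, in effect, swap reports), since this keeps the multiset of induced outcome distributions exactly the same while strictly decreasing the total effort cost.

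Concretely, I would define $\lambda$ as follows. For every type $c \notin \{c_1, c_2\}$, set $\lambda_{(c,c,x(c))} = 1$ (type $c$ reports truthfully and takes its prescribed action). For the two special types, set $\lambda_{(c_1, c_2, x(c_2))} = 1$ and $\lambda_{(c_2, c_1, x(c_1))} = 1$, with all other weights zero. The normalization $\sum_{c',k}\lambda_{(c,c',k)} = 1$ holds for every $c$ by construction. For the weakly-dominant-distributions condition, fix a type $c$ and look at the left-hand side $\sum_{c',k} F_{k,j}\lambda_{(c',c,k)}$, which aggregates the outcome distributions of all deviations \emph{into} $c$. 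For $c \notin \{c_1,c_2\}$ the only such term is $\lambda_{(c,c,x(c))}$, giving exactly $F_{x(c),j}$, so the inequality holds with equality. For $c = c_2$ the only deviation into $c_2$ is from $c_1$, contributing $F_{x(c_2),j}$, again matching the right-hand side $F_{x(c_2),j}$ with equality; symmetrically for $c = c_1$ we get $F_{x(c_1),j}$. Thus condition~\eqref{itm:distrib-dicrete-char} is satisfied (in fact with equality everywhere).

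For the strictly-lower-joint-cost condition, the right-hand side is $\sum_{c} \gamma_{x(c)} c$. The left-hand side $\sum_{c,c',k}\lambda_{(c,c',k)}\gamma_k c$ agrees with the right-hand side on every term coming from $c \notin \{c_1,c_2\}$, while the contribution of $c_1$ and $c_2$ is $\gamma_{x(c_2)} c_1 + \gamma_{x(c_1)} c_2$, compared to $\gamma_{x(c_1)} c_1 + \gamma_{x(c_2)} c_2$ in the truthful cost. The difference (truthful minus deviation) is $(\gamma_{x(c_2)} - \gamma_{x(c_1)})(c_2 - c_1) > 0$, since $c_2 > c_1$ and $x(c_2) > x(c_1)$ implies $\gamma_{x(c_2)} > \gamma_{x(c_1)}$ (actions are strictly ordered by effort for distinct indices, using $\gamma_0 < \gamma_1 \le \cdots \le \gamma_n$ together with the no-dominated-actions assumption giving strict expected-reward order, hence distinct $\gamma$ values for distinct used actions; if two distinct action indices could share the same $\gamma$ one would need to argue via $R$ instead, but the strict inequality in effort between the relevant pair follows because $x(c_1)\ne x(c_2)$ and the $\gamma$'s are nondecreasing with at least the first step strict — I would double-check this edge case and, if needed, invoke strict monotonicity of $R$ in the index instead). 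Hence the deviation plan has strictly lower joint cost, the conditions of Theorem~\ref{thm:charac-disc} are met, and $x$ is not implementable.

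The only genuinely delicate point is the strictness $\gamma_{x(c_2)} > \gamma_{x(c_1)}$: the model only assumes $\gamma_0 < \gamma_1 \le \cdots \le \gamma_n$, so two distinct high-index actions could a priori have equal effort. I expect the resolution is that the no-dominated-actions assumption ($R_0 < \cdots < R_n$ strictly) is what is really needed, and one should phrase the swap in terms of expected rewards or note that equal-$\gamma$ actions with different distributions can be handled by an additional local argument; this is the step I would be most careful about when writing the full proof.
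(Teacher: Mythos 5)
Your proposal is correct and follows essentially the same route as the paper: a swap deviation plan $\lambda_{(c_1,c_2,x(c_2))}=\lambda_{(c_2,c_1,x(c_1))}=1$ with $\lambda_{(c,c,x(c))}=1$ elsewhere, equality in the dominance condition, and the strict cost drop $(\gamma_{x(c_2)}-\gamma_{x(c_1)})(c_2-c_1)>0$ to contradict Theorem~\ref{thm:charac-disc}. The edge case you flag (distinct actions with equal $\gamma$) is glossed over in the paper's own proof as well, which simply asserts $\gamma_{x(\ell)}<\gamma_{x(h)}$ from non-monotonicity, so your caution there is warranted but does not distinguish your argument from theirs.
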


The proof idea is simple --- %It goes by contradiction.
consider a ``swap'' between two types $\ell$ and $h$ that certify a violation of monotonicity, and use it to derive a deviation plan that satisfies the non-implementability conditions of Theorem~\ref{thm:charac-disc}.
%We will consider two types $\ell<h$, where $h$ carries a higher amount of effort than $\ell$. One possible deviation would be to ``swap" between both agents' allocated actions to get monotonicity, attaining the same distributions with strictly lower joint cost. Formally,

\begin{proof}[Proof of Proposition~\ref{prop:impl-is-mon-dis}]
Suppose towards contradiction that there exists an implementable allocation rule~$x:C\to [n]$ which is non-monotone. By definition, there are two types $\ell<h$ such that $\gamma_{x(\ell)}<\gamma_{x(h)}$. We specify weights (a deviation plan) that satisfy the conditions for non-implementability in Theorem~\ref{thm:charac-disc}, contradicting our assumption that $x$ is implementable. 
Let $\lambda_{(\ell,h,x(h))}=\lambda_{(h,\ell,x(\ell))}=1$, %\itc{$\lambda_{(c,c,x(c))}=1$ for every $c\notin\{\ell,h\}$}, 
and $\lambda_{(c,c,x(c))}=1$ $\forall c \in C \setminus \{h,\ell\}$, and $\lambda_{(c,c',k)}=0$ for all other entries. Simply put, $\lambda$ swaps the allocations of types $\ell$ and $h$.

We first show that $\lambda$ gives weakly dominant distributions (Condition~\eqref{itm:distrib-dicrete-char} in Theorem~\ref{thm:charac-disc}). For $c\notin \{\ell,h\}$, the inequality 
$\sum_{{c'\in C}}\sum_{k\in [n]}  F_{k,j} \lambda_{(c',c,k)} \geq F_{x(c),j}$ holds with equality $\forall j\in [m]$ by the definition of $\lambda$. For $c=\ell$, since the only non-zero $\lambda$ on the left-hand side is $\lambda_{(h,\ell,x(\ell))}=1$, the left-hand side is equal to $F_{x(\ell),j}$. Thus, the inequality holds with equality for $c=\ell$. The same arguments apply for $c=h$. 

Next, we show strictly lower joint cost, i.e., $\sum_{c,c'\in C}\sum_{k\in [n]} \lambda_{(c,c',k)}\gamma_k c < \sum_{c\in C} \gamma_{x(c)} c$.
By definition of $\lambda$, this simplifies to $\gamma_{x(\ell)} h+\gamma_{x(h)} \ell<\gamma_{x(\ell)} \ell+\gamma_{x(h)} h$. It can be verified that the last inequality holds since $\ell<h$ and $\gamma_{x(\ell)}<\gamma_{x(h)}$. This completes the proof.
\end{proof}

Interestingly, however, we show in Proposition~\ref{prop:non-monotone} below that monotonicity is \emph{not} a sufficient condition for implementability in our model with private type and hidden action --- a clear distinction from Myerson's theory \citep{Myerson81}.
%In fact, and perhaps surprisingly, this observation is rather persistent and 

This distinction is not caused just by the fact that in contract theory sometimes certain actions are not implementable at all %(even with known types.
(at least not in a way that ensures the principal non-negative utility). %, and the monotone contract just winds up choosing some of these actions.} 
%% 
%In fact, the example that we use to establish Proposition~\ref{prop:non-monotone} exhibits ``diminishing marginal returns'' for which one can show that any action is implementable in the contract sense.
%That is, for all possible costs, and all actions, there is a contract that incentivizes that action.
%\tal{Rephrased: I didn't get the time to write about diminishing marginals, and it turned out after proving it all again that I dont need it so I think we should change the above paragraph as follows.} 
In fact, our running example that we use to establish Proposition~\ref{prop:non-monotone} satisfies a condition for the implementability of any action  (see \citep[Appendix A.2]{DuttingRT19}).
That is, for every type $c$ and every action, there is a payment vector that incentivizes an agent of type $c$ to take that action, but there is no payment rule that incentives truthful reporting of every type while also ensuring that %for each possible cost 
the agent prefers to take the prescribed action for that type. %We give here a proof sketch capturing the essence of the argument (for a formal proof see the full version).
%That is, for all possible costs, and all actions, there is a contract that incentivizes that action. However, as we demonstrate, there is no payment rule that incentives truthful reporting of the hidden cost, while also ensuring that for each possible cost the agent prefers to take the prescribed action. We give here a proof sketch capturing the essence of the argument (for a formal proof see the full version).

\begin{proposition}\label{prop:non-monotone}
Consider Example~\ref{ex:runing}. The monotone allocation rule $x(1)=x(4)=2$ is unimplementable.
\end{proposition}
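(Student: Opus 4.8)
The plan is to invoke the discrete characterization (Theorem~\ref{thm:charac-disc}) and exhibit an explicit deviation plan $\{\lambda_{(c,c',k)}\}$ witnessing non-implementability of the rule $x(1)=x(4)=2$. The structural feature of Example~\ref{ex:runing} to exploit is that the outcome distribution of action $2$ is the average of those of actions $1$ and $3$, i.e.\ $F_2 = \tfrac12 F_1 + \tfrac12 F_3$ entrywise. In a setting with a single publicly known type this would not help, since replacing action $2$ by the $(\tfrac12,\tfrac12)$-mixture of actions $1$ and $3$ costs $\tfrac12\gamma_1 + \tfrac12\gamma_3 = 5.5$ units of effort versus $\gamma_2 = 3$, so action $2$ remains implementable in the pure moral-hazard sense. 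With two private types, however, one can route the cheap-effort half of the mixture (action $1$) to the high-cost type $c=4$ and the expensive-effort half (action $3$) to the low-cost type $c=1$, and split the type reports so that both impersonated types still receive the full target distribution $F_2$.

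Concretely, I would take
$$\lambda_{(1,1,3)} = \lambda_{(1,4,3)} = \lambda_{(4,4,1)} = \lambda_{(4,1,1)} = \tfrac12,$$
with all other weights zero; in words, the true type $1$ always takes action $3$ and reports type $1$ or type $4$ with equal probability, while the true type $4$ always takes action $1$ and reports type $4$ or type $1$ with equal probability. The normalization $\sum_{c'}\sum_k \lambda_{(c,c',k)} = 1$ then holds for both $c\in\{1,4\}$. For Condition~\eqref{itm:distrib-dicrete-char} of Theorem~\ref{thm:charac-disc} (weakly dominant distributions): for each impersonated type $c\in\{1,4\}$ the incoming mixture equals $\tfrac12 F_3 + \tfrac12 F_1 = F_2 = F_{x(c)}$, so the inequality holds with equality for every outcome $j$. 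For Condition~\eqref{itm:costs-dicrete-char} (strictly lower joint cost): the left-hand side is $2\cdot(\tfrac12\gamma_3\cdot 1) + 2\cdot(\tfrac12\gamma_1\cdot 4) = 10 + 4 = 14$, while $\sum_c \gamma_{x(c)}c = \gamma_2\cdot 1 + \gamma_2\cdot 4 = 15$, and $14 < 15$. Hence Theorem~\ref{thm:charac-disc} certifies that $x$ is unimplementable.

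There is no real obstacle in the verification itself — it is a short direct computation once the plan is written down. The only creative step is guessing the deviation plan, and the reasoning above (decompose action $2$'s distribution across the two types according to effort cost, then balance the reported types so domination holds for each impersonated type) makes that choice essentially forced. In the write-up I would also emphasize that this instance is deliberately chosen so that every individual action is implementable in the classical no-type sense (cf.\ the condition in \citep[Appendix~A.2]{DuttingRT19}); this is precisely why monotonicity together with per-action implementability still falls short of implementability of the allocation rule, and why the presence of at least two types is essential to the construction.
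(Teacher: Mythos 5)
Your proposal is correct and uses exactly the same deviation plan $\lambda_{(1,1,3)}=\lambda_{(1,4,3)}=\lambda_{(4,1,1)}=\lambda_{(4,4,1)}=\tfrac12$ as the paper, verifying the two conditions of Theorem~\ref{thm:charac-disc} with the identical computations (incoming mixtures equal $F_2$ for each impersonated type; joint cost $14<15$). The added intuition about routing the cheap half of the $F_2$-decomposition to the high-cost type and the expensive half to the low-cost type is a nice motivation, but the argument itself coincides with the paper's proof.
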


\begin{proof}
The proof relies on the characterization in Theorem~\ref{thm:charac-disc}. 
Suppose towards a contradiction that $x$ is implementable. We specify weights that satisfy the conditions for unimplementability in Theorem~\ref{thm:charac-disc}.
Let $\lambda_{(1,1,3)}=\lambda_{(1,4,3)}=\lambda_{(4,1,1)}=\lambda_{(4,4,1)}=0.5$, and $\lambda_{(c,c',k)}=0$ otherwise. One may interpret $\lambda$ as the deviation where type $1$ takes action $3$, type $4$ takes action $1$, and both types report either $c'=1$ or $c'=4$ with equal probability. 

%Take $\lambda_{(1,c,3)},\lambda_{(4,c,1)}=0.5$ for $c = 1$ and $c=4$, and $\lambda_{(c,c',k)}=0$ for all other entries.  
We first show that $\lambda$ gives weakly dominant distributions (Condition~\eqref{itm:distrib-dicrete-char} in Theorem~\ref{thm:charac-disc}). %For every $c\notin \{1,4\}$ and $j$, the inequality 
%$\sum_{{c'\in C}}\sum_{k\in [n]}  F_{k,j} \lambda_{(c',c,k)} \geq F_{x(c),j} \sum_{c''\in C}\sum_{k\in [n]}\lambda_{(c,c'',k)}$ 
%holds trivially since
%$ \sum_{{c'\in C}}\sum_{k\in [n]}  \lambda_{(c',c,k)}\cdot F_{k,j}  \geq F_{x(c),j}\cdot (\sum_{c'\in C}\sum_{k\in [n]}\lambda_{(c,c',k)})$ $\forall c \in C,j \in [m]$ (Theorem~\ref{thm:charac-disc}.\ref{itm:distrib-dicrete-char}). 
%For $c\notin\{1,4\}$, the inequality in Condition~\ref{itm:distrib-dicrete-char} of Theorem~\ref{thm:charac-disc} holds trivially since 
%the $\lambda$s are equal to zero. %$\lambda_{(c,c',k)},\lambda_{(c',c,k)}=0$ $\forall k,c'$. 
For $c=1$, since the only non-zero $\lambda$s on the left-hand side are $\lambda_{(4,1,1
)}=\lambda_{(1,1,3)}=0.5$, the left-hand side (the generated distribution over outcomes) is equal to $F_{2,j}=F_{x(1),j}$. Thus, the inequality holds with equality for $c=1$. The same arguments apply for $c=4$. 

Next, we show $\sum_{c,c'\in C}\sum_{k\in [n]} \lambda_{(c,c',k)}\gamma_k c < \sum_{c\in C} \gamma_{x(c)} c$, i.e., strictly lower joint cost (Condition~\eqref{itm:costs-dicrete-char} in Theorem~\ref{thm:charac-disc}).
By definition of $\lambda$ and recalling that $\gamma_1=1,\gamma_2=3,\gamma_3=10$, this simplifies to $14=\gamma_{1}\cdot 4+\gamma_{3}\cdot 1<\gamma_{2}\cdot 1+\gamma_{2}\cdot 4=15$.
That is, the initial joint cost is $15$, while the joint cost after deviation is $14$. This completes the proof.
\end{proof}

We note that while at first glance Proposition~\ref{prop:non-monotone} seems to indicate an inherent incompatibility between monotonicity and implementability, it could still be the case that the \emph{optimal} monotone allocation rule is always implementable. We will return to this question in Section~\ref{sec:applications}.

\subsection{Characterization for Continuous Types}
\label{sec:continuous}

We next tackle the case of continuous types. We establish our characterization result (Theorem~\ref{thm:char-cont}) by reduction to the discrete case with an extra condition (Lemma~\ref{lemma:reduction}). The key idea is as follows: We can argue --- independently of the LP approach to characterization --- that any implementable allocation rule must be monotone (Proposition~\ref{prop:impl-is-mon-cont}). Because the image of $x$ is finite, this means $x$ must be monotone piecewise constant (see Figure~\ref{fig:Fig2} on the left). The characterization can then be phrased for the $O(n)$ breakpoints $\{z_i\}$ of $x$. 
%
%A high-level challenge is that only the type will be continuous, while actions and the rest of the model are still inherently ``discrete'' and ``non-continuous.'' --- and it's a priori not clear at all how such a characterization could look like, and through which techniques it could go. Specifically, there seems to be a barrier to extending the LP-based approach that we used for discrete types.
%
%We show that --- perhaps surprisingly \inbal{I think for most readers the surprising part will be that it's not immediate} --- the characterization problem for continuous types admits a rather clean and elegant solution, and can in fact be reduced to the discrete case studied in Section~\ref{sec:discrete}.  
%
%We now state the characterization theorem below. 
Like in Theorem~\ref{thm:charac-disc}, the characterization states that there should be no ``profitable'' deviation plan, where this time deviations refer to breakpoints~$\{z_i\}$. %to other breakpoints %$z'_i$ 
%and actions. %$k$. 
For a technical reason explained below, the deviation plan involves two copies of every breakpoint %deviations away from $z_i$, 
that we refer to as $R$ (Right) and $L$ (Left).

%Intuitively, it adds one condition over its discrete counterpart, namely that the allocation rule should be monotone piecewise constant %\inbal{something I don't understand here is - how can an allocation rule for continuous types be monotone and not be piecewise constant?}
%(see Definition~\ref{def:piecewise-constant}). It also considers a related but different ``deviation plan.''  
%Where in the discrete case the deviation plan was $(c,c',k)$ for (true type, reported type, action taken), it is now $(R,i,i',k)$ and $(L,i,i',k)$. The $(i,i',k)$ bit stands for (interval the true types falls in, interval the reported type falls in, action taken), and R,L indicate whether the true type is to the left or to the right of . 

%Here, we leverage the results from previous section (for discrete types), and present a characterization of implementable allocation rules with \emph{continuous types}. Specifically, we show that the continuous-types problem can be reduces to a discrete-types problem and then use the same LP-based approach to solve is.

\begin{theorem}[Continuous Characterization]
\label{thm:char-cont}
An allocation rule $x: \mathbb{R} \rightarrow [n]$ is implementable if and only if it is monotone piecewise constant and there exist no weights $\lambda_{(R,i,{i'},k)},\lambda_{(L,i+1,{i'},k)}\geq 0,$ for all $i\in [\ell],i'\in [\ell+1],k \in [n]$ which satisfy $\sum_{i',k\in [n]}\lambda_{(L,i+1,i',k)}=\sum_{i',k\in [n]}\lambda_{(R,i,i',k)}=1$ $\forall i\in [\ell]$ and the following conditions:
\begin{enumerate}
    \item {\sc Weakly dominant distributions:}
    \begin{eqnarray*}
    \sum_{i'\in [\ell],k\in [n]}\frac{1}{2}(\lambda_{(R,i',i,k)}F_{k,j}+ \lambda
_{(L,i'+1,i,k)}F_{k,j})\geq 
F_{x(z_i),j} && \forall  j \in [m], i\in [\ell].
    \end{eqnarray*}
    \item {\sc Strictly lower joint cost:}\label{item:cont-dist-const} 
    \begin{eqnarray*}
    \sum_{i\in [\ell]}(\gamma_{x(z_i)} z_i+\gamma_{x(z_{i})} z_{i+1}) >\sum_{i\in [\ell],i'\in [\ell+1],k\in [n]}\lambda_{(R,i,i',k)}\gamma_{k} z_i+\lambda_{(L,i+1,i',k)}\gamma_{k} z_{i+1}.
    \end{eqnarray*}
      %\item {\sc Zero weights:} $\lambda_{(L,0,{i'},k)}=0$ for all $i',k\in [n]$.
\end{enumerate}
\end{theorem}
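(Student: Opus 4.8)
The plan is to reduce the continuous case to the discrete characterization (Theorem~\ref{thm:charac-disc}) via a suitable discretization, which is the content of the referenced Lemma~\ref{lemma:reduction}. First I would establish that any implementable allocation rule $x$ must be monotone; this follows the same ``swap'' argument as in Proposition~\ref{prop:impl-is-mon-dis}, adapted to the continuous setting (this is Proposition~\ref{prop:impl-is-mon-cont}). Since the image of $x$ is the finite set $[n]$, monotonicity forces $x$ to be monotone piecewise constant, so it is described by its $\ell+2$ breakpoints $0=z_0<\dots<z_{\ell+1}=\bar c$. This reduces the infinitely many IC constraints to finitely many: the only ``hard'' incentive constraints are those at the breakpoints and their immediate neighborhoods, because on the open interval $(z_i,z_{i+1})$ the allocation is constant and the agent's utility is linear in $c$, so IC in the interior is implied by IC at the two endpoints. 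The technical subtlety that forces the $R$/$L$ (Right/Left) bookkeeping is precisely this: the interval $(z_i,z_{i+1})$ is allocated $x(z_i)$, so the relevant constraints are the one-sided limits --- the behavior ``just to the right'' of $z_i$ (where the allocation is $x(z_i)$) and ``just to the left'' of $z_{i+1}$ (where the allocation is still $x(z_i)$, not $x(z_{i+1})$). I would make this precise by showing that $(x,t)$ is IC on a continuum iff a finite system holds for the $2\ell$ points $\{z_i^R\}_{i\in[\ell]}\cup\{z_{i+1}^L\}_{i\in[\ell]}$, where $z_i^R$ carries allocation $x(z_i)$ and cost parameter $z_i$, and $z_{i+1}^L$ carries allocation $x(z_i)$ and cost parameter $z_{i+1}$.

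Given this reduction, I would apply Theorem~\ref{thm:charac-disc} to the discrete type set consisting of these $2\ell$ labelled points, with the corresponding allocation assignments. That theorem says $x$ is implementable iff there is no deviation plan $\{\lambda_{(c,c',k)}\}$ over this finite set satisfying the weakly-dominant-distributions and strictly-lower-joint-cost conditions, with $\sum_{c',k}\lambda_{(c,c',k)}=1$ for each source type $c$. Translating back: the source types are the $R$-points and $L$-points, giving normalization $\sum_{i',k}\lambda_{(R,i,i',k)}=1$ and $\sum_{i',k}\lambda_{(L,i+1,i',k)}=1$; the dominance condition becomes the stated inequality comparing the $\frac12$-averaged mixture $\frac12(\lambda_{(R,i',i,k)}F_{k,j}+\lambda_{(L,i'+1,i,k)}F_{k,j})$ against $F_{x(z_i),j}$ (the factor $\tfrac12$ and the pairing of $R$ and $L$ into a single ``$i$'' arise because each breakpoint $z_i$ is represented by two labelled copies that must jointly dominate it); and the joint-cost condition becomes the displayed inequality with $\gamma_{x(z_i)}z_i$ and $\gamma_{x(z_i)}z_{i+1}$ on the ``truthful'' side and $\lambda_{(R,i,i',k)}\gamma_k z_i+\lambda_{(L,i+1,i',k)}\gamma_k z_{i+1}$ on the ``deviation'' side, where the cost weight attached to a labelled point is its associated cost parameter ($z_i$ for $R$-points, $z_{i+1}$ for $L$-points). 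Negating (non-existence of such a plan) yields exactly the theorem statement.

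The main obstacle I expect is the reduction lemma itself --- specifically, proving rigorously that IC on the full continuum is equivalent to the finite constraint system on the $R$/$L$ points, and that the payment rule can be extended from the breakpoints to all of $[0,\bar c]$ without violating IC. The forward direction (IC implies the finite conditions) is routine once monotonicity and piecewise-constancy are in hand. The reverse direction requires care: given payments $t^{z_i}$ satisfying the finite system, one must define $t^c$ for $c$ in the interior of each interval (naturally, $t^c=t^{z_i}$ for $c\in(z_i,z_{i+1})$, matching Definition~\ref{def:piecewise-constant}) and verify that no type in the interior wants to deviate to any other type or action. This uses the linearity of $T^{c'}_k-\gamma_k c$ in $c$: the utility difference between reporting truthfully and deviating is an affine function of $c$ on each interval, so it suffices to check the endpoints, and the one-sided $R$/$L$ labelling is exactly what captures the correct endpoint allocations. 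A secondary subtlety is handling the boundary conventions ($x(\bar c)=x(z_\ell)$, the half-open versus open intervals, and the footnoted fact that altering $x$ on finitely many points is immaterial), which must be threaded through consistently but presents no real difficulty.
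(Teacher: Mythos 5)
Your overall strategy (monotonicity, piecewise-constant structure, an R/L discretization of the breakpoints, then LP duality via the discrete result) mirrors the paper's, but the step where you invoke Theorem~\ref{thm:charac-disc} as a black box on the $2\ell$ labelled points has a genuine gap. The finite system your reduction actually produces --- and the one your reverse-direction sketch verifies --- is the system of the paper's Lemma~\ref{lemma:reduction}: one payment vector $t^{z_i}$ per breakpoint, with the $L$-copy of $z_{i+1}$ required to be served by the \emph{same} vector $t^{z_i}$ (and action $x(z_i)$) as the rest of the interval. That is not an instance of the setting of Theorem~\ref{thm:charac-disc}: applied to a type set consisting of the labelled copies, that theorem characterizes feasibility of an LP in which each labelled type has its \emph{own} payment vector (and, formally, it is stated for a set of costs, whereas $(L,i)$ and $(R,i)$ share the cost $z_i$ while carrying different allocations). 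The shared-vector LP and the separate-vector LP do turn out to be feasibility-equivalent, but that equivalence is itself a claim requiring proof --- e.g., that IC pins down the expected payment for the assigned action on each interval, or an explicit extension of payments to interior types by convex combinations of the two endpoint vectors; you neither state nor prove it, and without it the appeal to Theorem~\ref{thm:charac-disc} is unjustified. This is exactly the work the paper does instead by writing \eqref{LP:cont} with its ``right'' and ``left'' constraint families and dualizing it directly.

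Second, even granting the labelled discrete instance, a direct application of Theorem~\ref{thm:charac-disc} does not ``yield exactly the theorem statement'': it produces one dominance constraint per labelled copy, with deviation targets ranging over the $2\ell+2$ copies, whereas the stated conditions have targets $i'\in[\ell+1]$ (breakpoints) and a single dominance constraint per breakpoint involving the $\tfrac12$-averaged mixture. The $\tfrac12$ does not arise because ``two copies must jointly dominate the breakpoint''; it arises because the vector $t^{z_i}$ appears as the truthful vector in two constraint families (the $R$-copy of $z_i$ and the $L$-copy of $z_{i+1}$), so in the dual \eqref{Dual2} the right-hand side carries the total outgoing mass of both copies, which the normalization of Lemma~\ref{lemma:normalize-cont-char} sets to $2$. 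To pass from the per-copy form your application would give to the stated per-breakpoint form, you would still need a merge/split argument on the target index (merge targets $(R,i')$ and $(L,i'+1)$ into breakpoint $i'$ in one direction; split a breakpoint-targeted plan evenly between the two copies in the other), i.e., essentially redo the dual manipulation the paper performs. So the skeleton is right and could be completed, but these two bridging arguments are missing, and the translation as you describe it is not what the black-box application produces.
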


%Our intuitive interpretation of the indexing $(R/L,i,i',k)$ is as follows. The index $i$ stands for the breakpoint $z_i$. This breakpoint corresponds to a type (= cost) that is indifferent between choosing the payments and recommended action for the interval $[z_{i-1},z_i)$ and $[z_i,z_{i+1})$. R and L refer to these payment rules. 

%There is a lowest and a highest type that belongs to this interval, with different costs. $L$ and $R$, for left and right,  refer to these two types. Then $i'$ is the reported type, or more precisely breakpoint $z_{i'}$ of the piecewise constant allocation rule that (R,i) or (L,i) considers deviating to, and $k$ is the chosen action. 

%So each type now has more a degree of freedom in the deviation is added. So, each type $z_i$ now has now another variable of the deviation to play with another degree of freedom in the deviation is added.

%\tal{Rephrased:}\inbal{we may want to discuss later} Our intuitive interpretation of the indexing \itc{$(\{R,L\},i,i',k)$} 
%$(R/L,i,i',k)$ 
%is as follows. 

%The index $i$ stands for the breakpoint $z_i$. This breakpoint correspond to two types $(i,L)$ and $(i,R)$.
\paragraph{\bf Our discretization.}

The reason we index the $\lambda$s with $R$ or $L$ is 
as follows. The set of breakpoints $z_0,...,z_{\ell+1}$ of the monotone piecewise constant allocation rule $x$ form a natural continuous-to-discrete reduction. However, ensuring only that type $z_i$ has no incentive to deviate from $x(z_i)$ is not sufficient, since this does not maintain IC for non-breakpoints (i.e., types inside the interval $(z_i,z_{i+1})$). 
%% WAS: 
%% However, ensuring only that type $z_i$ has no incentive to deviate from $x(z_i)=n-i$, since this does not maintain IC for non-breakpoints (i.e., types inside the interval $(z_i,z_{i+1})$).
A helpful observation is that ensuring IC for both ``endpoints'' of $[z_i,z_{i+1})$ \emph{does} suffice to guarantee that all types in the interval do not deviate from $x(z_i)$, as these types share the same allocation and related incentives. We thus introduce the following discretization: We ``split'' the $i$th breakpoint (save $z_0$ and $z_{\ell+1}$) into two discrete types, $(L,i)$ and $(R,i)$. Then, consider an interval $[z_{i-1},z_i)$. For the left breakpoint, with cost $z_{i-1}$, we have type $(R,i-1)$, and for the right breakpoint with cost (that tends from the right to) $z_{i}$ we have type $(L,i)$.

%We ``split'' the $i$th breakpoint (save $z_0$) into two discrete types, $(L,i)$ and $(R,i)$, wherpe the latter has cost $z_i$ per unit-of-effort, and the former is the fictitious ``endpoint'' of $[z_{i-1},z_{i})$ and has cost tending to $z_{i}$ from below.
%In the discretized version of $x$, type $(R,i)$ is allocated action $x(z_i)$, while its counterpart $(L,i)$ is allocated $x(z_{i-1})$. For example, in Figure~\ref{fig:Fig2} on the right, type $(1,R)$ is allocated action $x(z_1)=2$, while $(1,L)$ is allocated action $x(z_0)=3$. % (allocated to the interval $[z_1,z_2)$).
%\tal{Perhaps explain the simplified characterization? For action $k=0$ there is only one ``working type" which is $(n,R)$ (there is no $(n+1,L)$) so the distribution on the right-hand side appears once.}

\begin{figure}[h]
   \centering
    \captionsetup{justification=centering,margin=1.5cm}
    \includegraphics[scale=1,width=140mm]{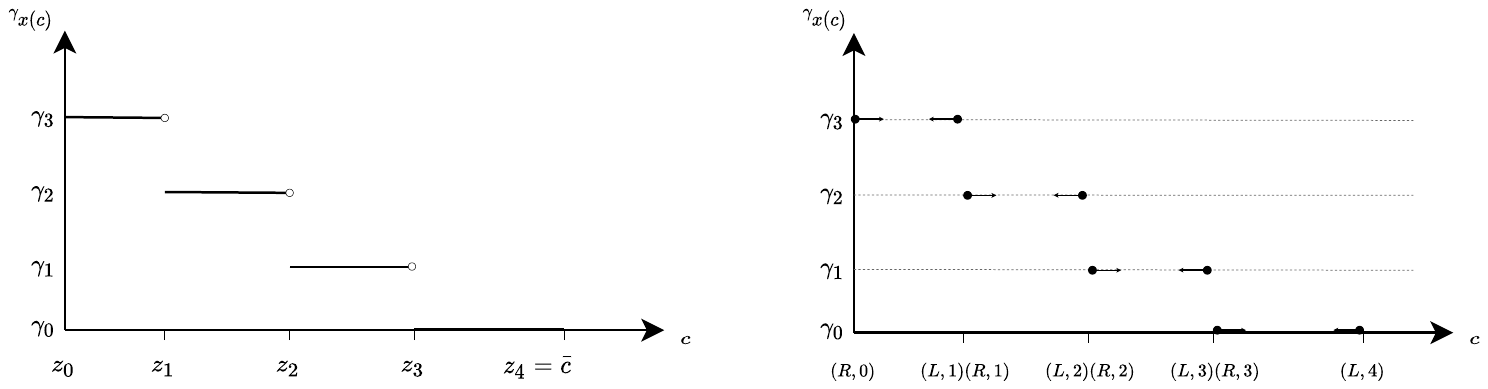}
    \caption{{\bf Left}: A monotone piecewise constant allocation rule $x(\cdot)$ for $4$ actions; the $x$-axis is the continuous type space and the $y$-axis is $\gamma_{x(c)}$. \\{\bf Right}: The discretization of $x(\cdot)$. Every breakpoint $z_i$ (save $z_0$ and $z_4$) is duplicated; the left copy of $z_i$ is allocated according to $x(z_{i-1})$ and the right copy is allocated according to $x(z_i)$ (the copies are plotted as if separated by a gap but the distance between $(L,i)$ and $(R,i)$ is vanishing.}
    \label{fig:Fig2}
\end{figure}{}

%This breakpoint corresponds to a type (= cost) that is indifferent between choosing the recommended action for the interval on its left $[z_{i-1},z_i)$, and the one on its right $[z_i,z_{i+1})$. The indices $L$ and $R$ correspond to these intervals (respectively). So, breakpoint $z_i$ can correspond to a ``source" of deviation $L,i$ or a source $R,i$ in $\lambda$.

%Thus, each combination: $R,$ and $i$ is the target, and $i'$ is the source. 

%Interpreting the uniform  

%when type $i$ puts a positive weight on $\lambda_{(L,i,i',k)}$ its as if he has deviated from action $x(z_{i-1})=n+1-\gamma$ to action $k$ with report $i'$. 

%$R$ and $L$ refer to these payment rules. 

%Our intuitive interpretation of the indexing $(R/L,i,i',k)$ is as follows. The index $i$ stands for an interval $[z_{i-1},z_i)$. $(L,i)$ stands for the left endpoint of this interval. That is, breakpoint $z_{i-1}$. Similarly, $(R,i)$ stands for the right endpoint of this interval. That is, breakpoint $z_i$. Breakpoint $z_i$ (indexed by $(R,i)$, and $(L,i+1)$) corresponds to a type (=cost) that is indifferent between choosing the recommended action for the $i-1$th interval on its left $[z_{i-1},z_i)$ and the $i$th on its right $[z_i,z_{i+1})$.  

\paragraph{\bf Proof of Theorem~\ref{thm:char-cont}.}

%\pd{in this section?} In this section we prove the continuous characterization. 
Our proof of the characterization result for the continuous case relies on the following proposition (a slight generalization of a classic result by \cite{Myerson81}) %(Proposition~\ref{prop:impl-is-mon-cont}) 
and on
Lemma~\ref{lemma:reduction} below.

\begin{proposition}\label{prop:impl-is-mon-cont}
For continuous types, every implementable allocation rule is monotone piecewise constant.
\end{proposition}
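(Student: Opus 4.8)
The plan is to show that any implementable allocation rule $x:[0,\bar c]\to[n]$ must first be monotone in the sense of Definition~\ref{def:monotone}, and then, since the image of $x$ is the finite set $[n]$, conclude that monotonicity forces $x$ to be (equivalent to, up to finitely many types) a monotone piecewise constant rule in the sense of Definition~\ref{def:piecewise-constant}. The second half is essentially bookkeeping: a monotone function into a finite ordered set has at most $n$ points of discontinuity, which we take as the interior breakpoints $z_1<\dots<z_\ell$ with $\ell\le n$; between consecutive breakpoints $x$ is constant, and we may redefine $x$ on each breakpoint to equal its value on the interval to its left without affecting implementability (changing $x$ on finitely many types preserves the existence of an IC payment rule, since we can keep the same payments). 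So the crux is the monotonicity claim.

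For monotonicity, I would argue directly from incentive compatibility, mirroring the proof of Proposition~\ref{prop:impl-is-mon-dis} but without invoking the LP characterization (the excerpt explicitly says this argument should be independent of the LP approach). Suppose $(x,t)$ is IC and take two types $c<c'$. Write $i=x(c)$, $i'=x(c')$. IC at $c$ (both truthful reporting and taking the recommended action) gives $T^c_{i}-\gamma_{i}c \ge T^{c'}_{i'}-\gamma_{i'}c$, since reporting $c'$ and then taking the action $i^*(c',c)$ is one available deviation, and in particular taking action $i'$ after reporting $c'$ is available, so $T^c_i-\gamma_i c\ge T^{c'}_{i'}-\gamma_{i'}c$. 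Symmetrically, IC at $c'$ gives $T^{c'}_{i'}-\gamma_{i'}c' \ge T^c_{i}-\gamma_{i}c'$. Adding the two inequalities, the transfer terms cancel and we get $-\gamma_i c-\gamma_{i'}c' \ge -\gamma_{i'}c-\gamma_i c'$, i.e. $(\gamma_{i'}-\gamma_i)(c'-c)\le 0$. Since $c'-c>0$, this forces $\gamma_{i'}\le \gamma_i$, i.e. $\gamma_{x(c')}\le\gamma_{x(c)}$. Because the actions are ordered so that $\gamma_0<\gamma_1\le\dots\le\gamma_n$ and (by the no-dominated-actions assumption) strictly ordered by expected reward, $\gamma_{x(c')}\le\gamma_{x(c)}$ together with the labelling convention yields $x(c')\le x(c)$, which is exactly Definition~\ref{def:monotone}. (One mild subtlety: if two distinct actions happened to have equal $\gamma$, the inequality on $\gamma$'s would not pin down the index; but the standing assumption $R_0<\dots<R_n$ rules out ties among the relevant actions once we also use that distinct actions have distinct expected rewards — alternatively one observes that if $\gamma_{x(c)}=\gamma_{x(c')}$ the two actions are interchangeable for both types and we may relabel so that $x$ is non-increasing.)

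Combining: $x$ is monotone non-increasing into the finite set $\{0,\dots,n\}$, hence has finitely many (at most $n$) jump points $z_1<\dots<z_\ell$; setting $z_0=0$, $z_{\ell+1}=\bar c$, the rule is constant on each $(z_i,z_{i+1})$, and after the harmless redefinition at the breakpoints it satisfies the defining conditions of monotone piecewise constant, including $x(z_i)\ge x(z_{i+1})$. That completes the proof of Proposition~\ref{prop:impl-is-mon-cont}. The main obstacle, such as it is, is not the two-line "add the IC inequalities" computation but being careful about the reduction from "the $\gamma$-values are non-increasing along types" to "the action indices are non-increasing along types," which relies on the instance's ordering conventions and the no-dominated-actions assumption; and, on the piecewise-constant side, justifying cleanly that altering $x$ at the finitely many breakpoints does not destroy implementability (same payment rule still works, and the continuous-model objective is unaffected, exactly as noted in the footnote to Definition~\ref{def:piecewise-constant}).
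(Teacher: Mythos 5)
Your proof is correct and follows essentially the same route as the paper's: the paper likewise writes the two IC inequalities for types $c<c'$ (truthful report and recommended action versus mimicking the other type and taking its recommended action), combines them to conclude $\gamma_{x(c)}\ge\gamma_{x(c')}$, and then invokes the finiteness of the action set to conclude piecewise constancy, with the endpoint/tie issues you flag being glossed over there via the footnote to Definition~\ref{def:piecewise-constant}. The only nitpick is directional: under Definition~\ref{def:piecewise-constant} the value on $(z_i,z_{i+1})$ must equal $x(z_i)$, so a breakpoint should be (harmlessly) redefined to the value of the interval on its \emph{right}, not its left.
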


\begin{proof}
We apply Myerson's argument to show that every implementable allocation rule $x$ mapping types to actions is monotone. Thus and since there is a finite number of actions to map to, $x$ is piecewise constant. 
To show that $x$ is monotone, 
let $c<c'$ be two costs; it suffices to show that $\gamma_{x{(c)}}\geq \gamma_{x{(c')}}$. Let $t$ be a payment rule that implements $x$, and recall $T^c_i$ is the expected payment to the agent for reporting type $c$ and taking action $i$. If the agent's true type is $c$, by IC it must hold that the expected payment from truthfulness is at least the expected payment from reporting $c'$ and deviating to the action recommended to type $c'$: %payoff from reporting falsely, and taking action $x(c')$. Thus, 
\begin{equation}\label{eq:proof-menumonotonicity-eq1}
T^{c}_{x(c)}-\gamma_{x(c)} c\geq T^{c'}_{x(c')}-\gamma_{x(c')} c.
\end{equation}
Similarly, if the agent's true type is $c'$ then by IC it must hold that
\begin{equation}\label{eq:proof-menumonotonicity-eq2}
T^{c'}_{x(c')}-\gamma_{x(c')} c'\geq T^{c}_{x(c)}-\gamma_{x(c)} c'.
\end{equation}
Rearranging \eqref{eq:proof-menumonotonicity-eq1} and \eqref{eq:proof-menumonotonicity-eq2} we have
\begin{equation}\label{eq:proof-menumonotonicity-eq3}
(\gamma_{x(c)} -\gamma_{x(c')}) c' \geq  T^{c}_{x(c)}-T^{c'}_{x(c')} \geq (\gamma_{x(c)} -\gamma_{x(c')}) c.
\end{equation}
Suppose towards a contradiction that $\gamma_{x(c)} <\gamma_{x(c')}$. Dividing both sides of \eqref{eq:proof-menumonotonicity-eq3} by $\gamma_{x(c)} -\gamma_{x(c')}$ we get that $c'\leq c$, contradiction. This completes the proof.
\end{proof}

The following lemma reduces the implementability of a monotone piecewise constant allocation rule to that of the discretized version with an extra condition.

\begin{lemma}\label{lemma:reduction}
Consider a principal-agent instance $\Con(F,\gamma,C)$ and a monotone piecewise constant allocation rule $x:C\to[n]$ with a set $Z$ of breakpoints. %, and a payment rule $t$. 
The following statements are equivalent for every payment rule $t: C \rightarrow \mathbb{R}^{m+1}_{\geq 0}$:
\begin{enumerate}
    \item Payment rule $t$ implements $x$ with respect to $\Con(F,\gamma,C)$;
    \item Payment rule $t$ implements $x$ with respect to $\Dis(F,\gamma,Z)$, such that
    \begin{equation}
    \textstyle
    \forall 1\le i\le \ell : T^{z_i}_{x(z_i)} - \gamma_{x(z_i)} z_i = T^{z_{i-1}}_{x(z_{i-1})} - \gamma_{x(z_{i-1})} z_i.
    \label{item:disc-constraint}
    \end{equation}
\end{enumerate}
\end{lemma}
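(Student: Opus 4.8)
\textbf{Proof plan for Lemma~\ref{lemma:reduction}.}

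The plan is to prove the two directions $(1)\Rightarrow(2)$ and $(2)\Rightarrow(1)$ separately, with the bulk of the work being $(2)\Rightarrow(1)$. For $(1)\Rightarrow(2)$: if $t$ implements $x$ over the full continuous instance, then in particular every breakpoint type $z_i\in Z$ has no incentive to deviate (either in type report or in action), so $t$ trivially implements $x$ over $\Dis(F,\gamma,Z)$. It remains to derive the equality in~\eqref{item:disc-constraint}. This is where I would invoke the standard Myerson-style ``zero-measure'' argument already flagged in the footnote to Definition~\ref{def:piecewise-constant}: the utility of a truthful agent, $u(c) := T^{c}_{x(c)} - \gamma_{x(c)} c$, is a maximum of affine functions of $c$ (by IC, $u(c) = \max_{c'} (T^{c'}_{x(c')} - \gamma_{x(c')} c)$), hence convex and therefore continuous on $[0,\bar c]$. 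On the interval $(z_{i-1}, z_i)$ the allocation is constant equal to $x(z_{i-1})$, so $u$ is affine there, equal to $T^{z_{i-1}}_{x(z_{i-1})} - \gamma_{x(z_{i-1})} c$; taking the limit $c \uparrow z_i$ and using continuity of $u$ at $z_i$ gives $u(z_i) = T^{z_{i-1}}_{x(z_{i-1})} - \gamma_{x(z_{i-1})} z_i$, which is exactly~\eqref{item:disc-constraint}. (One should note that the value $t^{z_i}$ is the payment vector solicited by the \emph{left} copy's neighbourhood is irrelevant; what matters is that the reported-type $z_{i-1}$ deviation is available to type $z_i$ and, conversely, type $z_{i-1}$ can report $z_i$, and combining the two IC inequalities in the style of~\eqref{eq:proof-menumonotonicity-eq3} forces the affine pieces to agree in the limit.)

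For $(2)\Rightarrow(1)$, assume $t$ implements $x$ over the discrete instance on the breakpoint set $Z$ and satisfies~\eqref{item:disc-constraint}; I must show no type $c\in[0,\bar c]$ wants to deviate. Fix $c$ and let $i$ be the index with $c\in[z_{i-1}, z_i)$ (so $x(c) = x(z_{i-1})$), and let $(c',k)$ be a candidate deviation (report $c'$, then take action $k$). The agent's gain from this deviation, relative to truthfully playing $x(c)$, is a function of $c$ of the form $\big(T^{c'}_k - \gamma_k c\big) - \big(T^{z_{i-1}}_{x(z_{i-1})} - \gamma_{x(z_{i-1})} c\big)$, which is \emph{affine} in $c$ with slope $\gamma_{x(z_{i-1})} - \gamma_k$. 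An affine function on the interval $[z_{i-1}, z_i]$ attains its maximum at one of the two endpoints. So it suffices to check that the deviation is unprofitable at $c = z_{i-1}$ and at $c = z_i$. At $c = z_{i-1}$ this is immediate from IC of $t$ at the discrete type $z_{i-1}$. At $c = z_i$: we know IC holds at the discrete type $z_i$ relative to playing $x(z_i)$, i.e. $T^{z_i}_{x(z_i)} - \gamma_{x(z_i)} z_i \ge T^{c'}_k - \gamma_k z_i$; and by~\eqref{item:disc-constraint} the left-hand side equals $T^{z_{i-1}}_{x(z_{i-1})} - \gamma_{x(z_{i-1})} z_i$, which is exactly the utility type $z_i$ would get from the ``allocation that type $c$ sees'' extended to the right endpoint. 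Hence the deviation is unprofitable at $c = z_i$ as well, and by the affine-maximum argument it is unprofitable for every $c$ in the interval. Since $c$ and the deviation were arbitrary, $t$ implements $x$ over $\Con(F,\gamma,C)$. (Types at the single point $c = \bar c = z_{\ell+1}$ and the finitely many breakpoints are handled directly or are irrelevant to the objective, as the footnote notes.)

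The main obstacle is getting the bookkeeping at the breakpoints exactly right: a breakpoint $z_i$ is a single continuous type but corresponds, in the discretization of Theorem~\ref{thm:char-cont}, to \emph{two} discrete copies $(L,i)$ and $(R,i)$ allocated differently, and Lemma~\ref{lemma:reduction} is the bridge that makes this legitimate. The subtlety is that in statement~(2) the type $z_i$ gets allocation $x(z_i)$ (the ``right'' value), whereas in the continuous instance the half-open interval $[z_{i-1},z_i)$ carries allocation $x(z_{i-1})$, so the ``left-limit'' incentive at $z_i$ is governed by $x(z_{i-1})$ and the ``at-the-point'' incentive by $x(z_i)$ — and constraint~\eqref{item:disc-constraint} is precisely the statement that these two coincide, i.e. the agent is indifferent at the breakpoint. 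Once one is careful that (a) IC over $\Dis(F,\gamma,Z)$ already builds in all the $z_i$-to-$z_{i'}$ misreports and all action deviations, and (b) the per-interval deviation-gain function is affine in $c$ so endpoint-checking is exhaustive, both implications are short. I would present the affine-maximum observation as the one reusable lemma and keep the two directions to a few lines each.
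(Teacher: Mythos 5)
Your proof is correct, and it departs from the paper's proof in a way worth noting. For the forward direction $(1)\Rightarrow(2)$ the paper argues by contradiction: if the equality~\eqref{item:disc-constraint} failed, a type $z_i+\epsilon$ (for small $\epsilon$) would strictly prefer the misreport $z_{i-1}$, violating continuous IC. You instead invoke the Myersonian envelope argument --- $u(c)=\max_{c',k}(T^{c'}_k-\gamma_k c)$ is a pointwise maximum of affine functions, hence convex and continuous at the interior breakpoint $z_i$, and since $u$ is affine with the ``left'' parameters on $[z_{i-1},z_i)$, continuity forces the two affine pieces to agree at $z_i$. Both are valid; yours is cleaner and immediately explains \emph{why} the equality constraint is the right one. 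One small refinement you should make explicit: your claim that $u$ equals $T^{z_{i-1}}_{x(z_{i-1})}-\gamma_{x(z_{i-1})}c$ on the open interval relies on the (true, but worth stating) observation that the two IC inequalities between any pair of types in the same interval force $T^{c}_{x(c)}$ to be \emph{constant} there, not merely that the allocation is constant.

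For the backward direction $(2)\Rightarrow(1)$, your affine-on-interval / check-both-endpoints argument is logically identical to the paper's proof, just phrased at a higher level: the paper's two displayed inequalities \eqref{eq:dis-to-cont-1}--\eqref{eq:dis-to-cont-2} are exactly the left- and right-endpoint checks, and its case split on the sign of $\gamma_{x(z_i)}-\gamma_k$ is precisely the observation that an affine function attains its max over an interval at the endpoint determined by the sign of the slope. The one thing you gloss over, and the paper does not, is that the payment rule appearing in statement~(2) is only pinned down on $Z$, so to prove~(1) you must actually \emph{construct} the extension to all of $C$ --- the paper explicitly defines $\hat t^c = t^{z_i}$ for $c\in[z_i,z_{i+1})$ and $t^{z_{\ell+1}}=t^{z_\ell}$. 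Your argument tacitly assumes this piecewise-constant extension (it is what makes ``truthful expected payment for type $c$'' equal $T^{z_{i-1}}_{x(z_{i-1})}$, and what lets you restrict attention to reported types in $Z$). Write this down; without it the lemma's ``same $t$ in both statements'' reading is literally false, since an arbitrary continuation of $t$ off $Z$ can destroy IC.
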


By Condition~\eqref{item:disc-constraint} in Lemma~\ref{lemma:reduction}, type $z_i$ is equally incentivized to take action $x(z_{i})$, which is allocated to types in the interval $[z_i,z_{i+1})$ on its right, and to deviate to $z_{i-1}$ and take action $x(z_{i-1})$, which is allocated to types in the interval $[z_{i-1},z_{i})$ on its left. Equivalently, in the discretization described above, type $(R,i)$ is incentivized to take its prescribed action $x(z_{i})$, and type $(L,i)$ (in the limit as its cost $c \rightarrow z_i$) is incentivized to take its prescribed action $x(z_{i-1})$. 

\begin{proof}[Proof of Lemma~\ref{lemma:reduction}]
For the forward direction, suppose there exists a payment rule $t$ that implements $x$ with respect to $\Con(F,\gamma,C)$. We show that $t$ restricted to $Z$ both (i)~implements $x$ with respect to $\Dis(F,\gamma,Z)$; and (ii)~satisfies Condition~\eqref{item:disc-constraint}. 
%Consider $c'\in C, c'\ne c$. 
For (i), since $t$ implements $x$ with respect to $\Con(F,\gamma,C)$ and $Z\subseteq C$, then type $z_i$ cannot increase its expected utility by reporting $z_{i'}$ or by taking an action other than $x(z_i)$. For (ii), suppose towards a contradiction that for some $1\le i\le n$ it holds that $\textstyle T^{z_i}_{x(z_i)}-\gamma_{x(z_i)}z_i\neq T^{z_{i-1}}_{x(z_{i-1})}-\gamma_{x(z_{i-1})}z_i$. That is, without loss of generality, $\textstyle (\gamma_{x(z_{i-1})}-\gamma_{x(z_i)}) z_i< T^{z_{i-1}}_{x(z_{i-1})}- T^{z_i}_{x(z_i)}$. It follows that for a small enough value of $\epsilon$, $\textstyle (\gamma_{x(z_{i-1})}-\gamma_{x(z_i)}) (z_i+\epsilon)< T^{z_{i-1}}_{x(z_{i-1})}- T^{z_i}_{x(z_i)}$. This contradicts the assumption that $t$ implements $x$ with respect to $\Con(F,\gamma,C)$, since type $z_i+\epsilon\in C$ could increase its payoff by reporting $c'={z_{i-1}}$ and by taking action $x(z_{i-1})$.

For the backward direction, suppose $t$ implements $x$ with respect to $\Dis(F,\gamma,C)$ under Condition~\eqref{item:disc-constraint}. Define $\hat{t}$ such that $\hat{t}^c={t}^{z_{i}}$ for every $c\in [z_{i},z_{i+1})$, where $i\in [\ell]$, and $t^{z_{\ell+1}}=t^{z_{\ell}}$.\footnote{Note that $T^{z_{\ell}}_{x(z_{\ell})}=T^{z_{\ell+1}}_{x(z_{\ell+1})}$ since $x(z_{\ell})=x(z_{\ell+1})$ and none of types $z_{\ell},z_{\ell+1}$ deviates to the other.} Since $t^{z_{i}}$ and action $x(z_i)$ maximize type $z_{i}$'s expected utility among all types $Z$ and actions,
\begin{eqnarray}\label{eq:dis-to-cont-1}
\textstyle (\gamma_{k}-\gamma_{x(z_{i})}) z_{i} \geq T^{z_{i'}}_{k}-T^{z_i}_{x(z_i)} & \forall i\in [\ell],k\in [n], z_{i'} \in Z.
\end{eqnarray}
By Condition~\eqref{item:disc-constraint} and the fact that $t^{z_{i+1}}$ and action $x(z_{i+1})$ maximize type $z_{i+1}$'s expected utility $\forall i \in [\ell]$, we have that $t^{z_i}$ and action $x(z_i)$ also maximize type $z_{i+1}$'s expected utility. Thus, 
\begin{eqnarray}\label{eq:dis-to-cont-2}
\textstyle T^{z_i}_{x(z_i)} - T^{z_{i'}}_{k} \geq  (\gamma_{x(z_i)}-\gamma_{k}) z_{i+1} & \forall i\in [\ell],k\in [n], z_{i'}\in Z.
\end{eqnarray}
The above holds also for $i=\ell$ since ${x(z_{\ell})}={x(z_{\ell+1})}$, and we chose $t^{z_{\ell+1}}=t^{z_{\ell}}$ which also incentivizes $z_{\ell+1}$ to take $x(z_{\ell+1})$ as explained above.

Using Inequalities \eqref{eq:dis-to-cont-1}-\eqref{eq:dis-to-cont-2} we now show that type $c$'s expected utility is maximized by choosing payments $\hat{t}^{c}$ and action $x(c)$ for every $c\in [z_i,z_{i+1}), i\in [\ell]$. First, if $\gamma_{x(z_i)}>\gamma_{k}$, then by \eqref{eq:dis-to-cont-2} and using $\hat{t}^{c}={t}^{z_i}$, $x(c)=x(z_i)$, and $z_{i+1}> c$, we have that $\hat{T}^{c}_{x(c)} -\hat{T}^{c'}_{k}  \geq  (\gamma_{x(c)}-\gamma_{k}) c$  for every $k\in [n], c'\in C$. Otherwise, if $\gamma_{x(z_i)}\leq \gamma_{k}$, then by~\eqref{eq:dis-to-cont-1} and using that $\hat{t}^c=t^{z_i}$, $x(c)=x(z_i)$, and $c\geq z_{i}$, we have that $(\gamma_{k}-\gamma_{x(c)}) c\geq \hat{T}^{c'}_k-\hat{T}^c_{x(c)}$ for every $k\in [n], c'\in C$. This completes the proof. 
\end{proof}

We are now ready to prove our characterization result for continuous types.

\begin{proof}[Proof of Theorem~\ref{thm:char-cont}]
By combining Proposition~\ref{prop:impl-is-mon-cont} and Lemma~\ref{lemma:reduction}, to determine whether $x$ is implementable one must determine whether there exists a payment rule $t$ that implements $x$ for $\Dis(F,\gamma,Z)$ under Condition~(\ref{item:disc-constraint}). 
As in the proof of Theorem~\ref{thm:charac-disc} we adopt the following LP-based approach. The LP for finding an IC contract that implements $x$ under Condition~(\ref{item:disc-constraint}) (if one exists) has $|Z| m=O(nm)$ payment variables $\{t^{z_i}_j\}$, which must be non-negative by limited liability, and two sets of $O(|Z|^2n)=O(n^3)$ constraints referred to as the ``right'' set and the ``left'' set. 
The right set of constraints ensures that reporting $z_i$ and taking action $x(z_i)$ maximizes type $z_i$'s expected utility among all type reports and actions. This set of constraints is equivalent to the constraints in \eqref{LP1}, ensuring that $t$ implements $x$.
%\footnote{As previously explained, we call it the Right set, since it incentivizes type $z_i$ to take action $x(z_{i})$, as allocated by $x$ to types in the interval on the right of breakpoint $z_i$, i.e., $[z_i,z_{i+1})$.} 
The left set of constraints ensures that reporting $z_{i-1}$ and taking action $x(z_{i-1})$ also maximizes type $z_i$'s expected utility (i.e., ensuring Condition~(\ref{item:disc-constraint})).%
\footnote{It is also possible to replace this set of constraints by an equality constraint, which leads to a ``less symmetric'' dual.}
%\inbal{a version with a single equality constraint.}
%Similarly, we call it the Left set since it incentivizes type $z_i$ to take action $x(z_{i-1})$, as allocated to types in the interval on the left of $z_i$, i.e., $[z_i,z_{i+1})$. Conclusively, the LP is: 
\begin{equation}\label{LP:cont}\tag{LP2} 
\begin{array}{ll@{}ll}
\text{min} & 0 &\\
\text{s.t.} & T^{z_i}_{x(z_i)} -\gamma_{x(z_i)} z_i \geq
T^{z_{i'}}_k-\gamma_{k} z_i &&  \forall i,i' \in [\ell+1],k \in [n],\\
&T_{x(z_{i-1})}^{z_{i-1}} - \gamma_{x(z_{i-1})} z_i \geq T^{z_{i'}}_k- \gamma_{k} z_i &&  \forall 1\le i\le \ell ,i'\in [\ell+1],k \in [n],\\
&t^{z_i}_j\geq 0 &&  \forall i\in [\ell+1], j\in [m].
\end{array}
\end{equation}

The dual of \eqref{LP:cont} has two sets of $O(n^2 m)$ non-negative variables: ``right'' variables $\lambda_{(R,i,i',k)}$, and ``left'' variables $\lambda_{(L,i+1,i',k)}$ for for $i\in [\ell],i'\in [\ell+1],k\in [n]$. 
In our interpretation, the indexing  $(\{R,L\},i,i',k)$ corresponds to duplicate side, true type, reported type, and action. The dual is:

\begin{equation}\label{Dual2}\tag{D2} 
\begin{array}{ll@{}ll}
\text{max} & \sum_{i\in [\ell],i'\in [\ell+1],k\in [n]}\lambda_{(R,i,i',k)}(\gamma_{x(z_i)} z_i-\gamma_{k} z_i)+\\
&\sum_{i\in [\ell],i'\in [\ell+1],k \in [n]}\lambda_{(L,i+1,i',k)} (\gamma_{x(z_{i})} z_{i+1} - \gamma_{k} z_{i+1})\\
\text{s.t.} & \sum_{i'\in [\ell],k\in [n]}\lambda
_{(R,i',i,k)}F_{k,j}+ \sum_{i'\in [\ell],k\in [n]}\lambda
_{(L,i'+1,i,k)}F_{k,j}\geq &\\
&F_{x(z_i),j}(\sum_{i'\in [\ell+1],k\in [n]}\lambda
_{(R,i,i',k)}+\sum_{i'\in [\ell+1],k\in [n]}\lambda
_{(L,i+1,i',k)}) && \forall  j \in [m], i\in [\ell]\\
\end{array}
\end{equation}

%We rearrange the objective and constraints using that %by definition $\lambda_{(L,0,{i'},k)}=\lambda_{(L,n+1,{i'},k)}=0$ $\forall i',k\in [n]$, and that 
%$x(z_i)=n-i$ for a monotone piecewise constant allocation rule with breakpoints $z_1\leq ...\leq z_n$. The objective can be written as: 
%%\textstyle \sum_{i\in [n]\setminus \{0\},i',k\in [n]}\lambda_{(L,i,i',k)} \gamma_{n-i+1}z_i+\sum_{i,i',k\in [n]}\lambda_{(R,i,i',k)} \gamma_{n-i}z_i &-&\\
%\textstyle (\sum_{i\in [n]\setminus \{0\},i',k\in [n]}\lambda_{(L,i,i',k)}\gamma_{k}z_i+\sum_{i,i',k\in [n]}\lambda_{(R,i,i',k)}\gamma_{k}z_i).\nonumber
%\end{eqnarray}
By strong duality, \eqref{LP:cont} is feasible if and only if there are no variable assignments $\lambda$ such that the objective of \eqref{Dual2} is strictly positive and the constraints in \eqref{Dual2} satisfy. Lemma~\ref{lemma:normalize-cont-char} in  Appendix~\ref{appx:Char} shows that the existence of $\lambda$ implies the existence of a feasible solution $\lambda'$ with strictly positive objective in which $\sum_{i',k\in [n]}\lambda'_{(R,i,i',k)}=\sum_{i',k\in [n]}\lambda'_{(L,i+1,i',k)}=1$ $\forall i \in [\ell]$.

It can be verified that when the above holds, strictly positive objective value simplifies to $\sum_{i\in [\ell]}(\gamma_{x(z_i)} z_i+\gamma_{x(z_{i})} z_{i+1}) >\sum_{i\in [\ell],i'\in [\ell+1],k\in [n]}\lambda_{(R,i,i',k)}\gamma_{k} z_i+\sum_{i\in [\ell],i'\in [\ell+1],k \in [n]}\lambda_{(L,i+1,i',k)}\gamma_{k} z_{i+1}$  and the constraints simplify to $\sum_{i'\in [\ell],k\in [n]}\lambda_{(R,i',i,k)}F_{k,j}+ \sum_{i'\in [\ell],k\in [n]}\lambda
_{(L,i'+1,i,k)}F_{k,j}\geq 
2F_{x(z_i),j}$ $\forall  j \in [m], i\in [\ell]$. 
\end{proof}
% TODO - Better name for this subsec
% \subsection{Discussion: two folded implemntability}
% \add{
% -- add that the marginal assumption implies implemntability of all actions

%  similarities/differences: monotonicity necessary, but no longer sufficient; if it's monotone and implementable, then expected payments as in Myerson's lemma (but need to break down to outcomes and this may not be feasible, and the intuitive reason for the gap)
 
%  Not exactly standard for two reasons: (1) procurement (2) multi-level -- is there a reference for this?}

\paragraph{\bf Algorithmic aspects.}

The LP-based proof of the characterization for the continuous setting again leads to efficient algorithms for checking implementability and finding optimal payments for an implementable allocation rule.
%polynomial-time reduction specified in Lemma~\ref{lemma:reduction}, the observation that the contract for the discrete instance (in Lemma~\ref{lemma:reduction}) can be translated to a contract for the continuous instance, and the LP-based proof of Theorem~\ref{thm:char-cont}.
 
\begin{corollary}\label{cor:poly-LP-cont}
For continuous types, the problem of determining whether or not a piecewise constant allocation rule is implementable is solvable in time $O(poly(n,m))$. Moreover, if an allocation rule is implementable, then we can find optimal payments for this allocation rule in time $O(poly(n,m))$.
\end{corollary}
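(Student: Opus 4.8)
The plan is to mirror the proof of Corollary~\ref{cor:poly-LP}, using the continuous characterization machinery in place of the discrete one. Suppose the allocation rule is given succinctly through its at most $n+2$ breakpoints $z_0<\dots<z_{\ell+1}$ together with the action on each piece. First I would check in $O(n)$ time that it is \emph{monotone} piecewise constant, i.e.\ that $x(z_i)\ge x(z_{i+1})$ for all $i$; if not, then by Proposition~\ref{prop:impl-is-mon-cont} it is not implementable and we are done. Otherwise, by Lemma~\ref{lemma:reduction} the rule is implementable with respect to $\Con(F,\gamma,C)$ if and only if the finite linear program \eqref{LP:cont} is feasible. That LP has only $|Z|\,m=O(nm)$ payment variables $\{t^{z_i}_j\}$ and $O(n^3+nm)$ constraints (the ``right'' and ``left'' families of IC inequalities together with nonnegativity), and all of its coefficients are simple polynomial-size expressions in the given data. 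Deciding feasibility of an explicitly given LP of this size takes time $O(\mathrm{poly}(n,m))$ by any polynomial-time LP method, which yields the first part of the statement.

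For the optimal payments I would, as in the discrete case, replace the trivial objective of \eqref{LP:cont} by the principal's expected payment and re-solve. By the backward direction of Lemma~\ref{lemma:reduction}, a feasible solution $\{t^{z_i}_j\}$ of \eqref{LP:cont} extends to the piecewise constant payment rule $\hat{t}^c=t^{z_i}$ for $c\in[z_i,z_{i+1})$, which is IC for $\Con(F,\gamma,C)$ and has expected payment
\[
\mathbb{E}_{c\sim G}\!\big[T^{c}_{x(c)}\big]
=\sum_{i=0}^{\ell}\big(G(z_{i+1})-G(z_i)\big)\,T^{z_i}_{x(z_i)}
=\sum_{i=0}^{\ell}\big(G(z_{i+1})-G(z_i)\big)\sum_{j}F_{x(z_i),j}\,t^{z_i}_j,
\]
which is a linear functional of the LP variables. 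Minimizing this functional over the feasible region of \eqref{LP:cont} is again an LP of the size above, hence solvable in time $O(\mathrm{poly}(n,m))$, and its optimal solution gives --- via the extension $\hat{t}$ --- an IC payment rule of minimum expected payment.

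The step that needs the most care is justifying that this piecewise constant payment rule is \emph{optimal} among all (possibly non--piecewise-constant) IC payment rules for the continuous instance, i.e.\ that the continuous payment-minimization problem has the same value as the finite LP above. The clean way to see this is via the Myerson-style payment identity: for any IC contract the agent's interim utility $u(c)=T^c_{x(c)}-\gamma_{x(c)}c$ is a convex function of $c$ whose derivative (where it exists) is $-\gamma_{x(c)}$, so
\[
u(c)=u(z_{\ell+1})+\int_c^{z_{\ell+1}}\gamma_{x(s)}\,ds,
\qquad
T^c_{x(c)}=\gamma_{x(c)}c+u(z_{\ell+1})+\int_c^{z_{\ell+1}}\gamma_{x(s)}\,ds .
\]
In particular $T^c_{x(c)}$ is constant on each interval $[z_i,z_{i+1})$ and depends on the payment rule only through the single number $u(z_{\ell+1})$, so the expected payment of \emph{any} IC payment rule equals the displayed functional evaluated at its restriction to $Z$ --- which, by the forward direction of Lemma~\ref{lemma:reduction}, is feasible for \eqref{LP:cont}. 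Hence minimizing over \eqref{LP:cont} attains the true optimum and no non-piecewise-constant rule can beat it. (This payment identity for implementable rules is exactly the observation used in Section~\ref{sec:applications}; alternatively one can bypass it and argue directly from the IC inequalities between each breakpoint $z_i$ and the types in the interior of its interval that $T^c_{x(c)}$ is constant there.) A minor additional assumption, used throughout, is that $G$ is presented so that the masses $G(z_{i+1})-G(z_i)$ are available.
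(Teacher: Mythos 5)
Your proposal is correct and takes essentially the same route as the paper: check monotone piecewise constancy (Proposition~\ref{prop:impl-is-mon-cont}), reduce via Lemma~\ref{lemma:reduction} to the finite program \eqref{LP:cont} with $O(nm)$ variables and $O(n^3)$ constraints, and solve it for feasibility and then with the expected-payment objective, all in time $O(\mathrm{poly}(n,m))$. Your extra step justifying that the piecewise constant extension is optimal among \emph{all} IC payment rules is a sensible explicit addition and is consistent with the payment identity machinery the paper develops in Lemma~\ref{lemma:myerson}.
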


\section{Applying the Characterization}\label{sec:applications}

In this section we present two applications of the characterization to optimal contracts. First, we obtain a polynomial-time algorithm for computing the optimal contract when the number of actions is constant. Second, we give a non-trivial example of when a ``divide-and-conquer'' approach to computing the optimal contract can succeed: We show that separate treatment of the two types of IC constraints --- those that address private types and those that take care of hidden action --- works for the case of uniformly distributed costs. The idea is to initially ignore the IC constraints introduced by hidden action; this brings us into Myerson territory and we can optimize over monotone allocation rules. In a second step we then verify that for uniform distributions, this optimal monotone rule happens to satisfy the IC constraints due to hidden action.

%First optimize over the monotone allocation rules, to find the optimal rule (ignoring the hidden action part), and then argue that the resulting rule satisfies 

\subsection{A Poly-Time Algorithm for a Fixed Number of Actions}
\label{sub:application1}

We give a polynomial-time algorithm for computing the optimal contract given discrete types $C$ and a constant number of actions $n$. 
%
%In this short section we focus on the discrete-types setting, under the realistic restriction that the number of actions is constant. (For example, the number of ways an influencer can promote a certain brand is likely to be held constant.) We give a polynomial-time algorithm for computing the optimal menu of contracts. 
This positive result is in sharp contrast to the multi-parameter model, where computing the best contract for a constant number of actions is APX-hard \cite[e.g.,][]{guruganesh2020contracts,CastiglioniM021}. 
In our single-parameter model, Proposition~\ref{prop:impl-is-mon-dis} allows us to search over \emph{monotone} allocation rules, reducing the complexity from $n^{|C|}$ (all allocations of actions to the $|C|$ types) to $|C|^n$.
We evaluate the rules by finding the optimal payments for each rule via Corollary~\ref{cor:poly-LP}.
%even computing the best single contract (rather than menu of contracts) with constant number of actions is APX-hard.
%\inbal{Why is this ``even''? Also, in the next proof, does monotonicity and discreteness really play a role?} \pd{I double checked and Guruganesh et al.~show APX hardness for both versions. Discreteness is used through the dependence on $|C|$, and monotonicty to get $|C|^n$ rather than $n^{|C|}$. Is it clear enough?}
%\inbal{I think there's probably a version for continuous too but I think it's OK to leave as is.} \tal{Dividing into grids, perhaps it will give an appx. Do we want to mention this?}

\begin{theorem}
\label{thm:const-n}
The problem of computing the optimal contract for discrete types is solvable in polynomial time for a constant number of actions $n$.
\end{theorem}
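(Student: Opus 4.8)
The plan is to give a brute-force-over-structure algorithm, leveraging the two structural results already established: Proposition~\ref{prop:impl-is-mon-dis} (every implementable rule is monotone) and Corollary~\ref{cor:poly-LP} (given a candidate allocation rule, we can check implementability and compute optimal payments in time polynomial in $n$, $m$, $|C|$). The key observation is that a monotone allocation rule $x:C\to[n]$ is completely determined by, for each of the $n$ ``thresholds'', where it drops from one action index to a lower one. Since the types are ordered and the image of $x$ is contained in $[n]$, a monotone rule is a non-increasing step function on the $|C|$ ordered types, and such a function is specified by choosing at which of the $|C|$ types each of the at most $n$ ``value changes'' occurs. Hence the number of monotone allocation rules is at most $\binom{|C|+n}{n} = O(|C|^n)$, which is polynomial in $|C|$ when $n$ is constant.

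The steps, in order, are as follows. First I would invoke Proposition~\ref{prop:impl-is-mon-dis} to restrict attention to monotone allocation rules: any optimal contract has an implementable, hence monotone, allocation rule, so searching only over monotone rules loses nothing. Second, I would enumerate all monotone rules: sort the types $c_1 < \dots < c_{|C|}$, and enumerate all non-increasing sequences $(x(c_1),\dots,x(c_{|C|}))$ with values in $[n]$; there are at most $\binom{|C|+n}{n}$ such sequences, which is $O(|C|^n) = \mathrm{poly}(|C|)$ for fixed $n$. Third, for each enumerated rule $x$, apply Corollary~\ref{cor:poly-LP}: in time $O(\mathrm{poly}(n,m,|C|))$ check whether $x$ is implementable, and if so compute the payment rule minimizing the expected transfer $\mathbb{E}_{c\sim G}[T^c_{x(c)}]$ subject to feasibility of \eqref{LP1} — equivalently maximizing the principal's expected utility $\mathbb{E}_{c\sim G}[R_{x(c)} - T^c_{x(c)}]$, since $\mathbb{E}_{c\sim G}[R_{x(c)}]$ is fixed once $x$ is fixed. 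Fourth, output the rule (with its optimal payments) achieving the largest principal utility over all implementable enumerated rules. Correctness follows because the optimal contract's allocation rule is among those enumerated, and for that rule we compute payments that are optimal given the rule. The total running time is the number of rules times the per-rule cost, i.e.\ $O(|C|^n \cdot \mathrm{poly}(n,m,|C|))$, which is polynomial for constant $n$.

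I do not expect a substantial obstacle here: the theorem is essentially a corollary of the two cited results, and the only real content is the counting argument bounding the number of monotone rules by $O(|C|^n)$. The one point deserving care is making sure the optimization over payments for a fixed rule is indeed an LP with a linear objective over the feasibility polytope of \eqref{LP1} — this is immediate since $T^c_{x(c)} = \sum_j F_{x(c),j} t^c_j$ is linear in the payment variables and the feasible region from \eqref{LP1} is a polytope, so Corollary~\ref{cor:poly-LP}'s second part directly applies. A secondary remark worth including: if $|C|$ is itself large but the rule is succinctly represented by its $O(n)$ breakpoints (as in the continuous case via Corollary~\ref{cor:poly-LP-cont}), the same enumeration works with $O(|C|^n)$ replaced by the number of breakpoint placements, keeping the dependence on $n$ and $m$ polynomial; but for the stated theorem the straightforward $O(|C|^n)$ bound suffices.
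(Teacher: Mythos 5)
Your proposal is correct and follows essentially the same route as the paper's proof: restrict to monotone rules via Proposition~\ref{prop:impl-is-mon-dis}, bound their number by $\binom{|C|+n}{|C|}=O(|C|^n)$ (the paper phrases this via an injection into multisets of allocated actions, you via counting non-increasing step functions, which is the same count), and then brute-force with the LP of Corollary~\ref{cor:poly-LP} to test implementability and compute revenue-optimal payments for each rule. No gaps; the remark that the payment optimization is a linear objective over the feasibility polytope of \eqref{LP1} is exactly what Corollary~\ref{cor:poly-LP} already provides.
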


%Given Proposition~\ref{prop:impl-is-mon-dis} and Corollary~\ref{cor:poly-LP}, Theorem~\ref{thm:const-n} is straightforward to prove (the proof appears in the full version for completeness). 

\begin{proof}%[Proof of Theorem~\ref{thm:const-n}]
According to Proposition~\ref{prop:impl-is-mon-dis}, to find the optimal implementable allocation rule it suffices to optimize over monotone rules. We bound the number of monotone allocation rules by the number of combinations of $|C|$ actions from $[n]$ (with repetition), by showing the each allocation rule uniquely maps to such a set. Let $x$ be a monotone allocation rule, take $S_x$ as the multiset of actions allocated in $x$. {Any different allocation rule $x'$ induces $S_{x}\neq S_{x'}$. To see this, let $c$ be the smallest cost for which $x(c)\neq  x'(c)$, assuming $x(c)> x'(c)$ without loss of generality. Observe that $x(c)$ will not be allocated for higher costs than $c$ in $x'$ by monotonicity of the allocation $x'$. This implies that the number of $x(c)$ instances in $S_{x}$ is strictly higher than that of $S_{x'}$, so $S_x\neq S_{x'}$.} 
%We give a bijective proof that the number of monotone allocation rules is equal to the number of $|C|$-multisubsets of the action set $[n]$. Let $S$ be such a subset, $S$ uniquely induces a monotone allocation rule: index the actions in $S$ such that $\gamma_{1}\leq ...\leq \gamma_{{|C|}}$, index types in $C$ such that $c_1> ...> c_{|C|}$. Then, allocate action $k$ to type $k$. 
It is known that the number of combinations with repetition of $|C|$ not-necessarily distinct elements from a set of size $n+1$ is $\binom{|C|+n}{|C|}$ (see e.g.~\citep{combi-wiki}). By definition, and then by reorganizing,
$$
\textstyle \binom{|C|+n}{|C|}=\frac{(|C|+n)!}{|C|!\cdot n!}=\frac{|C|!}{|C|!}\cdot\frac{ \prod_{i=1}^{n}(|C|+i)}{ n!}=O(|C|^n),
$$
for constant $n$. Thus, the number of monotone allocation rules in polynomial in $|C|$. Recall Corollary~\ref{cor:poly-LP}, stating that it is possible to determine whether or not an allocation rule is implementable, and to give optimal contract if so, in polynomial time. Thus, using the brute-force approach of computing the expected revenue (reward minus payments of an optimal contract) of all monotone allocation rules yields a polynomial time procedure for computing the optimal allocation. This completes the proof.
\end{proof}

\subsection{Optimal Contract for Uniformly-Distributed Costs}
\label{sub:application2}

%\inbal{need to make sure that we make clear what here applies to randomized and what to deterministic. The discussion of randomized allocation rules can be only here with a comment in sec 2 (and maybe say we discuss randomized contracts -- as opposed to rules -- in an appendix)}

In this section we explore a Myersonian approach to optimal contract design for continuous-type settings: optimizing the expected revenue over all monotone allocation rules, in hopes that for the \emph{revenue-optimal} such rule, monotonicity turns out to be sufficient for implementability. %So in a sense, while not every monotone allocation rule is implementable as per our example, it could still be true that the optimal one is implementable. 
In Theorem~\ref{thm:uniform} we show this to be the case for uniformly distributed costs.

\begin{assumption}\label{assumption:high-cost-type}
Consider the highest-cost type $\bar{c}$. For every action $ 1\leq i\le n$, the expected reward is strictly less than this type's cost, i.e.,  $\gamma_i \bar{c}>R_i$. 
\end{assumption}

Intuitively, the above assumption implies that when facing type~$\bar{c}$, the only action the principal can incentivize without losing herself is the zero-cost action.
This assumption will come in handy in Section~\ref{sub:opt-cont-uniform}, where we further discuss it.

\subsubsection{Generalization of the Myerson Toolbox to Hidden Action}

%\pd{Understand why we make this assumption / explain it here.}

We start by generalizing several classic results by \citet{Myerson81} to our hidden action model with continuous types. Like Myerson's theory, the results of this section apply to \emph{randomized} allocation rules, defined as follows:

\begin{definition}\label{def:rand-alloc}
    A \emph{randomized} allocation rule $x:C\to\Delta([n])$ is a mapping from a cost $c$ to a distribution over recommended actions, where $x_k(c)$ denotes the probability 
    of recommending action $k\in [n]$.
\end{definition}

Overloading notation to accommodate for randomization, denote by $R_{x(c)}=\sum_{k\in [n]}x_k(c) R_{k}$ the expected reward of $x$ given type $c$, and by $\gamma_{x(c)}=\sum_{k\in [n]}x_k(c) \gamma_k$ the expected effort. As before, the payment rule maps from cost $c$ to a vector of payments, which can now be seen as expected payments over the random coin tosses. Formally, let $T_{x(c)}^c = \sum_{i \in [n]}\sum_{j \in [m]} x_i(c) \cdot  F_{i,j} \cdot t^c_j$ denote the expected payment over both random actions and random outcomes.

\begin{lemma}[Payment identity]
\label{lemma:myerson}
Let $C=[0,\bar{c}]$ be a continuous type space. Let  $x:C\to\Delta([n])$ be a randomized allocation rule. Denote by  $\gamma'_{x(c)}$ the derivative of $\gamma_{x(c)}$ with respect to $c$. Then if $x$ is implementable, for every payment rule $t$ that implements $x$,
\begin{eqnarray*}
\textstyle T^c_{x(c)}=T^{\bar{c}}_{x(\bar{c})}-\int_{c}^{\bar{c}} z\gamma'_{x(z)}  \,dz & \forall c\geq 0.
\end{eqnarray*}
\end{lemma}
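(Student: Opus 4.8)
The plan is to mimic Myerson's classic envelope argument, adapted to the fact that here an agent's ``allocation'' is a distribution over actions and the quantity playing the role of the allocation in the single-crossing inequalities is the expected effort $\gamma_{x(c)} = \sum_k x_k(c)\gamma_k$. Fix an implementable rule $x$ and a payment rule $t$ that implements it, and define the indirect utility $u(c) := T^c_{x(c)} - \gamma_{x(c)} c$, the expected utility of a truthful agent of type $c$. The first step is to extract, from the IC constraints, the standard two-sided inequality: for any two types $c, c'$, truthfulness of $c$ gives $u(c) \ge T^{c'}_{x(c')} - \gamma_{x(c')} c$, and truthfulness of $c'$ gives the symmetric inequality; subtracting and rearranging (exactly as in \eqref{eq:proof-menumonotonicity-eq1}--\eqref{eq:proof-menumonotonicity-eq3} in the proof of Proposition~\ref{prop:impl-is-mon-cont}, which already covers the deterministic case and carries over verbatim with the overloaded notation) yields
\[
(\gamma_{x(c)} - \gamma_{x(c')})\,c' \;\ge\; T^c_{x(c)} - T^{c'}_{x(c')} \;\ge\; (\gamma_{x(c)} - \gamma_{x(c')})\,c .
\]
From this one reads off two things: first, $\gamma_{x(\cdot)}$ is non-increasing (monotonicity, already known), and second, $u$ is convex — being a pointwise supremum over $c'$ of the affine-in-$c$ functions $c \mapsto T^{c'}_{x(c')} - \gamma_{x(c')} c$ — hence $u$ is absolutely continuous on $[0,\bar c]$ and differentiable a.e.

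The second step is the envelope identity. At any point $c$ where $u$ is differentiable, the supremum is attained at $c'=c$, so the derivative of the objective in $c'$ vanishes there and $u'(c) = -\gamma_{x(c)}$ (a.e.); alternatively, divide the displayed inequality by $c'-c$ and take $c' \to c$ from both sides to sandwich the difference quotient of $T^{\cdot}_{x(\cdot)} \,(=u + \gamma_{x(\cdot)}\cdot\mathrm{id})$ and conclude $\tfrac{d}{dc}\!\left(T^c_{x(c)}\right)$ exists a.e.\ and equals $c\,\gamma'_{x(c)}$ wherever $\gamma_{x(\cdot)}$ is differentiable (recall $\gamma_{x(c)}$ is monotone hence differentiable a.e.). Integrating $u' = -\gamma_{x(\cdot)}$ from $c$ to $\bar c$ gives $u(\bar c) - u(c) = -\int_c^{\bar c}\gamma_{x(z)}\,dz$; substituting $u(c) = T^c_{x(c)} - \gamma_{x(c)}c$ and integrating by parts, $\int_c^{\bar c}\gamma_{x(z)}\,dz = \bar c\,\gamma_{x(\bar c)} - c\,\gamma_{x(c)} - \int_c^{\bar c} z\,\gamma'_{x(z)}\,dz$, and the $\gamma_{x(c)}c$ terms cancel to leave exactly $T^c_{x(c)} = T^{\bar c}_{x(\bar c)} - \int_c^{\bar c} z\,\gamma'_{x(z)}\,dz$, as claimed. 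Since $\gamma_{x(\cdot)}$ is monotone, it is of bounded variation and $\gamma'$ is integrable, so the integral is well-defined even though $\gamma_{x(\cdot)}$ may have jumps — at jump points the contribution should be read as a Stieltjes integral $\int z\,d\gamma_{x(z)}$, matching the piecewise-constant picture of Figure~\ref{fig:Fig2}.

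The main obstacle I anticipate is handling the non-smoothness carefully: $\gamma_{x(\cdot)}$ is generically piecewise constant with finitely many downward jumps, so ``$\gamma'_{x(z)}$'' is a sum of negative Dirac masses at the breakpoints, and the integration-by-parts step must be justified in the Stieltjes/bounded-variation sense rather than the naive Riemann sense. One must also verify that the envelope theorem applies — i.e.\ that $u$, as a sup of affine functions, is genuinely differentiable a.e.\ with derivative $-\gamma_{x(c)}$ — and that at the breakpoints the identity still pins down $T^c_{x(c)}$ (using Condition~\eqref{item:disc-constraint}-type continuity of the utility across a breakpoint, which the earlier reduction lemma guarantees). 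Everything else is routine: the two-sided IC inequality, convexity/absolute continuity of $u$, and the final cancellation are all standard Myersonian manipulations with $\gamma_{x(\cdot)}$ substituted for the allocation probability.
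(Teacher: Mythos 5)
Your proof is correct and starts from the same engine the paper uses --- the two-sided incentive inequality
\[
(\gamma_{x(c)} - \gamma_{x(c')})\,c' \;\ge\; T^c_{x(c)} - T^{c'}_{x(c')} \;\ge\; (\gamma_{x(c)} - \gamma_{x(c')})\,c
\]
--- but the two arguments diverge in how they turn this into the payment identity. The paper divides by $c'-c$, sends $c'\downarrow c$, and concludes directly that $\tfrac{d}{dc}T^c_{x(c)} = c\,\gamma'_{x(c)}$, then integrates once. You instead pass through the indirect utility $u(c)=T^c_{x(c)}-\gamma_{x(c)}c$, note it is a supremum of affine functions of $c$ (hence convex, absolutely continuous, differentiable a.e.), use the envelope/subgradient argument to get $u'(c)=-\gamma_{x(c)}$ a.e., integrate, and recover the identity by an integration by parts. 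The algebra cancels exactly as you claim. What your detour buys you is a cleaner handling of the regularity issues the paper glosses over: $\gamma_{x(\cdot)}$ is generically piecewise constant, so the paper's ``take the limit to obtain $T'=c\gamma'$'' step is not a pointwise statement, and the notation $\gamma'_{x(z)}$ in the final integral has to be read in the bounded-variation/Stieltjes sense at the breakpoints --- which you make explicit. (One small point worth articulating in the randomized case: at a differentiable point of $u$, the subgradient $-\gamma_{x(c)}$ is unique, so any action in the support of $x(c)$ that is tied in utility must share the same $\gamma_k$; this is why the averaged quantity $\gamma_{x(c)}=\sum_k x_k(c)\gamma_k$ is the correct slope.) Overall, both are standard Myerson-style envelope derivations; yours is the textbook route via $u$ and is the more careful of the two.
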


\begin{proof} Let $x$ be an implementable allocation rule. From similar arguments as in the proof of Proposition~\ref{prop:impl-is-mon-cont}, we have that for every contract $t$ that implements $x$, 
\begin{eqnarray*}\label{eq:proof-menumonotonicity-eq4}
\textstyle(\gamma_{{x}(c)} -\gamma_{{x}(c')}) c' \geq  T^{c}_{x(c)}-T^{c'}_{x(c')}\geq (\gamma_{{x}(c)} -\gamma_{{x}(c')}) c & \forall c<c'\in C.
\end{eqnarray*}
Dividing the above by $c'-c$ we get
\begin{eqnarray*}
\textstyle-\frac{[\gamma_{{x}(c')} -\gamma_{{x}(c)}]}{c'-c}\cdot c' \geq  -\frac{T^{c'}_{x(c')}-T^{c}_{x(c)}}{c'-c}\geq -\frac{[\gamma_{{x}(c')} -\gamma_{{x}(c)}]}{c'-c}\cdot c & \forall c<c'.
\end{eqnarray*}
Taking the limit as $c' \downarrow c$ yields the following constraint: $T'^c_{x(c)} =\gamma'_{x(c)}\cdot c$ $\forall c\in C.$
Thus, $\int_{c}^{\bar{c}} T'^c_{x(c)} \,dc =\int_{c}^{\bar{c}} z\gamma'_{x(z)}\,dz$ $\forall c\in C.$ 
%\pd{for bounded support distributions we have to be a bit more careful here; assuming that the support is $[0,1]$ I think we would just keep the $T^1_{x(1)}$ term so that $T_{x(c)}^c = T^1_{x(1)} - \int_{c}^{1} c \gamma'_{x(c)} \,dc$, agent IR then implies that $T_{x(1)}^1 \geq \gamma_{x(1)} \cdot 1 = \gamma_{x(1)}$, for the principal it is then optimal to choose $T^1_{x(1)} = \gamma_{x(1)}$ as this will maximize her expected reward} \pd{removed a prime from the first $T$} Recall we assume that $T^{\infty}_{x(\infty)}=0$, i.e., when $c=\infty$ the no-effort action is implemented and the payment is zero. Thus, $  
%T^{c}_{x(c)}\mid_{c}^{\infty} =0-T^{c}_{x(c)}=\int_{c}^{\infty} \gamma'_{x(c)}\cdot c \,dc$ $\forall c\geq 0.$ 
%$  
Thus, $T^{c}_{x(c)}\mid_{c}^{\bar{c}} =T^{\bar{c}}_{x(\bar{c})}-T^{c}_{x(c)}=\int_{c}^{\bar{c}} z\gamma'_{x(z)}\,dz$ $\forall c\geq 0.$ This implies that $T^{c}_{x(c)}=T^{\bar{c}}_{x(\bar{c})}-\int_{c}^{\bar{c}} z\gamma'_{x(z)} \,dz$ $\forall c\geq 0.$
\end{proof}

An appropriate variant of the classic result that expected revenue equals expected virtual welfare applies in our model.
We first define for completeness the notion of virtual costs (the ``reverse'' version of virtual values by \cite{Myerson81}): 
%We start by defining virtual costs, which, as suggested by their namesake highly resemble virtual values introduced by \citet{Myerson81}. 

\begin{definition}
 Let $G$ be a distribution over a continuous type set $C$ with density $g$. Given cost $c\in C$, the \emph{virtual cost} is $$\varphi(c)=c+\frac{G(c)}{g(c)}.$$
\end{definition}

We can then define the expected revenue of a contract as $\mathbb{E}_{c \sim G}[R_{x(c)}-T^c_{x(c)}]$, and its expected virtual welfare as $\mathbb{E}_{c \sim G}[R_{x(c)} - \varphi(c) \gamma_{x(c)}]$.

\begin{proposition}
\label{prop:rev-wel} 
For continuous types, the expected revenue of an IC contract $(x,t)$ is equal to its expected virtual welfare minus the expected utility of type $\bar{c}$:
$$
\mathbb{E}_{c\sim G}[R_{x{(c)}}-T^c_{x(c)}]=\mathbb{E}_{c\sim G}[R_{x{(c)}}-\varphi(c) \gamma_{x{(c)}}]-(T^{\bar{c}}_{x(\bar{c})}-\gamma_{x(\bar{c})}\bar{c}).
$$
\end{proposition}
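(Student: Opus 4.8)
The plan is to follow the standard Myerson revenue-equivalence argument, now powered by the payment identity of Lemma~\ref{lemma:myerson} rather than by the classical one. Start from the definition of expected revenue, $\mathbb{E}_{c\sim G}[R_{x(c)}-T^c_{x(c)}]$, and substitute the closed form for $T^c_{x(c)}$ given by Lemma~\ref{lemma:myerson}, namely $T^c_{x(c)}=T^{\bar c}_{x(\bar c)}-\int_c^{\bar c} z\,\gamma'_{x(z)}\,dz$. This leaves us with $\mathbb{E}_{c\sim G}[R_{x(c)}] - T^{\bar c}_{x(\bar c)} + \mathbb{E}_{c\sim G}\!\left[\int_c^{\bar c} z\,\gamma'_{x(z)}\,dz\right]$. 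The only real work is to rewrite the last term as an expectation of $\varphi(c)\gamma_{x(c)}$ up to the correction term $\gamma_{x(\bar c)}\bar c$.

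The key step is the interchange-of-integration (Fubini) manipulation: write $\mathbb{E}_{c\sim G}\!\left[\int_c^{\bar c} z\,\gamma'_{x(z)}\,dz\right]=\int_0^{\bar c} g(c)\int_c^{\bar c} z\,\gamma'_{x(z)}\,dz\,dc$, swap the order of the two integrals so that $z$ ranges over $[0,\bar c]$ and $c$ over $[0,z]$, and use $\int_0^z g(c)\,dc = G(z)$ to obtain $\int_0^{\bar c} G(z)\,z\,\gamma'_{x(z)}\,dz$. Then integrate by parts in $z$: with $u=z\,G(z)$ and $dv=\gamma'_{x(z)}\,dz$, this becomes $\big[z\,G(z)\,\gamma_{x(z)}\big]_0^{\bar c} - \int_0^{\bar c}\gamma_{x(z)}\,\frac{d}{dz}\!\big(zG(z)\big)\,dz$. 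Since $G(\bar c)=1$ and the boundary term at $0$ vanishes (as $z\to 0$), the bracket equals $\bar c\,\gamma_{x(\bar c)}$, and $\frac{d}{dz}(zG(z)) = G(z)+z\,g(z) = g(z)\big(z+\tfrac{G(z)}{g(z)}\big)=g(z)\,\varphi(z)$. Hence the term equals $\bar c\,\gamma_{x(\bar c)} - \int_0^{\bar c} g(z)\,\varphi(z)\,\gamma_{x(z)}\,dz = \bar c\,\gamma_{x(\bar c)} - \mathbb{E}_{c\sim G}[\varphi(c)\,\gamma_{x(c)}]$.

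Putting the pieces together, the expected revenue equals $\mathbb{E}_{c\sim G}[R_{x(c)}] - T^{\bar c}_{x(\bar c)} + \bar c\,\gamma_{x(\bar c)} - \mathbb{E}_{c\sim G}[\varphi(c)\,\gamma_{x(c)}] = \mathbb{E}_{c\sim G}[R_{x(c)}-\varphi(c)\gamma_{x(c)}] - (T^{\bar c}_{x(\bar c)}-\gamma_{x(\bar c)}\bar c)$, which is exactly the claimed identity. The main obstacle — modest here — is justifying the Fubini swap and the integration by parts given that $\gamma_{x(\cdot)}$ is only piecewise constant (so $\gamma'_{x(z)}$ is a sum of Dirac masses at the breakpoints $z_i$ with weights $\gamma_{x(z_i)}-\gamma_{x(z_{i-1})}$); one should either argue via the monotone piecewise-constant structure directly (summing over the finitely many breakpoint jumps and using summation by parts, exactly as in the discrete Myerson proof) or invoke the standard convention that the integral-by-parts formula holds for functions of bounded variation. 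Either way the boundary evaluations and the vanishing at $0$ (recall $\gamma_0=0$ and $C=[0,\bar c]$) go through cleanly, and the argument extends verbatim to randomized allocation rules since $\gamma_{x(c)}$ and $R_{x(c)}$ were redefined in expectation over the randomization.
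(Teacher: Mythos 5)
Your proposal is correct and follows essentially the same route as the paper's proof: substitute the payment identity of Lemma~\ref{lemma:myerson}, reverse the order of integration, integrate by parts with $u(z)=zG(z)$ and $v'(z)=\gamma'_{x(z)}$, and use $G(0)=0$, $G(\bar{c})=1$ to produce the virtual-cost term $\mathbb{E}_{c\sim G}[\varphi(c)\gamma_{x(c)}]$ and the boundary correction $\gamma_{x(\bar{c})}\bar{c}$. The only immaterial differences are that the paper manipulates the expected payment (then invokes linearity of expectation) rather than the revenue directly, and that you add an explicit caveat about justifying Fubini and integration by parts for the piecewise-constant (bounded-variation) $\gamma_{x(\cdot)}$, a point the paper passes over silently.
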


\begin{proof}
We show that the expected payment is equal to the expected virtual cost, i.e., $\mathbb{E}_{c\sim G}[T^c_{x(c)}]$ $=T^{\bar{c}}_{x(\bar{c})}-\gamma_{x(\bar{c})}\bar{c}+\mathbb{E}_{c\sim G}[\varphi(c) \gamma_{x(c)}]$. By linearity of expectation this suffices to prove the proposition. By the expected payment formula in Lemma~\ref{lemma:myerson},
%\begin{eqnarray*}
%\textstyle\mathbb{E}_{c\sim G}[T^c_{x(c)}]=\int_{0}^{\infty}T^c_{x(c)}\cdot g(c) \dd c =\int_{0}^{\infty}[\int_{c}^{\infty} (-\gamma'_{x(z)})\cdot z  \dd z]\cdot g(c) \dd c. 
%\end{eqnarray*}
%% PD: adjusted formula
\begin{eqnarray*}
\textstyle\mathbb{E}_{c\sim G}[T^c_{x(c)}]=\int_{0}^{\bar{c}}T^c_{x(c)} g(c) \dd c =T^{\bar{c}}_{x(\bar{c})}G(\bar{c}) + \int_{0}^{\bar{c}}[-\int_{c}^{\bar{c}} z\gamma'_{x(z)}  \dd z] g(c) \dd c. 
\end{eqnarray*}
Reversing the integration order,
%\begin{eqnarray*}
%\textstyle\int_{0}^{\infty}[\int_{0}^{z}g(c)  \dd c]\cdot (-\gamma'_{x(z)})\cdot z \dd z =
%\int_{0}^{\infty}G(z)\cdot z\cdot (-\gamma'_{x(z)}) \dd z.
%\end{eqnarray*}
%% PD: adjusted formula
\begin{eqnarray*}
{T^{\bar{c}}_{x(\bar{c})}G(\bar{c}) + \textstyle\int_{0}^{\bar{c}}-[\int_{0}^{z}g(c)  \dd c] z\gamma'_{x(z)} \dd z =
T^{\bar{c}}_{x(\bar{c})}G(\bar{c}) + \int_{0}^{\bar{c}}-G(z) \gamma'_{x(z)} z \dd z.}
\end{eqnarray*}
Using integration by parts, where $G(z)z=u(z)$ and $\gamma'_{x(z)}=v'(z)$
%\begin{eqnarray*}
%\textstyle\int_{0}^{\infty}\underbrace{G(z)\cdot z}_{u(z)}\cdot \underbrace{(-\gamma'_{x(z)})}_{v'(z)} \dd z = -G(z)\cdot z\cdot \gamma_{x(z)}\mid^{\infty}_{0}-\int_{0}^{\infty}(g(z)\cdot z+G(z))\cdot (-\gamma_{x(z)}) \dd z.
%\end{eqnarray*}
%% PD: adjusted formula
\begin{eqnarray*}
T^{\bar{c}}_{x(\bar{c})}G(\bar{c}) - \textstyle\int_{0}^{\bar{c}}{G(z) z}\cdot  {\gamma'_{x(z)}} \dd z = T^{\bar{c}}_{x(\bar{c})}G(\bar{c}) -G(z) z  \gamma_{x(z)}\mid^{\bar{c}}_{0}+\int_{0}^{\bar{c}}(g(z) z+G(z)) \gamma_{x(z)} \dd z.
\end{eqnarray*}
Since $G(\bar{c}) = 1$, $G(0) = 0$, the above equals to $T^{\bar{c}}_{x(\bar{c})}-\gamma_{x(\bar{c})}\bar{c}+\int_{0}^{\bar{c}}( z+\frac{G(z)}{g(z)}) \gamma_{x(z)} g(z) \dd z$. That is, 
\begin{eqnarray*}
T^{\bar{c}}_{x(\bar{c})}-\gamma_{x(\bar{c})}\bar{c}+\mathbb{E}_{c\sim G}[\varphi(c)\gamma_{x(c)}].
\end{eqnarray*}
This completes the proof.
\end{proof}

%\subsubsection{Implementability of the Virtual-Welfare Maximizing Allocation Rule for Uniform Costs}
\subsubsection{Finding the Optimal Contract for Uniform Costs}\label{sub:opt-cont-uniform}

According to Proposition \ref{prop:rev-wel}, the expected revenue of an IC contract is equal to the expected virtual welfare minus the (non-negative) expected utility of the highest type. 
Focusing on the first term and ignoring momentarily the second one, a natural candidate for the revenue-maximizing allocation rule is the one that maximizes virtual welfare.

\begin{definition}\label{def:welfare-max-alloc}
The \emph{virtual welfare-maximizing allocation rule} $x^*$ (among all randomized such rules) is given by the deterministic rule that chooses
\begin{eqnarray*}
x^*(c)=\arg\max_{i\in [n]}\{R_i-\varphi(c)\cdot \gamma_i\} && \forall c\in C.
\end{eqnarray*}
\end{definition}

%We can only hope for the revenue-maximizing  allocation rule to be implementable if it is monotone, i.e., $c' < c$ implies that $\gamma_{x^*(c')} \geq \gamma_{x^*(c)}$. A sufficient condition for \ta{monotonicity of $x^*$} is that the type distribution is regular.

%\pd{as a sanity check ---- a necessary condition for implementability is that the candidate allocation rule is monotone in the auction/procurement sense}

%\pd{this is the case for uniform, as it's regular (according the following standard regularity definition) for the auction/procurement auction setting}

%\begin{definition}
%A distribution $G$ is \emph{regular} if the corresponding virtual cost $\varphi(\cdot)$ function is nondecreasing. 
%\end{definition}

The following theorem shows that when the distribution over types is uniform, the virtual welfare maximizing allocation rule is implementable --- a non-trivial result in light of Proposition~\ref{prop:non-monotone}, which shows that monotonicity alone is insufficient. 

Furthermore, using Assumption~\ref{assumption:high-cost-type} this theorem shows that $x^*$ is implementable by a contract for which type $\bar{c}$'s expected utility is exactly $0$ (this is the reason we could ignore the second term above). Then,  Proposition~\ref{prop:rev-wel} implies optimality of $x^*$.

%Proposition~\ref{prop:rev-wel} completes the proof of the optimality of $x^*$.

\begin{theorem}
\label{thm:uniform}
Assuming~\ref{assumption:high-cost-type}, and uniform distribution over types $G=U[0,\bar{c}]$, the virtual welfare maximizing allocation rule $x^*$ is implementable by a payment rule $t$ for which $T^{\bar{c}}_{x^*(\bar{c})}-\gamma_{x^*(\bar{c})}\bar{c}$=0.
Contract $(x^*,t)$ thus maximizes expected revenue among all IC, limited liability contracts.
\end{theorem}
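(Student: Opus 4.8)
\textbf{Proof plan for Theorem~\ref{thm:uniform}.}

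The plan is to exploit the structure of the virtual welfare maximizer under the uniform distribution, where the virtual cost is $\varphi(c) = c + G(c)/g(c) = c + c = 2c$, so that $x^*(c) = \arg\max_{i}\{R_i - 2c\,\gamma_i\}$. This is an upper-envelope of linear functions of $c$, hence $x^*$ is monotone piecewise constant: as $c$ increases, $x^*(c)$ jumps down through a decreasing sequence of actions at breakpoints $0 = z_0 < z_1 < \dots < z_\ell < z_{\ell+1} = \bar{c}$, where consecutive active actions $i > i'$ tie at $R_i - 2z\gamma_i = R_{i'} - 2z\gamma_{i'}$, i.e. $2z = (R_i - R_{i'})/(\gamma_i - \gamma_{i'})$. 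First I would write down $x^*$ explicitly in this breakpoint form, and note that by Assumption~\ref{assumption:high-cost-type} the last active action on the interval ending at $\bar c$ is the zero-cost action $0$ (since for any $i \ge 1$, $R_i - 2\bar c\,\gamma_i < R_i - \gamma_i\bar c < 0 = R_0 - 2\bar c\,\gamma_0$), so $x^*(\bar c) = 0$ and $\gamma_{x^*(\bar c)} = 0$.

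Next I would construct an explicit candidate payment rule $t$ and verify it implements $x^*$. The natural choice is dictated by the payment identity (Lemma~\ref{lemma:myerson}): set the expected payment for type $c$ to $T^c_{x^*(c)} = -\int_c^{\bar c} z\,\gamma'_{x^*(z)}\,dz$, which is $\ge 0$ and makes type $\bar c$'s utility exactly $T^{\bar c}_{x^*(\bar c)} - \gamma_{x^*(\bar c)}\bar c = 0 - 0 = 0$, as required. Since within each interval $(z_i, z_{i+1})$ the allocation is the constant action $a_i := x^*(z_i)$, one must realize these target expected payments by an actual nonnegative outcome-payment vector $t^c \in \mathbb{R}^{m+1}_{\ge 0}$ with $\sum_j F_{a_i,j} t^c_j = T^c_{x^*(c)}$ such that, simultaneously, the agent of type $c$ prefers action $a_i$ over every other action $k$ under $t^c$, i.e. $T^c_{a_i} - \gamma_{a_i} c \ge T^c_k - \gamma_k c$. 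Here I would appeal to the standard single-agent (no-type) contract machinery: because $a_i = x^*(c)$ is the welfare-relevant choice at virtual cost $2c$ and actions have distinct expected rewards with $R$ increasing in $\gamma$, the action $a_i$ is implementable in the pure moral-hazard sense at cost parameter $c$ (this is exactly the kind of condition referenced via \citep[Appendix~A.2]{DuttingRT19}), and moreover one can choose the implementing payment vector to achieve any desired expected transfer $T^c_{a_i}$ above the minimum needed, in particular the target value from the payment identity. This yields per-interval payment vectors; the cross-type IC constraints (a type in interval $i$ not wanting to report into interval $i'$) then follow from monotonicity of $\gamma_{x^*(\cdot)}$ together with the payment identity, exactly as in Myerson's single-parameter argument — equivalently, one checks the hypotheses of Lemma~\ref{lemma:reduction} and verifies the breakpoint-indexed non-existence conditions of Theorem~\ref{thm:char-cont} cannot be met.

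The cleanest route for the last step, and the one I would actually write, is to use Theorem~\ref{thm:char-cont} directly: assume for contradiction a deviation plan $\{\lambda_{(R,i,i',k)}, \lambda_{(L,i+1,i',k)}\}$ satisfying the weakly-dominant-distribution and strictly-lower-cost conditions, and derive a contradiction with the fact that $x^*$ pointwise maximizes $R_i - 2c\,\gamma_i$. Concretely, the weakly-dominant-distribution condition forces the mixture of outcome distributions used in the deviation to stochastically reproduce (dominate) $F_{x^*(z_i)}$ for each breakpoint $z_i$; since the rewards $r_j$ are nondecreasing this means the deviation mixture has expected reward at least $R_{x^*(z_i)}$ at each $z_i$. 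Summing the virtual-welfare optimality inequalities $R_{x^*(z_i)} - 2z_i\gamma_{x^*(z_i)} \ge R_k - 2z_i\gamma_k$ (and the analogous ones at $z_{i+1}$ for the $L$-copies) against the $\lambda$-weights, and using that each type's weights sum to $1$, I expect to get $\sum_i (\gamma_{x^*(z_i)} z_i + \gamma_{x^*(z_i)} z_{i+1}) \le \sum \lambda \cdot (\gamma_k z_i \text{ or } \gamma_k z_{i+1})$, contradicting the strictly-lower-cost condition. \textbf{The main obstacle} I anticipate is precisely this last aggregation: making sure the weakly-dominant-distribution inequalities, once converted to expected-reward inequalities and combined with the pointwise optimality of $x^*$ at \emph{each} breakpoint, cancel correctly against the cost condition — this is where the specific form $\varphi(c) = 2c$ for the uniform distribution (so that the ``virtual cost per effort'' appearing in the optimality condition matches the discretization weights $z_i, z_{i+1}$ used in Theorem~\ref{thm:char-cont}'s cost condition) is essential, and it is not obvious that the argument would survive for a general non-uniform $\varphi$, which is consistent with the paper flagging the general case as its main open question.
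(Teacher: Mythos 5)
Your ``cleanest route'' contradiction does in fact go through, and it is a genuinely different argument from the paper's: the paper does not aggregate the pointwise optimality inequalities directly, but instead perturbs $x^*$ by mixing in the deviation plan on length-$\epsilon$ neighborhoods of the breakpoints and shows (using $\int \varphi(c)g(c)\,\dd c = G(c)c$ for the uniform distribution) that for small $\epsilon$ the expected virtual welfare strictly increases, contradicting that $x^*$ maximizes virtual welfare among randomized rules. Your version is the ``first-order'' form of the same idea and is cleaner: weighting $R_{x^*(z_i)}-2z_i\gamma_{x^*(z_i)}\ge R_k-2z_i\gamma_k$ by $\lambda_{(R,i,i',k)}$ and the limit inequality at $z_{i+1}$ by $\lambda_{(L,i+1,i',k)}$, summing, and inserting condition (1) multiplied by $r_j$ and summed over $j$ (where $r_0=0$ and $R_k\ge 0$ let you ignore the $j=0$ outcome and the report index $\ell+1$), the reward terms cancel and the factor $2$ from $\varphi(z)=2z$ divides out, yielding exactly the negation of the strictly-lower-joint-cost condition. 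So the step you flag as the main obstacle is fine.

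The genuine gap is the zero-rent requirement at the top, which the ``thus maximizes expected revenue'' conclusion via Proposition~\ref{prop:rev-wel} hinges on. Theorem~\ref{thm:char-cont} only certifies that \emph{some} nonnegative payment rule implements $x^*$; it does not give one with $T^{\bar c}_{x^*(\bar c)}-\gamma_{x^*(\bar c)}\bar c=0$, and you cannot obtain it cheaply: zeroing out $t^{\bar c}_0$ may make type $\bar c$ prefer another type's menu entry, and shifting all payments down violates limited liability. Your constructive route for this part is flawed precisely where it matters: the claim that cross-type IC ``follows from monotonicity and the payment identity exactly as in Myerson'' is what Proposition~\ref{prop:non-monotone} refutes --- here a deviation pairs a misreport with an arbitrary action, and the payment identity does not control $T^{c'}_k$ for $k\neq x^*(c')$; moreover you never verify that the payment-identity target exceeds the minimal limited-liability payment needed to implement $x^*(c)$ in the pure moral-hazard sense. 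The paper closes this by imposing $t^{z_i}_0=0$ in \eqref{LP:cont} and re-deriving the dual: the resulting non-existence conditions coincide with those of Theorem~\ref{thm:char-cont} except that the dominance constraints are required only for $j\ge 1$. Since your aggregation never uses the $j=0$ constraint (as $r_0=0$), running your argument against this restricted characterization, together with $x^*(\bar c)=0$ under Assumption~\ref{assumption:high-cost-type}, repairs the proof with no further changes.
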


\begin{proof}

%\begin{proof}[Proof of Theorem~\ref{thm:uniform}]
Let $x^*$ be the virtual-welfare maximizing allocation rule. Note that by the assumption that $R_i< \gamma_i\bar{c}$ for $1\le i\le n$ and since $\bar{c} \le \varphi(\bar{c})$, we have that $x^*(\bar{c})=0.$ Thus, to show that $x^*$ is implementable by a contract $t$ for which $T^{\bar{c}}_{x^*(\bar{c})}-\gamma_{x^*(\bar{c})}\bar{c}=0$, it suffices to show that $x^*$ is implementable by a contract in which $t^c_0=0$ $\forall c\in C$. To test whether there exists such a contract, we test whether~\eqref{LP:cont} is feasible for $x^*$ when restricting $t^{z_i}_0=0$ $\forall i\in [\ell+1]$. Specifically, we test the feasibility of the following linear program.
\begin{equation*}
\begin{array}{ll@{}ll}
\min & 0 &\\
\text{s.t.} & \sum_{j=1}^m F_{x^*(z_i),j}t^{z_i}_j-\gamma_{x^*(z_i)} z_i \geq
\sum_{j=1}^m F_{k,j}t^{z_{i'}}_j-\gamma_{k} z_i &&  \forall i,i' \in [\ell+1],k \in [n],\\
& \sum_{j=1}^m F_{x^*(z_{i-1}),j}t^{z_{i-1}}_j - \gamma_{x^*(z_{i-1})} z_i \geq \sum_{j=1}^m F_{k,j}t^{z_{i'}}_j - \gamma_{k} z_i &&  \forall 1\le i\le \ell ,i'\in [\ell+1],k \in [n],\\
&t^{z_i}_j\geq 0 &&  \forall i\in [\ell+1], 1\le j\le m.
\end{array}
\end{equation*}
Using the same techniques as in the proof for Theorem~\ref{thm:char-cont} we have that $x^*$ is implementable by a contract for which $t^{z_i}_0=0$ $\forall i\in [\ell+1]$ if and only if there exist no weights $\lambda_{(L,i+1,{i'},k)},$ $\lambda_{(R,i,{i'},k)} \geq 0$ for all $i\in [\ell],i'\in[\ell+1],k \in [n]$ which satisfy $\sum_{i'\in [\ell+1],k\in [n]}\lambda_{(L,i+1,i',k)}= \sum_{i'\in [\ell+1],k\in [n]}\lambda_{(R,i,i',k)}=1$ $\forall i\in [\ell]$ and the following conditions: (1) $\sum_{i'\in [\ell],k\in [n]}\frac{1}{2}(\lambda_{(R,i',i,k)}F_{k,j}+ \lambda
_{(L,i'+1,i,k)}F_{k,j})\geq 
F_{x^*(z_i),j}$ $\forall  1\le j \le m, i\in [\ell],$ and (2) $\sum_{i\in [\ell]}(\gamma_{x^*(z_i)} z_i+\gamma_{x^*(z_{i})} z_{i+1}) >\sum_{i\in [\ell],i'\in [\ell+1],k\in [n]}\lambda_{(R,i,i',k)}\gamma_{k} z_i+\lambda_{(L,i+1,i',k)}\gamma_{k} z_{i+1}.$

Suppose towards a contradiction that $x^*$ is unimplementable by such a contract. Fix a sufficiently small value of $\epsilon$ to be determined and define the following randomized allocation rule.
\begin{eqnarray}\label{eq:new-alloc}
x_k(c)=\begin{cases}  
\sum_{i'\in [\ell+1]}\lambda_{(R,i,i',k)} & c\in [z_i,z_i+\epsilon),i\in [\ell],\\
\sum_{i'\in [\ell+1]}\lambda_{(L,i+1,i',k)} & c\in [z_{i+1}-\epsilon,z_{i+1}),i\in [\ell],\\
x_k^*(c) & \text{otherwise,}
\end{cases}  & \forall k\in [n].
\end{eqnarray}
In Appendix~\ref{appx:uni} we show that for sufficiently small value of $\epsilon$, the expected virtual welfare of $x$ is strictly higher than that of $x^*$, contradicting our hypothesis that $x^*$ is the virtual welfare maximizer.
\end{proof}

A challenge in generalizing the result in Theorem~\ref{thm:uniform} to other distributions is that in our proof we used that for uniform distributions, strictly lower joint cost implies strictly lower joint \emph{virtual} cost. This is not the case for other distributions, e.g., exponential as the following example shows. 

%\begin{example} 
%There are two agent types, four actions with required effort levels $\gamma_0=0$, $\gamma_1=1$, $\gamma_2=3$, $\gamma_3=10$, and three outcomes with rewards $r_1=0$, $r_2=10$, $r_3=30$ (that is, $n=m=3$). The distributions over outcomes are $F_0=(1,0,0)$, $F_1=(0,1,0)$, $F_2=(0,0.5,0.5)$, and $F_3=(0,0,1)$. The two types are $c = 1$ and $c = 4$, and they occur with equal probability. 
%\end{example}

\begin{example} Consider a setting with $5$ actions with $\gamma_0=0$, $\gamma_1=1,\gamma_2=2,\gamma_3=3,\gamma_4=7$ and two types $c_1=1$, $c_1=2$. The joint cost of $x(1)=2$, $x(2)=3$ is: $\gamma_3\cdot 2+\gamma_2\cdot 1=8$. The joint cost of $x'(1)=4, x'(2)=1$ is: $\gamma_1\cdot 2+\gamma_4\cdot 1=9$. Thus the joint cost of $x$ is lower than the joint cost of $x'$. The virtual cost function of exponential distribution with parameter $1$ is $\varphi(c)=c+e^c-1$. Thus, The joint virtual cost of $x(1)=2$, $x(2)=3$ is: $\gamma_3\cdot (2+e^2-1)+\gamma_2\cdot (1+e-1)\approx 30.6$. The joint cost of $x'(1)=4, x'(2)=1$ is: $\gamma_1\cdot (2+e^2-1)+\gamma_4\cdot (1+e-1)\approx 27.41$. Thus, the joint virtual cost of $x$ is higher than the joint virtual cost of $x'$.
\end{example}

In Appendix~\ref{appx:beyond-uni} we discuss additional computational results and structural insights for other natural classes of distributions.

\section{Conclusion}

In this work we propose a natural principal-agent problem with hidden action and single-dimensional private types, which seems to strike a balance between applicability and tractability. We provide analogs to Myerson's result --- two characterizations, one for discrete and one for continuous types --- which we believe the agenda of CT $\times$ MD can build upon.

We provide two proofs of concept to show this direction is promising: (1)~a ``more positive'' computational result than the corresponding result in the multi-parameter model of \citet{guruganesh2020contracts}; and (2)~a non-trivial ``divide and conquer'' result, which suggests that the optimal monotone rule could actually be implementable (at least somewhat) generally, despite the fact that monotonicity is not a sufficient condition in our model.

An interesting question going forward is to verify or disprove that this approach works for all regular distribution functions, or all regular distributions under some additional regularity assumption on the contract setting. A natural candidate for the latter is to assume diminishing marginal returns, which we show results in a ``nice'' allocation rule (Theorem~\ref{thm:regular-implies-monotone}), and ensures all actions can be incentivized absent hidden types.

\section*{Acknowledgements}

The authors would like to thank Yingkai Li for his helpful comments, which contributed to the paper in general and to Section~\ref{sub:application2} in particular.

%\inbal{we may want to say something about diminishing returns here - say it buys implementability in the std model, and ask what it can buy us in the typed model}

%maybe even by a monotone contract? (this raises nice open questions like do optimal contracts under decreasing marginals have nice properties) 

%\begin{comment}
\appendix
\bibliographystyle{ACM-Reference-Format}
\bibliography{bibliography}

\appendix
\numberwithin{equation}{section}

\section{Relation to Classic Literature}\label{appx:relation-lit}

For completeness, we show how Theorem~\ref{thm:charac-disc} reduces to the classic characterizations when, as specified in Section~\ref{sec:comparison}, there is either no hidden type (standard contract setting) or no hidden action (standard procurement setting). For the former, suppose $|C|=1$. Then $c=c'=c''$ and we can omit these from the notation. Then, the two conditions of Theorem~\ref{thm:charac-disc} reduce to the known condition for implementability of action $x$ under public costs $\gamma_1 c,\dots,\gamma_n c$ for the actions (see e.g. Proposition 2 in \citep{DuttingRT19}): There should be no distribution $\{\lambda_k\}_{k\in[n]}$ over the actions such that $\sum_{k\in [n]}  \lambda_{k}F_{k,j} \geq F_{x,j}$  for every $j \in [m]$ (that is, the weighted combination of the action distributions is equal to that of action $x$), and $\sum_{k\in [n]} \lambda_{k}\gamma_k c < \gamma_{x} c$ (that is, the combined cost is lower than that of action $x$).

As for the relation to no hidden action, consider a deterministic auction allocation rule $x$ of a procurement auction, which, for simplicity, determines from which types to buy. That is, $x(c)=1$ if type $c$ is allocated (bought from) and $x(c)=0$ otherwise. Notice that such $x$ can also be viewed as an allocation rule in a setting with two actions $0,1$, where $\gamma_0=0$, $\gamma_1=1$. An auction allocation rule is known to be implementable if and only if it is monotone in the auction sense, i.e., if for every two types $c'<c$ such that type $c$ wins (is allocated), type $c'$ also wins.

To see the connection to Theorem~\ref{thm:charac-disc}, consider a bipartite graph $((V,U),E)$, with $V$ and $U$ corresponding to two copies of the type set $C$. Let $W$ be the set of winning types according to auction allocation rule $x$.
It can be verified that the characterization in Theorem~\ref{thm:charac-disc} says $x$ is implementable if and only if there is no \emph{perfect} fractional matching $\{\lambda_{c,c'}\}$ in the bipartite graph for which
\begin{equation}
    \sum_{c\in W} c>\sum_{c\in C,c'\in W}\lambda_{c,c'}c.\label{eq:bipart}
\end{equation}
Such a matching corresponds to a fractional deviation plan where type $c$ reports $c'$ with probability $\lambda_{c,c'}$, %the total cost condition in~\eqref{eq:bipart} holds, 
and by perfection of the matching $\forall c : \sum_{c'} \lambda_{c',c}=1.$
%\begin{equation}
%\forall c : \sum_{c'} \lambda_{c',c}=1.\label{eq:perfect}
%\end{equation}
%So implementability is equivalent to nonexistence of a fractional deviation plan that satisfies Conditions \eqref{eq:bipart}-\eqref{eq:perfect}. 
We show that the existence of such a deviation plan is equivalent to non-monotonicity. If $x$ is not monotone, i.e., if there exist $c<c'$ such that $c\notin W$, and $c'\in W$, then a perfect fractional matching which satisfies condition~\eqref{eq:bipart} is when both $c$ and $c'$ swap. If there exists such a perfect matching, take the minimum $c\notin W$ such that $\lambda_{c,c'}>0$ for $c'\in W$.\footnote{Necessarily there exists such $c \notin W$. Otherwise, by the perfection of the matching, the right-hand side of~\eqref{eq:bipart} is $\sum_{c,c'\in W}\lambda_{c,c'}c=\sum_{c\in W}c$.} It holds that $c<\max_{c'\in W} c'$, otherwise, the right-hand side of~\eqref{eq:bipart} is at least $\sum_{c,c'\in W}\lambda_{c,c'}c+\sum_{c\notin W,c'\in W}\lambda_{c,c'} \max_{c'\in W} c'$, which is at least $\sum_{c\in W} c.$ We conclude that there exist $c\notin W$, and $c'\in W$ such that $c<c'$. That is, $x$ is not monotone.

\section{LP-Based Characterization}\label{appx:Char}

In this appendix we complete the lemmas used in the proofs of Theorem~\ref{thm:charac-disc} and of Theorem~\ref{thm:char-cont}.

\begin{lemma}\label{lemma:normalize-dis-char}
If there exists a feasible solution $\lambda$ to~\eqref{Dual1} with strictly negative objective value, then there exists a feasible solution $\lambda'$ with strictly negative objective in which $\sum_{c'\in C,k\in [n]}\lambda'_{(c,c',k)}=1$ $\forall c\in C$.
\end{lemma}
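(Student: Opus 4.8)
The statement is a normalization lemma: given any feasible dual solution $\lambda$ for \eqref{Dual1} with strictly negative objective, produce a feasible $\lambda'$ with strictly negative objective whose ``outgoing mass'' from each true type $c$ equals exactly $1$. Recall the dual constraints are $\sum_{c'\in C,k\in[n]} F_{k,j}\lambda_{(c',c,k)} \ge F_{x(c),j}\big(\sum_{c''\in C,k\in[n]} \lambda_{(c,c'',k)}\big)$ for all $c,j$, together with nonnegativity, and the objective is $\sum_{c,c'\in C}\sum_{k\in[n]} \lambda_{(c,c',k)}(\gamma_k-\gamma_{x(c)})c$. The key structural fact is that both sides of everything scale \emph{linearly in the outgoing mass of each fixed true type $c$}: if we rescale all weights $\lambda_{(c,\cdot,\cdot)}$ with a fixed first coordinate $c$ by a positive constant $\alpha_c$, the $c$-th constraint family rescales by $\alpha_c$ on both sides (left side depends on weights with second coordinate $c$ — wait, careful: the left side of the $c$-th constraint involves $\lambda_{(c',c,k)}$, i.e.\ weights \emph{into} $c$, while the right side involves weights \emph{out of} $c$). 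So the scaling is not entirely independent across types, which is exactly where the care is needed.

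\textbf{Main steps.} First I would define $s_c := \sum_{c'\in C,k\in[n]} \lambda_{(c,c',k)}$, the outgoing mass from true type $c$. The feasibility of the all-truthful solution ($\lambda_{(c,c,x(c))}=1$) shows the dual is feasible; and note that if $s_c = 0$ for some $c$, then the right-hand side of the $c$-th constraint is $0$, so that constraint is vacuous and $c$ contributes $0$ to the objective. Step two: argue that we may assume $s_c > 0$ for all $c$ — indeed, take any $c$ with $s_c = 0$ and add the truthful weight $\lambda_{(c,c,x(c))} \mathrel{+}= \epsilon$ for small $\epsilon > 0$. This makes $s_c = \epsilon > 0$; it adds $0$ to the objective (since $\gamma_{x(c)} - \gamma_{x(c)} = 0$); and on the feasibility side it only \emph{increases} left-hand sides of constraints indexed by the types that $c$ now routes to (only $c$ itself, via $\lambda_{(c,c,x(c))}$, whose $c$-th constraint left side gains $\epsilon F_{x(c),j}$ while its right side gains $\epsilon F_{x(c),j}$ — equality preserved) — so feasibility is maintained and the objective is unchanged (still strictly negative). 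Step three: now that $s_c > 0$ for all $c$, set $\lambda'_{(c,c',k)} := \lambda_{(c,c',k)}/s_c$. Then $\sum_{c',k}\lambda'_{(c,c',k)} = 1$ by construction. For the objective: $\sum_{c,c',k}\lambda'_{(c,c',k)}(\gamma_k-\gamma_{x(c)})c = \sum_{c,c',k}\frac{1}{s_c}\lambda_{(c,c',k)}(\gamma_k-\gamma_{x(c)})c$; this is a positive-weighted sum (weights $1/s_c$) of the per-type terms of the original objective. This does \emph{not} automatically stay negative unless each per-type term is already negative. So the honest move is: argue that in an optimal (or merely strictly-negative) dual solution we may take the solution to be a \emph{vertex}, or alternatively observe that the objective is a sum over $c$ of $\frac{1}{s_c}\big(\sum_{c',k}\lambda_{(c,c',k)}(\gamma_k - \gamma_{x(c)})c\big)$ and that the constraints couple types only through the ``into $c$'' left-hand sides — so I would instead rescale \emph{all} $s_c$ to a common value, or better, handle feasibility by noting the constraint for type $c$ only constrains the ratio (incoming combined distribution)/(outgoing mass $s_c$).

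\textbf{The cleanest route, and the obstacle.} The genuinely clean argument: observe that the $c$-th dual constraint can be rewritten as $\sum_{c',k} F_{k,j}\lambda_{(c',c,k)} \ge F_{x(c),j}\, s_c$, and the objective is $\sum_c c\big(\sum_{c',k}\lambda_{(c,c',k)}\gamma_k - \gamma_{x(c)} s_c\big)$. Now apply a \emph{single} global scaling: if $\lambda$ is feasible with negative objective, so is $\beta\lambda$ for any $\beta>0$ (both the objective and all constraints are homogeneous of degree $1$). So WLOG $\max_c s_c = 1$. For types with $s_c < 1$, I claim we can increase $s_c$ to $1$ by adding a truthful loop $\lambda_{(c,c,x(c))} \mathrel{+}= (1-s_c)$: this adds $(1-s_c)\cdot c\cdot(\gamma_{x(c)}-\gamma_{x(c)}) = 0$ to the objective, and on the constraint side it adds $(1-s_c)F_{x(c),j}$ to both the left side (via $\lambda_{(c,c,x(c))}$ appearing as an incoming weight in constraint $c$, where $c'=c$) and the right side ($F_{x(c),j}s_c$ grows to $F_{x(c),j}$), preserving the inequality; for constraints $c'' \ne c$ nothing changes. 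Hence we reach $s_c = 1$ for all $c$ with objective still strictly negative, which is exactly $\lambda'$. I expect the \textbf{main obstacle / point requiring care} to be verifying precisely that adding the truthful loop term perturbs \emph{only} the constraint indexed by $c$ and does so on both sides equally — i.e.\ correctly tracking which index ($c$ vs.\ $c'$) the added weight $\lambda_{(c,c,x(c))}$ occupies in each constraint family — and confirming it never appears with a negative sign anywhere, so feasibility is genuinely preserved rather than merely plausibly so. The rest is bookkeeping.
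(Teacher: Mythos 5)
Your ``cleanest route'' is exactly the paper's argument: scale by $M=\max_c s_c$ (homogeneity makes this free) and then top up each type's outgoing mass to $1$ by adding a truthful loop $\lambda_{(c,c,x(c))}$, which perturbs only constraint $c$ and does so identically on both sides while contributing zero to the objective. The paper merely packages the two operations into a single formula rather than two steps, so this is correct and essentially identical to the paper's proof.
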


\begin{proof}
Let $M=\max_{c''\in C}{\sum_{c'\in C}\sum_{k\in [n]}\lambda_{(c'',c',k)}}$. Define $\lambda'_{(c,c', k)}$ $\forall c,c'\in C, k\in [n]$ as follows.
\begin{eqnarray*}
\lambda'_{(c,c',k)}=
\begin{cases}
1-\frac{1}{M}(\sum_{c\neq c'\in C}\sum_{k\in [n]}\lambda_{(c,c',k)}+\sum_{x(c)\neq k\in [n]}\lambda_{(c,c,k)}) & c'=c \wedge k=x(c)\\
\frac{1}{M}{\lambda_{(c,c',k)}} & \text{otherwise}.
\end{cases} 
\end{eqnarray*}
First, note that by definition of $\lambda'$, it holds that $\sum_{c'\in C}\sum_{k\in [n]}\lambda'_{(c,c',k)}=1$ $\forall c\in C$. Second, note that the second set of constraints is satisfied, i.e., $\lambda'_{(c,c',k)}\geq 0$ $\forall c,c'\in C,k\in [n]$. 
%\begin{eqnarray*}
%\sum_{c'\in C}\sum_{k\in [n]}\lambda'_{(c,c',k)}=\lambda'_{(c,c,x(c))}+\sum_{c\neq c'\in C}\sum_{k\in [n]}\lambda'_{(c,c',k)}+\sum_{x(c)\neq k\in [n]}\lambda'_{(c,c,k)}&=&\\
%1-\frac{\sum_{c\neq c'\in C}\sum_{k\in [n]}\lambda_{(c,c',k)}+\sum_{x(c)\neq k\in [n]}\lambda_{(c,c,k)}}{\max_{c''\in C}{\sum_{c'\in C}\sum_{k\in [n]}\lambda_{(c'',c',k)}}}+\frac{\sum_{c\neq c'\in C}\sum_{k\in [n]}\lambda_{(c,c',k)}}{\max_{c''\in C}{\sum_{c'\in C}\sum_{k\in [n]}\lambda_{(c'',c',k)}}}+\frac{\sum_{x(c)\neq k\in [n]}\lambda_{(c,c,k)}}{\max_{c''\in C}{\sum_{c'\in C}\sum_{k\in [n]}\lambda_{(c'',c',k)}}}
%\end{eqnarray*}
We next show that $\lambda'$ satisfies the first set of constraints. By reorganizing and splitting the summation on both sides, the constraints are satisfied if and only if
\begin{eqnarray*}
\textstyle F_{x(c),j} \lambda'_{(c,c,x(c))}+\sum_{c\neq c'\in C}\sum_{k\in [n]}F_{k,j} \lambda'_{(c,c',k)}+\sum_{x(c)\neq k\in [n]}F_{k,j} \lambda'_{(c,c,k)} &\geq& \\
\textstyle F_{x(c),j} \lambda'_{(c,c,x(c))}+\sum_{c\neq c''\in C}\sum_{k\in [n]}F_{x(c),j}\lambda'_{(c,c'',k)}+\sum_{x(c)\neq k\in [n]}F_{x(c),j} \lambda'_{(c,c,k)} && \forall c\in C, j\in [m].
\end{eqnarray*}
Subtracting $F_{x(c),j} (\lambda'_{(c,c,x(c))}-\frac{1}{M}{\lambda_{(c,c,x(c))}})$ from both sides, using the definition of $\lambda'$, and multiplying both sides by $M$, the above holds if and only if
\begin{eqnarray*}
\textstyle {F_{x(c),j} \lambda_{(c,c,x(c))}+\sum_{c\neq c'\in C}\sum_{k\in [n]}F_{k,j} \lambda_{(c,c',k)}+\sum_{x(c)\neq k\in [n]}F_{k,j} \lambda_{(c,c,k)}} &\geq& \\
\textstyle {F_{x(c),j} \lambda_{(c,c,x(c))}+\sum_{c\neq c''\in C}\sum_{k\in [n]}F_{x(c),j}\lambda_{(c,c'',k)}+\sum_{x(c)\neq k\in [n]} F_{x(c),j}\lambda_{(c,c,k)} } && \forall c\in C, j\in [m].
\end{eqnarray*}
Reorganizing the summation we have that the above holds if and only if
\begin{eqnarray*}
\textstyle \sum_{c'\in C}\sum_{k\in [n]}F_{k,j} \lambda_{(c,c',k)}\geq F_{x(c),j}\sum_{ c''\in C}\sum_{ k\in [n]} \lambda_{(c,c'',k)} && \forall c\in C, j\in [m].
\end{eqnarray*}
Since the above holds by the choice of $\lambda$, we have that $\lambda'$ satisfies the first set of constraints. We are left with showing that $\lambda'$ gives strictly negative objective value. 

Observe that when splitting the summation as done earlier and by the definition of $\lambda'$ we have that $\lambda'$ has strictly negative objective value if and only if 
\begin{eqnarray*}
\textstyle \sum_{c\in C}(1-\frac{1}{M}(\sum_{c\neq c'\in C}\sum_{k\in [n]}\lambda_{(c,c',k)}+\sum_{x(c)\neq k\in [n]}\lambda_{(c,c,k)})) (\gamma_{x(c)}-\gamma_{x(c)})c &+&\\
\textstyle \sum_{c\in C}\frac{1}{M}(\sum_{c\neq c'\in C}\sum_{k\in [n]} \lambda_{(c,c',k)} (\gamma_k-\gamma_{x(c)})+\sum_{x(c)\neq k\in [n]} \lambda_{(c,c,k)} (\gamma_k-\gamma_{x(c)}))c
&& < 0.
\end{eqnarray*}
By subtracting the first term $\sum_{c\in C}(1-\frac{1}{M}(\sum_{c\neq c'\in C}\sum_{k\in [n]}\lambda_{(c,c',k)}+\sum_{x(c)\neq k\in [n]}\lambda_{(c,c,k)}))(\gamma_{x(c)}-\gamma_{x(c)})c=0$ from both sides, adding $\sum_{c\in C}\frac{1}{M}\lambda_{(c,c,x(c))}(\gamma_{x(c)}-\gamma_{x(c)})c=0$ to both sides, and multiplying the inequality by $M$ we have that the above holds if and only if 
\begin{eqnarray*}
\textstyle \sum_{c,c'\in C}\sum_{k\in [n]} \lambda_{(c,c',k)} (\gamma_k-\gamma_{x(c)})c
< 0.
\end{eqnarray*}
Similarly, the above holds by the choice of $\lambda$.

From the above discussion we can conclude that \eqref{LP1} is feasible if and only if the dual has no feasible solution $\lambda'$ in which $\sum_{c'\in C}\sum_{k\in [n]}\lambda'_{(c,c',k)}=1$ $\forall c\in C$ with strictly negative objective.
\end{proof}

\begin{lemma}\label{lemma:normalize-cont-char}
If there exists a feasible solution $\lambda$ to~\eqref{Dual2} with strictly positive objective value, then there exists of a feasible solution $\lambda'$ with strictly positive objective in which $\sum_{i',k\in [n]}\lambda_{(L,i+1,i',k)}=\sum_{i',k\in [n]}\lambda_{(R,i,i',k)}=1$ $\forall i\in [\ell]$.
\end{lemma}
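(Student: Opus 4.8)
The plan is to adapt, essentially verbatim, the rescaling argument used for \eqref{Dual1} in the proof of Lemma~\ref{lemma:normalize-dis-char}. Starting from a feasible $\lambda$ for \eqref{Dual2} with strictly positive objective, I would divide all weights by a suitable normalizing constant $M$ and then ``top up'' a family of zero-objective, tautologically-feasible diagonal variables so that each of the $2\ell$ normalization sums equals exactly $1$.

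Concretely, set
\[
M=\max_{i\in[\ell]}\max\Big\{\textstyle\sum_{i'\in[\ell+1],k\in[n]}\lambda_{(R,i,i',k)},\ \sum_{i'\in[\ell+1],k\in[n]}\lambda_{(L,i+1,i',k)}\Big\},
\]
and observe $M>0$, since $\lambda\equiv 0$ has objective value $0$ while our $\lambda$ does not. The relevant ``free'' variables are $\lambda_{(R,i,i,x(z_i))}$ and $\lambda_{(L,i+1,i,x(z_i))}$ for $i\in[\ell]$: in each the reported breakpoint coincides with the true one and the chosen action is the prescribed one, so the associated constraint of \eqref{LP:cont} is an identity and the matching objective coefficient in \eqref{Dual2} --- namely $\gamma_{x(z_i)}z_i-\gamma_k z_i$ on the $R$-side and $\gamma_{x(z_i)}z_{i+1}-\gamma_k z_{i+1}$ on the $L$-side with $k=x(z_i)$ --- vanishes. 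I would define $\lambda'$ to agree with $\tfrac1M\lambda$ on every non-free entry and put
\[
\lambda'_{(R,i,i,x(z_i))}=1-\tfrac1M\Big(\textstyle\sum_{i'\in[\ell+1],k\in[n]}\lambda_{(R,i,i',k)}-\lambda_{(R,i,i,x(z_i))}\Big),
\]
\[
\lambda'_{(L,i+1,i,x(z_i))}=1-\tfrac1M\Big(\textstyle\sum_{i'\in[\ell+1],k\in[n]}\lambda_{(L,i+1,i',k)}-\lambda_{(L,i+1,i,x(z_i))}\Big).
\]
Nonnegativity of $\lambda'$ and the normalization $\sum_{i'\in[\ell+1],k\in[n]}\lambda'_{(R,i,i',k)}=\sum_{i'\in[\ell+1],k\in[n]}\lambda'_{(L,i+1,i',k)}=1$ for all $i\in[\ell]$ then follow from the definition of $M$ exactly as in Lemma~\ref{lemma:normalize-dis-char}. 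For feasibility I would note that in every constraint of \eqref{Dual2} the variable $\lambda_{(R,i,i,x(z_i))}$ occurs on the left-hand side (as $\lambda_{(R,i',i,k)}$ with $i'=i$, $k=x(z_i)$) and on the right-hand side (as $\lambda_{(R,i,i',k)}$ with $i'=i$, $k=x(z_i)$), both with coefficient $F_{x(z_i),j}$, and similarly for $\lambda_{(L,i+1,i,x(z_i))}$; these terms cancel, and after substituting $\lambda'=\tfrac1M\lambda$ on the surviving entries and clearing the factor $M$ one recovers precisely the constraint satisfied by $\lambda$. Finally, since the free variables carry zero objective weight, the objective value of $\lambda'$ equals $\tfrac1M$ times that of $\lambda$, hence is still strictly positive.

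The main obstacle is purely index bookkeeping: one has to correctly locate the free variables inside the $R/L$-duplicated index set --- in particular track the shift between the $L$-family (which carries breakpoint index $i+1$ yet is attached to the allocation $x(z_i)$ and to the cost $z_{i+1}$) and the $R$-family --- and check that each free variable appears symmetrically on the two sides of every \eqref{Dual2} constraint, so that the discrete normalization computation carries over without change. I do not expect any genuinely new difficulty beyond that.
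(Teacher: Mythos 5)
Your proposal is correct and follows essentially the same route as the paper's proof: rescale all weights by $M$ and absorb the slack into the zero-objective ``diagonal'' variables $\lambda_{(R,i,i,x(z_i))}$ and $\lambda_{(L,i+1,i,x(z_i))}$, which cancel from both sides of each constraint of \eqref{Dual2}, so feasibility and strict positivity carry over after clearing the factor $M$. The only cosmetic imprecision is that each free variable appears (and cancels) only in the constraints for its own breakpoint index $i$ rather than in every constraint, which does not affect the argument.
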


The proof uses the same normalization technique as presented above.

\begin{proof}
Choose $M$ such that $M\geq \max_{i\in [\ell]}{\sum_{i'\in [\ell+1],k\in [n]}\lambda_{(L,i+1,i',k)},\sum_{i' \in [\ell+1]\in,k\in [n]}\lambda_{(R,i,i',k)}}.$ Then, define $\lambda'$ as follows.
\begin{eqnarray*}
\lambda'_{(R,i,i',k)}=
\begin{cases}
1-\frac{1}{M}(\sum_{i\neq i'\in [\ell+1],k\in [n]}\lambda_{(R,i,i',k)}+\sum_{x(z_i)\neq k\in [n]}\lambda_{(R,i,i,k)}) & i'=i \wedge k=x(z_i),\\
\frac{1}{M}{\lambda_{(R,i,i',k)}} & \text{otherwise},
\end{cases}
\end{eqnarray*}

\begin{eqnarray*}
\lambda'_{(L,i+1,i',k)}=
\begin{cases}
1-\frac{1}{M}(\sum_{i\neq i'\in [n],k\in [n]}\lambda_{(L,i+1,i',k)}+\sum_{x(z_{i})\neq k\in [n]}\lambda_{(L,i+1,i,k)}) & i'=i \wedge k=x(z_{i}),\\
\frac{1}{M}{\lambda_{(L,i+1,i',k)}} & \text{otherwise},
\end{cases}
\end{eqnarray*}
for all $i\in [\ell], i'\in [\ell+1],k\in [n].$

First, note that by definition of $\lambda'$, it holds that $\lambda_{(R,i,i',k)},\lambda_{(L,i+1,i',k)} \geq 0$ $\forall i\in [\ell],i'\in [\ell+1],k\in [n]$, and that $\sum_{i'\in [\ell+1],k\in [n]}\lambda_{(R,i,i',k)}=\sum_{i'\in [\ell+1],k\in [n]}\lambda_{(L,i+1,i',k)}=1$ $\forall i\in [\ell]$. We next show that $\lambda'$ satisfies the set of constraints. By reorganizing and splitting the summation on both sides, the constraints are satisfied if and only if
\begin{eqnarray}\label{eq:proof1}
\nonumber \textstyle \sum_{x(z_i)\neq k\in [n]}F_{k,j}(\sum_{i'\in [\ell]}\lambda'_{(R,i',i,k)}  + \sum_{i'\in [\ell]}\lambda'_{(L,i'+1,i,k)})&+&\\
\nonumber \textstyle F_{x(z_i),j}(\sum_{i\neq i'\in [\ell]}\lambda'_{(R,i',i,x(z_i))}  + \sum_{i\neq i'\in [\ell]}\lambda'_{(L,i'+1,i,x(z_i))}) &+&\\
\textstyle F_{x(z_i),j} (\lambda'_{(R,i,i,x(z_i))}  + \lambda'_{(L,i+1,i,x(z_i))}) &\geq& \\
\nonumber\textstyle \sum_{x(z_i)\neq k\in[n]}F_{x(z_i),j} (\sum_{i'\in [\ell+1]}\lambda'_{(R,i,i',k)}+\sum_{i'\in[\ell+1]}\lambda'_{(L,i+1,i',k)}) &+&\\
\nonumber \textstyle F_{x(z_i),j} (\sum_{i\neq i'\in[\ell+1]}\lambda'_{(R,i,i',x(z_i))}+\sum_{i\neq i'\in[\ell+1]}\lambda'_{(L,i+1,i',x(z_i))})&+&\\
\nonumber \textstyle F_{x(z_i),j} (\lambda'_{(R,i,i,x(z_i))}+\lambda'_{(L,i+1,i,x(z_i))}) && \forall j\in [m],i\in [\ell].\nonumber 
\end{eqnarray}
By subtracting $F_{x(z_i),j} (\lambda'_{(R,i,i,x(z_i))}+\lambda'_{(L,i+1,i,x(z_i))})-\frac{1}{M}F_{x(z_i),j} (\lambda_{(R,i,i,x(z_i))}+\lambda_{(L,i+1,i,x(z_i))})$ from both sides in \eqref{eq:proof1}, and by definition of $\lambda'$ we have that the above holds if and only if 
\begin{eqnarray*}
\nonumber \textstyle \sum_{x(z_i)\neq k\in [n]}F_{k,j}(\sum_{i'\in [\ell]}\frac{1}{M}\lambda_{(R,i',i,k)}  + \sum_{i'\in [\ell]}\frac{1}{M}\lambda_{(L,i'+1,i,k)})&+&\\
\nonumber \textstyle F_{x(z_i),j}(\sum_{i\neq i'\in [\ell]}\frac{1}{M}\lambda_{(R,i',i,x(z_i))}  + \sum_{i\neq i'\in [\ell]}\frac{1}{M}\lambda_{(L,i'+1,i,x(z_i))}) &+&\\
\textstyle F_{x(z_i),j} (\frac{1}{M}\lambda_{(R,i,i,x(z_i))}  + \frac{1}{M}\lambda_{(L,i+1,i,x(z_i))}) &\geq& \\
\nonumber\textstyle \sum_{x(z_i)\neq k\in[n]}F_{x(z_i),j} (\sum_{i'\in [\ell+1]}\frac{1}{M}\lambda_{(R,i,i',k)}+\sum_{i'\in[\ell+1]}\frac{1}{M}\lambda_{(L,i+1,i',k)}) &+&\\
\nonumber \textstyle F_{x(z_i),j} (\sum_{i\neq i'\in[\ell+1]}\frac{1}{M}\lambda_{(R,i,i',x(z_i))}+\sum_{i\neq i'\in[\ell+1]}\frac{1}{M}\lambda_{(L,i+1,i',x(z_i))})&+&\\
\nonumber \textstyle F_{x(z_i),j} (\frac{1}{M}\lambda_{(R,i,i,x(z_i))}+\frac{1}{M}\lambda_{(L,i+1,i,x(z_i))}) && \forall j\in [m],i\in [\ell].\nonumber 
\end{eqnarray*}
By multiplying both inequalities by $M$, the above are exactly the constraints of \eqref{Dual2}, which hold for $\lambda$ by assumption.
Next, we show that the objective is strictly positive. Again, by reorganizing and splitting the summations, the objective becomes
\begin{eqnarray*}
\textstyle \sum_{i\in [\ell],i\neq i'\in [\ell+1],k\in [n]}\lambda'_{(R,i,i',k)}(\gamma_{x(z_i)} z_i-\gamma_{k} z_i)&+&\\
\textstyle \sum_{i\in [\ell],x(z_i)\neq k\in [n]}\lambda'_{(R,i,i,k)}(\gamma_{x(z_i)} z_i-\gamma_{k} z_i)&+&\\
\textstyle \sum_{i\in [\ell]}\lambda'_{(R,i,i,x(z_i))}(\gamma_{x(z_i)} z_i-\gamma_{x(z_i)} z_i)&+&\\
\textstyle \sum_{i\in [\ell],i\neq i'\in [\ell+1],k \in [n]}\lambda'_{(L,i+1,i',k)} (\gamma_{x(z_{i})} z_{i+1} - \gamma_{k} z_{i+1})\\
\textstyle \sum_{i\in [\ell],x(z_i)\neq k \in [n]}\lambda'_{(L,i+1,i,k)} (\gamma_{x(z_{i})} z_{i+1} - \gamma_{k} z_{i+1})\\
\textstyle \sum_{i\in [\ell]}\lambda'_{(L,i+1,i,x(z_i))} (\gamma_{x(z_{i})} z_{i+1} - \gamma_{x(z_i)} z_{i+1})\\
\end{eqnarray*}
After adding $\sum_{i\in [\ell]}\frac{1}{M}\lambda_{(R,i,i,x(z_i))}  (\gamma_{x(z_{i})}z_{i}-\gamma_{x(z_{i})}z_{i})$  $+\sum_{i\in [\ell]} \frac{1}{M}\lambda_{(L,i+1,i,x(z_i))} (\gamma_{x(z_{i})}z_{i+1}-\gamma_{x(z_{i})}z_{i+1})$ to the above, and using the definition of $\lambda$ we have that the above is strictly positive if and only if 
\begin{eqnarray*}
\textstyle \sum_{i\in [\ell],i\neq i'\in [\ell+1],k\in [n]}\frac{1}{M}\lambda_{(R,i,i',k)}(\gamma_{x(z_i)} z_i-\gamma_{k} z_i)&+&\\
\textstyle \sum_{i\in [\ell],x(z_i)\neq k\in [n]}\frac{1}{M}\lambda_{(R,i,i,k)}(\gamma_{x(z_i)} z_i-\gamma_{k} z_i)&+&\\
\textstyle \sum_{i\in [\ell]}\frac{1}{M}\lambda_{(R,i,i,x(z_i))}(\gamma_{x(z_i)} z_i-\gamma_{x(z_i)} z_i)&+&\\
\textstyle \sum_{i\in [\ell],i\neq i'\in [\ell+1],k \in [n]}\frac{1}{M}\lambda_{(L,i+1,i',k)} (\gamma_{x(z_{i})} z_{i+1} - \gamma_{k} z_{i+1})\\
\textstyle \sum_{i\in [\ell],x(z_i)\neq k \in [n]}\frac{1}{M}\lambda_{(L,i+1,i,k)} (\gamma_{x(z_{i})} z_{i+1} - \gamma_{k} z_{i+1})\\
\textstyle \sum_{i\in [\ell]}\frac{1}{M}\lambda_{(L,i+1,i,x(z_i))} (\gamma_{x(z_{i})} z_{i+1} - \gamma_{x(z_i)} z_{i+1})\\
\end{eqnarray*}
Note that by multiplying the above by $M$, the left hand-side is exactly the objective value of the solution $\lambda$, which is known to be strictly positive.
\end{proof}

\section{Randomized Contracts}\label{appx:rand}

In this appendix we discuss randomized contracts. We will distinguish between the following two forms of randomization. First is when given a type report, the principal runs \emph{two independent lotteries}: one is for the action recommendation and the other is for the payment scheme. The agent needs to report truthfully given the lotteries he faces, and follow the realized action for each realized payment scheme. Below we show that this form of randomization cannot provide the principal extra power.

The second form is when the principal runs a \emph{single lottery} over \emph{pairs} of actions and payment schemes. Thus, the random action and random payment are dependent. Here, the agent needs to report truthfully, and follow the action recommendation for each realized pair (see, e.g., Section 12.4 in \citep{ChadeS19}). We end this section by showing that adding this dependency between action and payment \emph{can} increase the principal's revenue. 

%We start by explaining why randomized payment schemes, where the agent faces a payment lottery (and an independent action recommendation lottery), cannot provide extra power. Then, we formally prove that randomized allocation rules cannot provide additional power as well. Finally, we present another form of randomization that . In this form, the principal randomizes over pairs of action and payment scheme, i.e., randomization on actions and payments are correlated . 

%\tal{I am not formally defining randomized payment rules. I just feel like it will only overload notations for a very short intuitive proof. WDT?}

\begin{observation}
Let $x:C\to \Delta([n])$ be a (possibly randomized) allocation rule. For every randomized payment rule that implements $x$, there exists a deterministic payment rule that implements $x$ and yields the same utility for the principal. 
\end{observation}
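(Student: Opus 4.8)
The plan is to derandomize by replacing the random payment vector with its expectation. Given a randomized payment rule assigning to each report $c'$ a distribution $\mathcal{D}^{c'}$ over payment vectors in $\mathbb{R}^{m+1}_{\ge 0}$, I would define the deterministic rule $\bar t^{c'} := \mathbb{E}_{t\sim\mathcal{D}^{c'}}[t]$; since every $t$ in the support is non-negative, so is $\bar t^{c'}$, and limited liability is preserved. Writing $T_i(t):=\sum_{j\in[m]}F_{i,j}t_j$ for the expected transfer to action $i$ induced by a payment vector $t$, the two facts I would use are: (i) $T_i(\cdot)$ and the principal's per-report reward-minus-payment $\mathbb{E}_{i\sim x(c)}[R_i-T_i(\cdot)]$ are linear in the payment vector, so in particular $T_i(\bar t^{c'})=\mathbb{E}_{t\sim\mathcal{D}^{c'}}[T_i(t)]$; and (ii) for fixed $c,c'$ the map $t\mapsto\max_{k\in[n]}\{T_k(t)-\gamma_k c\}$ is convex, so by Jensen's inequality $\max_k\{T_k(\bar t^{c'})-\gamma_k c\}\le\mathbb{E}_{t\sim\mathcal{D}^{c'}}[\max_k\{T_k(t)-\gamma_k c\}]$. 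Intuitively, derandomizing the payments destroys the agent's ability to tailor his action to the realized payment vector, which can only hurt his deviation options.

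Next I would verify the two IC conditions for $(x,\bar t)$. For obedience: under the randomized rule, ``report $c$ truthfully, then play a best action against each realized payment vector'' is an available strategy, so IC of the randomized rule gives $\mathbb{E}_{i\sim x(c)}[T_i(\bar t^c)-\gamma_i c]\ge\mathbb{E}_{t\sim\mathcal{D}^c}[\max_k\{T_k(t)-\gamma_k c\}]\ge T_{k_0}(\bar t^c)-\gamma_{k_0}c$ for every action $k_0$; together with the trivial bound $\mathbb{E}_{i\sim x(c)}[T_i(\bar t^c)-\gamma_i c]\le\max_k\{T_k(\bar t^c)-\gamma_k c\}$ this forces equality, so every action in the support of $x(c)$ is a best response to $\bar t^c$. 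For truthful reporting: fix $c'\ne c$; the agent's best deviation utility under $\bar t$ is $\max_k\{T_k(\bar t^{c'})-\gamma_k c\}$, which by (ii) is at most $\mathbb{E}_{t\sim\mathcal{D}^{c'}}[\max_k\{T_k(t)-\gamma_k c\}]$ --- precisely his best deviation utility when reporting $c'$ under the randomized rule --- which by IC of the randomized rule is at most his truthful utility, and the truthful utility equals $\mathbb{E}_{i\sim x(c)}[T_i(\bar t^c)-\gamma_i c]$ under both rules. Hence $(x,\bar t)$ is IC. Finally, under both rules the agent follows recommendations, so the principal's expected utility is $\mathbb{E}_{c\sim G}\,\mathbb{E}_{i\sim x(c)}[R_i-T_i(\bar t^c)]$ by linearity, and is therefore unchanged.

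The step I expect to be the main obstacle is not any of the above linearity-plus-Jensen manipulations but the bookkeeping around ties, given the convention that the agent breaks ties in favor of the principal: after averaging, an action that was strictly suboptimal against some realizations $t$ may become exactly tied under $\bar t^c$, and one must check that the principal-preferred tie-break still selects a recommended action rather than a worse one (which also matters for the principal-utility computation). I would address this either by a standard perturbation argument --- perturb all payments by an infinitesimal amount that makes every agent best response strict, derandomize, and pass to the limit --- or by arguing that one may assume without loss of generality that the randomized payment rule already resolves ties consistently across realizations.
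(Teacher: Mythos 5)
Your proof is correct and takes essentially the same route as the paper's (very terse) sketch: replace the random payment rule by its per-report expectation $\bar t^{c'}=\mathbb{E}_{t\sim\mathcal{D}^{c'}}[t]$, then use quasi-linearity of the agent's utility. Where you add genuine content is in separating the two IC conditions and noting that they behave differently under averaging. For obedience, the inequalities you chain together via Jensen could in fact be obtained more directly: since the agent observes the realized $t$, IC of the randomized rule already gives the pointwise inequality $T_i(t)-\gamma_i c\geq T_k(t)-\gamma_k c$ for every $i$ in the support of $x(c)$, every $t$ in the support of $\mathcal{D}^c$, and every $k$; taking expectations in $t$ immediately yields $T_i(\bar t^c)-\gamma_i c\geq T_k(\bar t^c)-\gamma_k c$, no Jensen needed. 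For truthful reporting, however, the deviation to a false report $c'$ is only constrained in expectation (the agent can ex-post tailor his action to the realized $t^{c'}$), so the convexity of $t\mapsto\max_k\{T_k(t)-\gamma_k c\}$ and Jensen are genuinely the right tools, and your argument is exactly what the paper's appeal to ``linearity of expectation'' is quietly relying on. On the tie-breaking caveat: this is a real loose end that the paper's sketch does not discuss, and your perturbation (or consistent-tie-break) resolution is a reasonable way to close it; note also that since the tie-break is in the principal's favor, any change in the agent's tie-broken action under $\bar t^c$ would only raise the principal's utility, so the observation's conclusion (``at least the same utility'') is not at risk even if one did not patch the implementability of $x$ itself.
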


\begin{proof}[Proof Sketch]
Follows from the fact that the agent has quasi-linear utility. Specifically, by linearity of expectation we can replace the randomized payments with the expected payment for each outcome. Then, we can without loss of generality assume that the payment rule is deterministic. 
\end{proof}

%

%Overloading notation to accommodate for randomization, denote by $R_{x(c)}=\sum_{k\in [n]}x_k(c) R_{k}$ the expected reward of $x$ given type $c$, and by $\gamma_{x(c)}=\sum_{k\in [n]}x_k(c) \gamma_k$ the expected effort.
%The payment rule still maps from cost $c$ to a vector of payments, which can now be seen as expected payments over the random coin tosses. Formally, let $T_{x(c)}^c = \sum_{i \in [n]}\sum_{j \in [m]} x_i(c) \cdot  F_{i,j} \cdot t^c_j$ denote the expected payment over both random actions and random outcomes. 

\begin{proposition}
For every implementable randomized allocation rule, there exists an implementable deterministic allocation rule that gives at least the same utility for the principal. 
\end{proposition}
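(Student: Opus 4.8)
The plan is to derandomize $x$ pointwise: for each type, replace the lottery over recommended actions by the single best-response action that the principal most prefers, and then argue that this change preserves incentive compatibility while it can only help the principal. The key realization that makes this work is that incentive compatibility of a \emph{randomized} contract already forces every action in the support of a type's recommendation distribution to be an agent best response to that type's payment vector, so collapsing the lottery to the principal-optimal best response loses nothing for the agent's incentives and (weakly) gains for the principal.

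Concretely, I would first reduce to deterministic payments. Given an implementable randomized allocation rule $x:C\to\Delta([n])$, fix a payment rule $t$ with $(x,t)$ IC; by the preceding observation we may assume $t$ is deterministic, and it stays non-negative since averaging non-negative payments preserves limited liability. Next I would extract the content of IC for $(x,t)$. A truthful agent of type $c$ must be willing to carry out every recommendation the lottery produces, so each action $k$ with $x_k(c)>0$ is a best response to $t^c$, i.e.\ $T^c_k-\gamma_k c=\max_{i\in[n]}(T^c_i-\gamma_i c)$; hence the truthful agent's expected payoff is exactly $\max_i(T^c_i-\gamma_i c)$, and since a misreporting agent is free to ignore the recommendation and best-respond to $t^{c'}$, the no-misreport condition reduces to $\max_i(T^c_i-\gamma_i c)\ge\max_i(T^{c'}_i-\gamma_i c)$ for all $c,c'\in C$.

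Then I would take the deterministic rule $\hat x(c)=i^{*}(c,c)$, i.e.\ the agent's best response to $t^c$ selected to maximize the principal's payoff $R_i-T^c_i$ (the model's tie-breaking in favour of the principal, Section~\ref{sec:model}), and check that $(\hat x,t)$ is IC: condition~(1) holds by the definition of $\hat x$; condition~(2) holds because a truthful agent of type $c$ now receives $T^c_{\hat x(c)}-\gamma_{\hat x(c)}c=\max_i(T^c_i-\gamma_i c)$ (as $\hat x(c)$ is a best response), while a misreport of $c'$ yields at most $\max_i(T^{c'}_i-\gamma_i c)$, so the inequality extracted above is precisely what is needed. This makes $\hat x$ implementable, via the non-negative rule $t$. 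Finally I would compare payoffs type by type: every in-support action $k$ of $x(c)$ is a best response to $t^c$ and $\hat x(c)$ is a principal-optimal such best response, so $R_{\hat x(c)}-T^c_{\hat x(c)}\ge R_k-T^c_k$ for all such $k$; averaging over $k\sim x(c)$ gives $R_{\hat x(c)}-T^c_{\hat x(c)}\ge\sum_{k\in[n]}x_k(c)(R_k-T^c_k)$, and taking expectation over $c\sim G$ completes the comparison of principal utilities.

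The only delicate point, and the one I would be most careful about, is pinning down the right notion of incentive compatibility for randomized allocation rules --- that ``follow the recommendation'' forces every action in the support of $x(c)$ to be an agent best response to $t^c$, while a misreporting agent is entirely unconstrained. Once that is settled, the derandomization is visibly lossless: there is no genuine optimization obstacle, only an averaging argument and a routine re-verification of the two IC conditions.
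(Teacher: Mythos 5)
Your proposal is correct and follows essentially the same route as the paper: keep the payment rule fixed, replace the lottery by a principal-optimal agent best response (the paper restricts the argmax to the support of $x(c)$, you take it over all best responses, which only helps), note that unchanged payments preserve all incentives, and compare principal utilities by averaging over the support.
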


\begin{proof}
Let $x:C\to\Delta([n])$ be a randomized allocation rule, and let $t: C\to \mathbb{R}^{m+1}_{\geq 0}$ be a contract that implements $x$. Define the following deterministic allocation rule $x'(c)= \arg\max_{\{k\mid x_k(c)>0\}} R_k-T^c_k$.\footnote{Tie-breakings in favor of the high effort action. Any tie-breaking rule will work.} In words, for each type $c$, the allocation rule $x'$ recommends the action which maximizes revenue among all actions that can be recommended by $x$ to the agent, i.e., actions for which $x_k(c)>0$. We will next show that $t$ implements $x'$. Since $t$ implements $x$ and $x'(c)$ has a nonzero probability to be recommended by $x$, it must hold that when facing $t$, $c$'s payoff from reporting truthfully and taking $x'(c)$ is at least as his payoff from reporting non-truthfully or taking any other action. Since we are not changing the payments, we are not affecting the incentives. Thus, $t$ also implements $x'$.
\end{proof}

Next, we introduce a different form of randomization, which potentially generates higher utility for the principal. In this type of contract, $(x,t)$ consists of a distribution over pairs of recommended action and corresponding payments. That is, with slight abuse of notation, $x$ maps $c$ to a distribution over actions $x(c)\in \Delta([n])$, and $t$ maps $c$ to $n$ corresponding payment schemes $(t_1(c),...,t_n(c))$ such that with probability $x_k(c)$ action $k$ is recommended and payments $t_k(c)$ are set forth to type $c$.

\begin{example}\label{ex:rand}
There are two agent types, four actions with required effort levels $\gamma_0=0$, $\gamma_1=1$, $\gamma_2=3$, $\gamma_3=10$, and three outcomes with rewards $r_1=0$, $r_2=20$, $r_3=35$ (that is, $n=m=3$). The distributions over outcomes are $F_0=(1,0,0)$, $F_1=(0,1,0)$, $F_2=(0,0.5,0.5)$, and $F_3=(0,0,1)$. Thus, $R_0=0,R_1=20,R_2=27.5,R_3=35.$ The two types are $c_L = 1$ and $c_H = 3$, and they occur with equal probability.
\end{example}

\begin{proposition}
Consider Example~\ref{ex:rand}. The optimal revenue of a deterministic contract is $19.75$.
\end{proposition}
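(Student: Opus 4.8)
The plan is to turn the statement into a small finite computation. By Proposition~\ref{prop:impl-is-mon-dis} only monotone allocation rules are implementable, so with the two types $c_L=1<c_H=3$ and the four actions it suffices to consider the $\binom{5}{2}=10$ monotone candidates $x$ with $x(c_L)\ge x(c_H)$ and, for each, to find the best revenue over the payment rules implementing it. Writing $T^{c}_k$ for the expected payment to a report-$c$ agent who takes action $k$, the revenue of an IC contract with allocation $x$ is $\tfrac12\bigl(R_{x(c_L)}-T^{c_L}_{x(c_L)}\bigr)+\tfrac12\bigl(R_{x(c_H)}-T^{c_H}_{x(c_H)}\bigr)$, so for a fixed allocation everything comes down to bounding the two expected payments $T^{c_L}_{x(c_L)},T^{c_H}_{x(c_H)}$. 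This can be done by the LP of Corollary~\ref{cor:poly-LP}; I would instead argue by hand.

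For the lower bound I would exhibit the allocation $x(c_L)=2,\ x(c_H)=1$ with payment rule $t^{c_L}=(0,3,7)$ and $t^{c_H}=(0,3,0)$ (payments indexed by the three outcomes as in Example~\ref{ex:runing}), so that $T^{c_L}_2=5$ and $T^{c_H}_1=3$ and the revenue equals $\tfrac12(27.5-5)+\tfrac12(20-3)=19.75$. To check IC I would verify the relevant deviations directly: facing $t^{c_L}$, type $c_L$'s best action is $2$ (tied with action $1$, the tie broken in the principal's favor toward action $2$; utility $2$), and facing $t^{c_H}$ its best action is $1$ with utility $2$ as well, so type $c_L$ is indifferent between reporting truthfully and misreporting; symmetrically, facing $t^{c_H}$ type $c_H$'s best action is $1$ (tied with action $0$; utility $0$), and facing $t^{c_L}$ its best response also has utility $0$, so type $c_H$ reports truthfully. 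All transfers are nonnegative, so $(x,t)$ is IC with limited liability and achieves revenue $19.75$.

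For the matching upper bound I would go through the ten monotone allocations. The main tool is a Stage-2 (no-private-type) lower bound on expected payments enforced by limited liability: incentivizing action $1$, $2$, or $3$ for type $1$ requires expected payment at least $1$, $3$, or $14$ respectively (the last via the comparison with action $2$, not the weaker comparison with the zero-effort action), and incentivizing action $1$, $2$, or $3$ for type $3$ requires at least $3$, $9$, or $30$. Substituting into the revenue formula already yields revenue strictly below $19.75$ for every monotone allocation except $(x(c_L),x(c_H))\in\{(2,1),(2,2),(3,2)\}$; for $(3,2)$ it gives exactly $\tfrac12(35-14)+\tfrac12(27.5-9)=19.75$, which suffices. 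For the two remaining allocations I would add one cross-type inequality coming from the option of type $c_L=1$ to report $c_H$ and then play any action $k$: IC forces $T^{c_L}_{x(c_L)}-\gamma_{x(c_L)}\ge T^{c_H}_k-\gamma_k$ for the best such $k$ (recall $c_L=1$). Taking $k=1$ in the $(2,1)$ case gives $T^{c_L}_2\ge T^{c_H}_1+2$, which together with $T^{c_H}_1\ge 3$ forces $T^{c_L}_2+T^{c_H}_1\ge 8$ and hence revenue $\le 23.75-4=19.75$; taking $k=2$ in the $(2,2)$ case gives $T^{c_L}_2\ge T^{c_H}_2\ge 9$ and hence revenue $\le 27.5-9=18.5<19.75$. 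Combining with the lower bound, the optimal deterministic revenue is exactly $19.75$.

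The main obstacle is simply the bookkeeping in the upper bound: one must make sure the limited-liability payment bounds are sharp enough to dispose of the ``cheap'' allocations (the $(3,\cdot)$ and $(2,0)$ cases, plus the subtle point that action $3$ for type $1$ needs the action-$2$ comparison to get the bound $14$ rather than $10$), and one must correctly single out the one binding cross-type deviation in the $(2,1)$ and $(2,2)$ cases. The rest is arithmetic, and the only nonroutine check is the IC verification of the explicit optimal contract, where the tie-breaking conventions matter.
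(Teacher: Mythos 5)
Your argument is correct and is exactly the paper's ``min payment LP approach'' carried out by hand: restrict to the ten monotone allocations via Proposition~\ref{prop:impl-is-mon-dis}, lower-bound the two expected payments via the Stage-2 and cross-type IC constraints together with limited liability, and exhibit the matching contract. One small inaccuracy --- the minimal expected payment to incentivize action $3$ for type $3$ is $42$, not $30$ (the binding constraint is again the comparison with action $2$, giving $T^{c_H}_3 - 30 \ge T^{c_H}_2 - 9$ and hence $t_3 \ge t_2 + 42$) --- but since this only strengthens your upper bound for allocation $(3,3)$, it does not affect the conclusion.
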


\begin{proof}
Follows from the min payment LP approach. %The following contract $x(1)=2,x(3)=1$ and $t(1)=(0,3,7)$, $t(3)=(0,3,0)$.
\end{proof}

\begin{proposition}
Consider Example~\ref{ex:rand}. The following randomized contract is IC, with expected revenue of $19.875$. $x(1)=(0,0,0.5,0.5)$, and $t_3(1)=(0,0,14)$, $t_2(1)=(0,1,5)$. $x(1)=(0,1,0,0)$ and $t_1(3)=(0,3,0)$.
\end{proposition}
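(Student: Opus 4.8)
This statement is a finite verification, so the plan is purely computational: confirm the two incentive requirements of this ``single‑lottery‑over‑(action, payment)‑pairs'' format, then evaluate the principal's expected revenue. (There is an evident typo in the statement --- the second allocation should read $x(3)=(0,1,0,0)$.) The first step is to tabulate expected transfers. Using $T^{t}_{k}=\sum_{j}F_{k,j}t_{j}$ as in~\eqref{eq:expected-transfer}, I would compute, for each payment vector in the contract and each action $k\in\{0,1,2,3\}$, the expected payment. With the given $F$ one obtains $\big(T^{t_{2}(1)}_{0},\dots,T^{t_{2}(1)}_{3}\big)=(0,1,3,5)$ for $t_{2}(1)=(0,1,5)$; $(0,0,7,14)$ for $t_{3}(1)=(0,0,14)$; and $(0,3,1.5,0)$ for $t_{1}(3)=(0,3,0)$. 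These numbers feed every remaining check.

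\emph{Obedience on each realized pair.} For every on‑path pair the recommended action must maximize the reporting type's utility $T^{t}_{k}-\gamma_{k}c$ over $k\in[n]$, with ties broken toward the higher‑$R$ action. For $c=1$: under $t_{3}(1)$ actions $2$ and $3$ both yield utility $4$ (and $0,1$ yield less), and $R_{3}>R_{2}$, so recommending action $3$ is obedient; under $t_{2}(1)$ actions $0,1,2$ all yield utility $0$ and $R_{2}$ is largest among them, so action $2$ is obedient. For $c=3$: under $t_{1}(3)$ actions $0$ and $1$ both yield $0$ and $R_{1}>R_{0}$, so action $1$ is obedient. This is precisely the step requiring care: all three obedience constraints are tight, so one must verify the prescribed action is the one the tie‑break selects, not merely \emph{a} maximizer.

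\emph{Truthful reporting.} Given obedience, type $c$'s on‑path expected utility is $\sum_{k}x_{k}(c)\,(T^{t_{k}(c)}_{k}-\gamma_{k}c)$, which equals $\tfrac12\cdot4+\tfrac12\cdot0=2$ for $c=1$ and $0$ for $c=3$. For the single alternative report $c'$ of each type I would compute the best deviation utility $\sum_{k}x_{k}(c')\,\max_{i\in[n]}(T^{t_{k}(c')}_{i}-\gamma_{i}c)$: type $1$ reporting $3$ best‑responds with action $1$ under $t_{1}(3)$ for utility $2$; type $3$ reporting $1$ best‑responds with action $0$ under both realized schemes for utility $0$. Since $2\ge2$ and $0\ge0$, both (weak) IC constraints hold. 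Individual rationality is immediate, since all computed utilities are nonnegative (and in any case it follows from $\gamma_{0}=0$ plus limited liability).

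\emph{Revenue.} Finally, $\mathbb{E}_{c}\big[\sum_{k}x_{k}(c)(R_{k}-T^{t_{k}(c)}_{k})\big]$ equals $\tfrac12(R_{2}-3)+\tfrac12(R_{3}-14)=\tfrac12(24.5)+\tfrac12(21)=22.75$ for type $1$ and $R_{1}-3=17$ for type $3$; averaging over the two equiprobable types gives $\tfrac12(22.75)+\tfrac12(17)=19.875$, as claimed. The entire argument is routine arithmetic once the transfer table is in hand; the only genuine subtlety is the tie‑breaking bookkeeping in the obedience step (several constraints bind) together with using the correct IC notion for this contract format --- obedience pair‑by‑pair on path, best response to each realized payment vector off path.
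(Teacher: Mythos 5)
Your proof is correct and takes essentially the same route as the paper's: a direct computational verification of obedience on each realized $(\text{action},\text{payment})$ pair, deviation utilities for each misreport, and the revenue calculation, all of which match the paper's numbers exactly (type $1$ gets on-path utility $2$ and deviation utility $2$; type $3$ gets $0$ and $0$; revenue $\tfrac12(22.75)+\tfrac12(17)=19.875$). You also correctly spot the typo $x(1)\mapsto x(3)$ in the displayed statement.

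One small imprecision worth flagging: you state the tie-break rule as ``toward the higher-$R$ action,'' but the paper's convention is tie-breaking in favor of the \emph{principal's} utility, i.e.\ the action maximizing $R_k-T_k$ among the agent-utility maximizers, not $R_k$ alone. In all three binding constraints here the two criteria happen to agree (e.g.\ under $t_3(1)$ one has $R_3-14=21>20.5=R_2-7$; under $t_2(1)$, $R_2-3=24.5>19>0$; under $t_1(3)$, $R_1-3=17>0$), so your conclusions are unaffected, but the justification as written is not the paper's rule. Your treatment of the ties is otherwise more careful than the paper's own proof, which simply asserts the recommended actions without addressing that other actions achieve the same agent utility.
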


\begin{proof}
First we show this contract is IC. 
Type $1$'s payoff from reporting truthfully and following the recommended action is as follows $0.5(14-10)+0.5 (3-3)=2.$ He cannot increase his utility by reporting non-truthfully, since the maximum utility he can achieve by facing type $3$'s contract is $2$ (taking action $1$). Further, both $t_3(1)$ and $t_2(1)$ incentivize actions $3$ and $2$ respectively. $t_3(1)$ yields payoff of $4$ from action $3$, while it yields $4$ from action $2$, $-1$ from action $1$, and $0$ from action $0$. $t_2(1)$ yields a utility of $0$ from actions $0,1,2$, and $-5$ from action $3$. Type $3$'s payoff from reporting truthfully and following the recommended action is $3-3=0$. He cannot increase his utility by reporting non-truthfully, since the maximum utility he can achieve from type $1$'s contracts is $0$ (taking action $0$, any other action yields a negative utility). Further, $t_1(3)$ incentivizes action $1$, since it yields payoff of $0$ from action $1$, while yielding $-7.5$ from action $2$, $-30$ from action $3$, and $0$ from action $0$. The principal's utility is $19.875$: $0.25  (35-14)+ 0.25 (27.5-3)+0.5 (20-3).$
\end{proof}

Here we provide and intuitive explanation for the additional power of this form of randomization. Note that in the above example $t_3(1)$ preserves IC for the agent of type $1$: it prevents the agent from reporting non-truthfully by suggesting higher payment, and $t_2(1)$ is for preserving the principal's payoff by suggesting lower payment. Thus, these two contracts ``balance" each other -- each contract takes care of different objective.

\section{Optimal Contract for Uniformly-Distributed Costs}\label{appx:uni}

In this appendix we complete the proof of Theorem~\ref{thm:uniform}. Specifically, we show that under Assumption~\ref{assumption:high-cost-type}, the expected virtual welfare of $x^*$, the welfare maximizing allocation rule (see Definition~\ref{def:welfare-max-alloc}), is strictly less than the expected virtual welfare of the allocation rule defined in~\eqref{eq:new-alloc}.

The proof relies on the following lemmas.

\begin{lemma}\label{lemma:higher-reward} For the randomized allocation rule $x$ defined in~\eqref{eq:new-alloc}, and for every $\epsilon>0$. The expected reward from $x$ is weakly higher than the expected reward of $x^*$, i.e., $\mathbb{E}_{c\sim G}[R_{x(c)}]\geq \mathbb{E}_{c\sim G}[R_{x^*(c)}]$.
\end{lemma}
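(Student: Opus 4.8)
The plan is to compute $\mathbb{E}_{c\sim G}[R_{x(c)}]-\mathbb{E}_{c\sim G}[R_{x^*(c)}]$ explicitly and show it is nonnegative, using the normalized Condition~(1) satisfied by the weights $\lambda$ (the one displayed in the proof of Theorem~\ref{thm:uniform}, i.e.\ Condition~(1) of \eqref{Dual2}). First I would note that for $\epsilon$ smaller than half the minimum gap between consecutive breakpoints of $x^*$ (so that~\eqref{eq:new-alloc} is a well-defined modification localized near each breakpoint), $x$ agrees with $x^*$ outside the $2(\ell+1)$ pairwise-disjoint length-$\epsilon$ intervals $[z_i,z_i+\epsilon)$ and $[z_{i+1}-\epsilon,z_{i+1})$, $i\in[\ell]$, and on both of these intervals $x^*$ takes the constant value $x^*(z_i)$ (by monotone piecewise constancy, Definition~\ref{def:piecewise-constant}). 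Since $G$ is uniform with density $1/\bar{c}$ and $R_{x(c)}=\sum_k x_k(c)R_k$, integrating over these intervals and writing $A_i:=\sum_{i'\in[\ell+1],k\in[n]}\bigl(\lambda_{(R,i,i',k)}+\lambda_{(L,i+1,i',k)}\bigr)R_k$ gives
\[
\mathbb{E}_{c\sim G}[R_{x(c)}]-\mathbb{E}_{c\sim G}[R_{x^*(c)}]=\frac{\epsilon}{\bar{c}}\sum_{i\in[\ell]}\bigl(A_i-2R_{x^*(z_i)}\bigr),
\]
so it suffices to prove $\sum_{i\in[\ell]}A_i\ \ge\ 2\sum_{i\in[\ell]}R_{x^*(z_i)}$.

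The second step is to feed the reward vector into Condition~(1). Multiplying the inequality $\sum_{i'\in[\ell],k}\tfrac12\bigl(\lambda_{(R,i',i,k)}F_{k,j}+\lambda_{(L,i'+1,i,k)}F_{k,j}\bigr)\ge F_{x^*(z_i),j}$ by $r_j\ge 0$, summing over $j$, and using $r_0=0$ together with $R_k=\sum_j F_{k,j}r_j$, yields for every $i\in[\ell]$
\[
\sum_{i'\in[\ell],k\in[n]}\tfrac12\bigl(\lambda_{(R,i',i,k)}+\lambda_{(L,i'+1,i,k)}\bigr)R_k\ \ge\ R_{x^*(z_i)}.
\]
Summing over $i\in[\ell]$ and multiplying by $2$ gives $B\ge 2\sum_{i\in[\ell]}R_{x^*(z_i)}$, where $B:=\sum_{i\in[\ell]}\sum_{i'\in[\ell],k\in[n]}\bigl(\lambda_{(R,i',i,k)}+\lambda_{(L,i'+1,i,k)}\bigr)R_k$.

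The final step is the rearrangement $\sum_{i\in[\ell]}A_i\ge B$. After swapping the names of the summation indices in $B$, both $\sum_i A_i$ and $B$ are sums of the nonnegative quantities $\lambda_{(\cdot,\cdot,k)}R_k$ over the \emph{same} set of ``true-type'' slots --- the $R$-copies $(R,0),\dots,(R,\ell)$ together with the $L$-copies $(L,1),\dots,(L,\ell+1)$ --- the only difference being that $\sum_i A_i$ lets the reported-type index run over all of $[\ell+1]$ while $B$ restricts it to $[\ell]$. Hence $\sum_i A_i-B$ equals $\sum_k R_k$ times the total weight of deviations that report the top breakpoint $z_{\ell+1}$, which is $\ge 0$ since $\lambda\ge 0$ and $R_k\ge 0$. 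Chaining $\sum_i A_i\ge B\ge 2\sum_{i\in[\ell]}R_{x^*(z_i)}$ proves the required inequality, and hence the lemma.

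I expect the main obstacle to be this last bookkeeping step: Condition~(1) aggregates the deviation mass \emph{flowing into} each breakpoint $z_i$ (indexed by the true-type slot, and --- crucially --- summed only over $i'\in[\ell]$, i.e.\ excluding the copy of $z_{\ell+1}$), whereas the reward of $x$ on the modified intervals is governed by the mass \emph{flowing out of} each true-type slot (summed over reported types in $[\ell+1]$). The key observation is that transposing the order of summation lines these two index sets up exactly, so that the extra ``out-flow'' to the report $z_{\ell+1}$ only helps; everything else is routine. (Uniformity of $G$ is used only in the first step, to make all $2(\ell+1)$ modified intervals carry the same probability mass $\epsilon/\bar{c}$, which is what lets the two contributions associated with each index $i$ be combined into the factor $2$.)
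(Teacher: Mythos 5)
Your proposal is correct and follows essentially the same route as the paper's proof: split the expectation over the $2(\ell+1)$ length-$\epsilon$ intervals, use uniformity of $G$ to reduce the claim to $\sum_{i\in[\ell],i'\in[\ell+1],k}(\lambda_{(R,i,i',k)}+\lambda_{(L,i+1,i',k)})R_k\ \ge\ 2\sum_{i\in[\ell]}R_{x^*(z_i)}$, and obtain this by multiplying Condition~(1) by $r_j$ and summing (using $r_0=0$). You merely make explicit the index bookkeeping (transposing true/reported indices and the $[\ell]$ versus $[\ell+1]$ range, where the extra $z_{\ell+1}$-report terms are nonnegative) and the smallness restriction on $\epsilon$, both of which the paper leaves implicit.
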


\begin{lemma}\label{lemma:lower-cost}
There exists $\epsilon>0$ such that for the randomized allocation rule $x$ defined in~\eqref{eq:new-alloc}, the expected cost of $x$ is strictly less than the expected reward of $x^*$, i.e., $\mathbb{E}_{c\sim G}[
\varphi(c) \gamma_{{x}(c)}]<\mathbb{E}_{c\sim G}[\varphi(c) \gamma_{x^*(c)}]$.
\end{lemma}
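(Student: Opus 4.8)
\textbf{Proof proposal for Lemma~\ref{lemma:lower-cost}.}
The plan is to compare $\mathbb{E}_{c\sim G}[\varphi(c)\gamma_{x(c)}]$ with $\mathbb{E}_{c\sim G}[\varphi(c)\gamma_{x^*(c)}]$ directly, using that the rule $x$ of~\eqref{eq:new-alloc} agrees with $x^*$ outside the $2(\ell+1)$ half-open intervals of length $\epsilon$ anchored at the breakpoints $z_0,\dots,z_{\ell+1}$, and then doing a first-order expansion in $\epsilon$. First I would fix $\epsilon>0$ with (i) $\epsilon<\tfrac12\min_i(z_{i+1}-z_i)$, so that the intervals $[z_i,z_i+\epsilon)$ and $[z_{i+1}-\epsilon,z_{i+1})$ (over $i\in[\ell]$) are pairwise disjoint and each lies inside a single constant piece of $x^*$, and (ii) a second smallness condition fixed at the end. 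The normalization $\sum_{i',k}\lambda_{(R,i,i',k)}=\sum_{i',k}\lambda_{(L,i+1,i',k)}=1$ makes $x$ a legitimate randomized rule; on $[z_i,z_i+\epsilon)$ its expected effort is the constant $\gamma_i^R:=\sum_{i',k}\lambda_{(R,i,i',k)}\gamma_k$ while $\gamma_{x^*(c)}=\gamma_{x^*(z_i)}$ there, and on $[z_{i+1}-\epsilon,z_{i+1})$ (which also sits in $(z_i,z_{i+1})$) the expected effort of $x$ is $\gamma_i^L:=\sum_{i',k}\lambda_{(L,i+1,i',k)}\gamma_k$ and again $\gamma_{x^*(c)}=\gamma_{x^*(z_i)}$.

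The next step is to specialize to $G=U[0,\bar c]$, so $g\equiv 1/\bar c$ and $\varphi(c)=c+G(c)/g(c)=2c$, and to write the difference of expected virtual costs as a sum over the modified intervals only:
\begin{equation*}
\Delta:=\mathbb{E}_{c\sim G}[\varphi(c)\gamma_{x(c)}]-\mathbb{E}_{c\sim G}[\varphi(c)\gamma_{x^*(c)}]=\frac{1}{\bar c}\sum_{i\in[\ell]}\Big[(\gamma_i^R-\gamma_{x^*(z_i)})\!\int_{z_i}^{z_i+\epsilon}\!2c\,dc+(\gamma_i^L-\gamma_{x^*(z_i)})\!\int_{z_{i+1}-\epsilon}^{z_{i+1}}\!2c\,dc\Big].
\end{equation*}
Using $\int_{z_i}^{z_i+\epsilon}2c\,dc=2z_i\epsilon+\epsilon^2$ and $\int_{z_{i+1}-\epsilon}^{z_{i+1}}2c\,dc=2z_{i+1}\epsilon-\epsilon^2$, this collects (valid uniformly for all $\epsilon$ in the fixed range) into $\Delta=\tfrac{2\epsilon}{\bar c}S+\tfrac{\epsilon^2}{\bar c}D$, where $D=\sum_{i\in[\ell]}(\gamma_i^R-\gamma_i^L)$ is a constant independent of $\epsilon$ and
\begin{equation*}
S=\sum_{i\in[\ell],i'\in[\ell+1],k\in[n]}\big(\lambda_{(R,i,i',k)}\gamma_k z_i+\lambda_{(L,i+1,i',k)}\gamma_k z_{i+1}\big)-\sum_{i\in[\ell]}\gamma_{x^*(z_i)}\big(z_i+z_{i+1}\big).
\end{equation*}
Here the first sum is exactly the right-hand side, and the second sum exactly the left-hand side, of the ``strictly lower joint cost'' inequality --- item~\eqref{item:cont-dist-const} of Theorem~\ref{thm:char-cont}, in the normalized form invoked in the proof of Theorem~\ref{thm:uniform} --- so $S<0$. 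Hence $\Delta=\tfrac{\epsilon}{\bar c}(2S+\epsilon D)<0$ as soon as $\epsilon<2|S|/(|D|+1)$; choosing $\epsilon$ below this bound and below $\tfrac12\min_i(z_{i+1}-z_i)$ gives $\mathbb{E}_{c\sim G}[\varphi(c)\gamma_{x(c)}]<\mathbb{E}_{c\sim G}[\varphi(c)\gamma_{x^*(c)}]$, which is the claim (and, together with Lemma~\ref{lemma:higher-reward}, yields the contradiction in Theorem~\ref{thm:uniform}).

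The only substantive step --- and the sole place uniformity enters --- is the identification of $S$ with the gap in the ``strictly lower joint cost'' inequality. This works precisely because $\varphi(c)=2c$ is a constant multiple of $c$, so to first order in $\epsilon$ the virtual-cost comparison reduces to the very cost comparison the deviation weights $\lambda$ were built to witness: strictly lower joint cost implies strictly lower joint \emph{virtual} cost. I expect this to be the main obstacle to any generalization: for a non-uniform density the leading-$\epsilon$ coefficient would weight the $i$-th interval by $\varphi(z_i)g(z_i)$ (resp.\ $\varphi(z_{i+1})g(z_{i+1})$) rather than by $z_i$ (resp.\ $z_{i+1}$), and there is no reason for the resulting quantity to be controlled by item~\eqref{item:cont-dist-const} --- exactly the obstruction the exponential example in the main text makes concrete. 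Everything else is the routine Taylor bookkeeping above, legitimate because $x^*$ is monotone piecewise constant so the finitely many modified intervals are well defined.
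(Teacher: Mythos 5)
Your proof is correct and is essentially the same argument as the paper's: split the integral over the $2(\ell+1)$ width-$\epsilon$ intervals where $x$ and $x^*$ disagree, use the uniform structure to evaluate the virtual-cost integrals, and identify the leading $O(\epsilon)$ coefficient with the (strictly positive) gap in the ``strictly lower joint cost'' condition so that small enough $\epsilon$ forces the desired inequality. The only cosmetic differences are that the paper evaluates $\int_{c_1}^{c_2}\varphi g\,\mathrm{d}c=G(c)c\big|_{c_1}^{c_2}$ via integration by parts (valid for any $G$) while you specialize directly to $\varphi(c)=2c$, and that your $\Delta=\tfrac{\epsilon}{\bar c}(2S+\epsilon D)$ bookkeeping is a bit more compact and also sidesteps a potential division by zero in the paper's stated $\epsilon$-threshold when $D=0$.
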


Combining the above two lemmas we have that for the value of $\epsilon$ specified in Lemma~\ref{lemma:lower-cost}, $\mathbb{E}_{c\sim G}[R_{x(c)}-\varphi(c) \gamma_{{x}(c)}]\geq \mathbb{E}_{c\sim G}[R_{x^*(c)}-\varphi(c) \gamma_{x^*(c)}]$.

\begin{proof}[Proof of Lemma~\ref{lemma:higher-reward}]
Recall that the construction of $x$ uses the set of  breakpoints of the virtual welfare maximizing allocation rule $Z$, and the weights $\lambda_{(L,i+1,{i'},k)},$ $\lambda_{(R,i,{i'},k)} \geq 0$ $\forall i\in [\ell],i'\in[\ell+1],k \in [n]$ which satisfy $\sum_{i'\in [\ell+1],k\in [n]}\lambda_{(L,i+1,i',k)}= \sum_{i'\in [\ell+1],k\in [n]}\lambda_{(R,i,i',k)}=1$ $\forall i\in [\ell]$ and 
\begin{enumerate}
    \item[(1)] $\sum_{i'\in [\ell],k\in [n]}\frac{1}{2}(\lambda_{(R,i',i,k)}F_{k,j}+ \lambda
_{(L,i'+1,i,k)}F_{k,j})\geq 
F_{x^*(z_i),j}$ $\forall  1\le j \le m, i\in [\ell],$
\item[(2)] $\sum_{i\in [\ell]}(\gamma_{x^*(z_i)} z_i+\gamma_{x^*(z_{i})} z_{i+1}) >$ $\sum_{i\in [\ell],i'\in [\ell+1],k\in [n]}(\lambda_{(R,i,i',k)}\gamma_{k} z_i+\lambda_{(L,i+1,i',k)}\gamma_{k} z_{i+1}).$
\end{enumerate}

Using this, we split the integral $\mathbb{E}_{c\sim G}[R_{x}(c)]=\int_{0}^{\bar{c}}{R_{x(c)}g(c)  \dd c}$ by the set of breakpoints and get
\begin{eqnarray*}
\textstyle \mathbb{E}_{c\sim G}[R_{x}(c)]=\sum_{i\in [\ell]}(\int_{z_{i}}^{z_{i}+\epsilon}R_{x(c)} g(c) \dd c +\int_{z_{i}+\epsilon}^{z_{i+1}-\epsilon}R_{x(c)} g(c) \dd c +\int_{z_{i+1}-\epsilon}^{z_{i+1}}R_{x(c)}g(c) \dd c).
\end{eqnarray*}
From the definition of $x$,
\begin{eqnarray*}
\textstyle \mathbb{E}_{c\sim G}[R_{x}(c)]&=& \textstyle \sum_{i\in [\ell]}\int_{z_{i}}^{z_{i}+\epsilon}\sum_{i'\in [\ell],k\in [n]}\lambda_{(R,i,i',k)}R_k g(c) \dd c +\\
\textstyle && \textstyle \sum_{i\in [\ell]}\int_{z_{i}+\epsilon}^{z_{i+1}-\epsilon}R_{x^*(c)} g(c) \dd c +\sum_{i\in [\ell]}\int_{z_{i+1}-\epsilon}^{z_{i+1}}\sum_{i'\in [\ell],k\in [n]}\lambda_{(L,i+1,i',k)}R_k g(c) \dd c.
\end{eqnarray*}
Since $G=U[0,\bar{c}]$, it holds that $\int_{c_1}^{c_2}g(c)\dd c=\frac{c_2-c_1}{\bar{c}}$ $\forall c_1<c_2\in C$. Therefore, 
\begin{eqnarray*}
\textstyle \mathbb{E}_{c\sim G}[R_{x}(c)]= \textstyle  \frac{\epsilon}{\bar{c}}\sum_{i\in [\ell],i'\in [\ell+1],k\in [n]}(\lambda_{(R,i,i',k)}R_k + \lambda_{(L,i+1,i',k)}R_k) + \textstyle \sum_{i\in [\ell]}\int_{z_{i}+\epsilon}^{z_{i+1}-\epsilon}R_{x^*(c)} g(c) \dd c.
\end{eqnarray*}
Using the same arguments, we have the following equality for $x^*$.
\begin{eqnarray*}
\textstyle \mathbb{E}_{c\sim G}[R_{x^*}(c)]= \textstyle  \frac{\epsilon}{\bar{c}}\sum_{i\in [\ell]}2R_{x^*(z_i)} + \textstyle \sum_{i\in [\ell]}\int_{z_{i}+\epsilon}^{z_{i+1}-\epsilon}R_{x^*(c)} g(c) \dd c.
\end{eqnarray*}
It follows that $\mathbb{E}_{c\sim G}[R_{x(c)}]\geq \mathbb{E}_{c\sim G}[R_{x^*(c)}]$ if
\begin{eqnarray*}
\textstyle \sum_{i\in [\ell],i'\in [\ell+1],k\in [n]}(\lambda_{(R,i,i',k)}R_k  + \lambda_{(L,i+1,i',k)}R_k )\geq \sum_{i\in [\ell]}2R_{x^*(z_i)}.
\end{eqnarray*}
By multiplying the inequalities in condition (1) by $r_j$, and summing over $j\in [m]$ (note that $r_0=0$) we have $\sum_{i'\in [\ell],k\in [n]}\frac{1}{2}(\lambda_{(R,i',i,k)}R_k+ \lambda
_{(L,i'+1,i,k)}R_k)\geq 
R_{x^*(z_i)}$ $\forall  1\le j \le m, i\in [\ell].$ This completes the proof.
\end{proof}

\begin{proof}[Proof of Lemma~\ref{lemma:lower-cost}]
Similarly to the proof of Lemma~\ref{lemma:higher-reward}, we split the integral $\mathbb{E}_{c\sim G}[R_{x}(c)]=\int_{0}^{\bar{c}}{R_{x(c)}g(c)  \dd c}$ by the set of breakpoints, we have that  $\mathbb{E}_{c\sim G}[\varphi(c)\gamma_{x(c)}]$, the expected virtual cost of $x$, is 
\begin{eqnarray*}
\textstyle \sum_{i\in [\ell]}(\int_{z_{i}}^{z_{i}+\epsilon}\varphi(c)\gamma_{x(c)} g(c) \dd c +\int_{z_{i}+\epsilon}^{z_{i+1}-\epsilon}\varphi(c)\gamma_{x(c)} g(c) \dd c +\int_{z_{i+1}-\epsilon}^{z_{i+1}}\varphi(c)\gamma_{x(c)}g(c) \dd c).
\end{eqnarray*}
From the definition of $x$, 
\begin{eqnarray*}
\textstyle \mathbb{E}_{c\sim G}[\varphi(c)\gamma_{x(c)}] &=&
\textstyle \sum_{i\in [\ell]}\int_{z_{i}}^{z_{i}+\epsilon}\sum_{i'\in [\ell+1],k\in [n]}\lambda_{(R,i,i',k)}\varphi(c)\gamma_k g(c) \dd c +\\
&& \textstyle \sum_{i\in [\ell]}\int_{z_{i}+\epsilon}^{z_{i+1}-\epsilon}\varphi(c)\gamma_{x^*(c)} g(c) \dd c + \\
&& \textstyle \sum_{i\in [\ell]}\int_{z_{i+1}-\epsilon}^{z_{i+1}}\sum_{i'\in [\ell+1],k\in [n]}\lambda_{(L,i+1,i',k)}\varphi(c)\gamma_k g(c) \dd c.
\end{eqnarray*}
Using integration by parts it can be verified that $\int_{c_1}^{c_2} \varphi(c)g(c) \dd c =   G(c)c\mid^{c_2}_{c_1}$ $\forall c_1,c_2 \in C$. Therefore, 
\begin{eqnarray*}
\textstyle \mathbb{E}_{c\sim G}[\varphi(c)\gamma_{x(c)}] &=&
\textstyle \sum_{i\in [\ell]} \sum_{i'\in [\ell+1],k\in [n]}\lambda_{(R,i,i',k)}\gamma_k G(c)c \mid_{z_{i}}^{z_{i}+\epsilon} + \textstyle \sum_{i\in [\ell]}\gamma_{x^*(c)} G(c)c \mid_{z_{i}+\epsilon}^{z_{i+1}-\epsilon}+ \\
&& \textstyle \sum_{i\in [\ell]}\sum_{i'\in [\ell+1],k\in [n]}\lambda_{(L,i+1,i',k)}\gamma_k  G(c)c \mid_{z_{i+1}-\epsilon}^{z_{i+1}}.
\end{eqnarray*}
Note that $G(c)c\mid^{z_i+\epsilon}_{z_i}=\frac{z_i+\epsilon}{\bar{c}}(z_i+\epsilon)-\frac{z_i}{\bar{c}}z_i=2\frac{z_i \epsilon}{\bar{c}}+\frac{\epsilon^2}{\bar{c}}$, $G(c)c\mid^{z_{i+1}}_{z_{i+1}-\epsilon}=\frac{z_{i+1}}{\bar{c}}z_{i+1}-\frac{z_{i+1}-\epsilon}{\bar{c}}(z_{i+1}-\epsilon)=\frac{2z_{i+1}\epsilon}{\Bar{c}}-\frac{\epsilon^2}{\Bar{c}}.$
Thus, 
\begin{eqnarray*}
\textstyle \mathbb{E}_{c\sim G}[\varphi(c)\gamma_{x(c)}] &=&
\textstyle \frac{\epsilon}{\Bar{c}}\sum_{i\in [\ell]} \sum_{i'\in [\ell+1],k\in [n]}\lambda_{(R,i,i',k)}\gamma_k (2 z_i+\epsilon) + \textstyle \sum_{i\in [\ell]}\gamma_{x^*(c)} G(c)c \mid_{z_{i}+\epsilon}^{z_{i+1}-\epsilon}+ \\
&& \textstyle\frac{\epsilon}{\Bar{c}}\textstyle \sum_{i\in [\ell]}\sum_{i'\in [\ell+1],k\in [n]}\lambda_{(L,i+1,i',k)}\gamma_k (2 z_{i+1}-\epsilon).
\end{eqnarray*}
Using the same arguments, we have the following equality for $\mathbb{E}_{c\sim G}[\varphi(c)\gamma_{x^*(c)}]$.
\begin{eqnarray*}
\textstyle \mathbb{E}_{c\sim G}[\varphi(c)\gamma_{x^*(c)}] &=&
\textstyle \frac{\epsilon}{\Bar{c}}\sum_{i\in [\ell]} \gamma_{x^*(z_i)} (2 z_i+\epsilon) + \textstyle \sum_{i\in [\ell]}\gamma_{x^*(c)} G(c)c \mid_{z_{i}+\epsilon}^{z_{i+1}-\epsilon}+ \\
&&\textstyle \frac{\epsilon}{\Bar{c}}\textstyle \sum_{i\in [\ell]}\gamma_{x^*(z_i)}  (2 z_{i+1}-\epsilon).
\end{eqnarray*}
It follows that $\mathbb{E}_{c\sim G}[\varphi(c)\gamma_{x^*(c)}]<\mathbb{E}_{c\sim G}[\varphi(c)\gamma_{x(c)}]$ if 
\begin{eqnarray*}
\textstyle \sum_{i\in [\ell]} \sum_{i'\in [\ell+1],k\in [n]}\lambda_{(R,i,i',k)}\gamma_k (2 z_i+\epsilon) +
 \textstyle \textstyle \sum_{i\in [\ell]}\sum_{i'\in [\ell+1],k\in [n]}\lambda_{(L,i+1,i',k)}\gamma_k (2 z_{i+1}-\epsilon)&<&\\
\textstyle \sum_{i\in [\ell]} \gamma_{x^*(z_i)} (2 z_i+\epsilon) +
 \textstyle\textstyle \sum_{i\in [\ell]}\gamma_{x^*(z_i)}  (2 z_{i+1}-\epsilon)
\end{eqnarray*}
Rearranging, the above inequality holds if
\begin{eqnarray*}
\textstyle \epsilon \sum_{i\in [\ell]} \sum_{i'\in [\ell+1],k\in [n]}(\lambda_{(R,i,i',k)}\gamma_k  -
\lambda_{(L,i+1,i',k)}\gamma_k)
&<&\\
\textstyle 2\sum_{i\in [\ell]} (\gamma_{x^*(z_i)}  z_i+\gamma_{x^*(z_i)}  z_{i+1})-
\textstyle 2\sum_{i\in [\ell],i'\in [\ell+1],k\in [n]}(\lambda_{(R,i,i',k)}\gamma_k  z_i +\lambda_{(L,i+1,i',k)}\gamma_k  z_{i+1}) &&
\end{eqnarray*}
The right hand-side is strictly positive by (2). Therefore, choosing 
\begin{eqnarray*}
\textstyle \epsilon 
&<& \frac{2\sum_{i\in [\ell]} (\gamma_{x^*(z_i)}  z_i+\gamma_{x^*(z_i)}  z_{i+1})-
\textstyle 2\sum_{i\in [\ell],i'\in [\ell+1],k\in [n]}(\lambda_{(R,i,i',k)}\gamma_k  z_i +\lambda_{(L,i+1,i',k)}\gamma_k  z_{i+1})}{\abs{\sum_{i\in [\ell]} \sum_{i'\in [\ell+1],k\in [n]}(\lambda_{(R,i,i',k)}\gamma_k  -
\lambda_{(L,i+1,i',k)}\gamma_k)}}
\end{eqnarray*}
completes the proof.
\end{proof}

\section{Beyond Uniformly-Distributed Costs}\label{appx:beyond-uni}

In this appendix we present additional observations about other natural classes of distributions, which lead to additional computational results and structural insights, and may prove useful in future work.

\paragraph{\bf Regular distributions}

The first class of distributions is the ``reverse'' version of Myerson's regular distributions definition.

\begin{definition}
A distribution $G$ is \emph{regular} if the corresponding virtual cost $\varphi(\cdot)$ function is nondecreasing. 
\end{definition}

As one would expect, regular distributions give a sufficient condition for the monotonicity of the optimal allocation rule. This implies that when the distribution is regular, the optimal allocation rule is monotone piecewise constant. Combining this with Corollary~\ref{cor:poly-LP-cont} yields the following.

\begin{corollary}
Assuming regular distributions. The problem of determining whether or not the optimal allocation $x^*$ is implementable is solvable in time $O(poly(n,m,|C|^2))$. Moreover, if $x^*$ is implementable, then we can find optimal payments for this allocation rule in time $O(poly(n,m,|C|^2))$.
\end{corollary}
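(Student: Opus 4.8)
The plan is to deduce the statement from Corollary~\ref{cor:poly-LP} once we have shown that, under regularity, the virtual welfare maximizing rule $x^*$ of Definition~\ref{def:welfare-max-alloc} is monotone piecewise constant. Monotonicity is exactly the property that makes $x^*$ a meaningful candidate for implementability --- by Proposition~\ref{prop:impl-is-mon-dis} it is a necessary condition --- and the discrete characterization machinery of Section~\ref{sec:discrete} then takes care of the rest purely mechanically.

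First I would establish the ``reverse-Myerson'' fact that a regular $G$ forces $x^*$ to be monotone. Fix $c<c'$, set $i=x^*(c)$ and $i'=x^*(c')$, and write the two optimality inequalities $R_i-\varphi(c)\gamma_i\ge R_{i'}-\varphi(c)\gamma_{i'}$ and $R_{i'}-\varphi(c')\gamma_{i'}\ge R_i-\varphi(c')\gamma_i$. Summing and rearranging yields the single-crossing inequality $(\varphi(c')-\varphi(c))(\gamma_i-\gamma_{i'})\ge 0$, and since regularity means $\varphi$ is nondecreasing we have $\varphi(c')-\varphi(c)\ge 0$, hence $\gamma_{x^*(c)}\ge\gamma_{x^*(c')}$. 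To upgrade this effort inequality to the index inequality $x^*(c)\ge x^*(c')$ of Definition~\ref{def:monotone} I would use that the actions are ordered so that $\gamma$ is nondecreasing while $R$ is \emph{strictly} increasing (no dominated actions): a strict effort inequality immediately gives a strict index inequality, and in the boundary case $\gamma_i=\gamma_{i'}$ with $i<i'$ action $i'$ would strictly beat action $i$ at type $c$, contradicting $i=x^*(c)$; the remaining degenerate case $\varphi(c')=\varphi(c)$ is settled because the two types then face identical objectives and the consistent tie-breaking convention makes them pick the same action. Since the image of $x^*$ is the finite set $[n]$, a monotone rule is automatically monotone piecewise constant (as noted after Definition~\ref{def:piecewise-constant}).

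With monotonicity in hand the computational claim is a direct application of Corollary~\ref{cor:poly-LP}. Writing down $x^*$ costs $O(|C|\,n)$, since for each of the $|C|$ types we take an $\arg\max$ over $n$ actions. Deciding whether $x^*$ is implementable is then exactly deciding feasibility of the linear program \eqref{LP1} for $x=x^*$, and when it is feasible the revenue-optimal payments for $x^*$ are obtained by minimizing the expected payment $\mathbb{E}_{c\sim G}\!\left[\sum_{j\in[m]} F_{x^*(c),j}\, t^c_j\right]$ over the feasible region of \eqref{LP1}; both are LPs with $O(|C|\,m)$ variables and $O(|C|^2 n)$ constraints, hence solvable in time $O(\mathrm{poly}(n,m,|C|^2))$, which together with the $O(|C|\,n)$ preprocessing gives the stated bound.

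The one genuinely non-routine ingredient is the monotonicity of $x^*$, and within it the handling of ties --- both ties among actions of equal effort and ties in the argmax defining $x^*(c)$ --- which must be dispatched using the ``no dominated actions'' assumption and the consistent tie-breaking rule, so that the conclusion is index-monotonicity rather than merely effort-monotonicity. Everything after that is a short invocation of the already-proven Corollary~\ref{cor:poly-LP}, so I would keep the write-up correspondingly brief.
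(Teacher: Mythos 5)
Your proposal follows the same high-level route as the paper: regularity forces the virtual-welfare maximizer $x^*$ to be monotone, monotonicity plus the finite action set makes it (monotone piecewise constant and hence) amenable to the LP-based implementability machinery, and the rest is a direct invocation of an earlier corollary. In fact you do more than the paper here: the paper only asserts the monotonicity of $x^*$ under regularity (``as one would expect''), whereas your single-crossing argument --- summing the two pointwise-optimality inequalities to get $(\varphi(c')-\varphi(c))(\gamma_{x^*(c)}-\gamma_{x^*(c')})\ge 0$, then upgrading effort-monotonicity to index-monotonicity via the no-dominated-actions ordering and consistent tie-breaking --- is a correct and welcome filling-in of that omitted step. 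The one divergence is in the final invocation: the paper combines monotonicity with Corollary~\ref{cor:poly-LP-cont}, i.e., it treats $x^*$ in the continuous-type model (where virtual costs $\varphi(c)=c+G(c)/g(c)$ and Definition~\ref{def:welfare-max-alloc} are actually defined), representing $x^*$ by its $O(n)$ breakpoints and checking implementability via \eqref{LP:cont}, whereas you invoke the discrete-type Corollary~\ref{cor:poly-LP} and compute $x^*$ by enumerating the $|C|$ types. Your reading is consistent with the $O(poly(n,m,|C|^2))$ bound in the statement (which only makes literal sense for finite $C$), but if one takes the continuous setting the paper intends, the step ``write down $x^*$ in $O(|C|n)$ time and feed it to \eqref{LP1}'' does not apply as stated; the repair is immediate and is exactly the paper's route --- compute the at most $n$ thresholds where the argmax of $R_i-\varphi(c)\gamma_i$ switches (crossings of $\varphi$ with the ratios $(R_i-R_{i'})/(\gamma_i-\gamma_{i'})$), obtaining the breakpoints of the monotone piecewise constant rule, and then apply Corollary~\ref{cor:poly-LP-cont}, which gives the (stronger) bound $O(poly(n,m))$. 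So: substantively correct, same skeleton as the paper, with an added proof the paper skips and a discrete-vs-continuous mismatch you should resolve explicitly in the write-up.
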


\paragraph{\bf Diminishing marginal returns.}

We give another regularity assumption---which we could not find in the contracts literature. Namely, ``decreasing marginal returns''. %ntuitively means concavity of $R$ as a function of $\gamma$

\begin{definition}
Distributions $F$, rewards $r$ and required efforts $\gamma$ satisfy \emph{diminishing marginal returns} if $\frac{R_i-R_{i-1}}{\gamma_i-\gamma_{i-1}}$, is (weakly) decreasing with $i\in [n]$.  
\end{definition}

In words, the ratio between the gained revenue when shifting from action $i-1$ to action $i$ and the added amount of effort is decreasing with $i\in [n]$. This natural assumption, along with regular distributions, suggests the following nice interpretation of the optimal allocation function.

%In words, the ratio between the gained revenue when shifting from action $i-1$ to action $i$ and the added amount ofeasing with $i\in [n]$. This natural רg with regular distributions, suggests רice interpretation on the optimal aרion. 

%\begin{thm:regular-implies-monotone}
%Assuming tions and diminishing marginal return, the revenue maximizing rule $x^*$ is monotone piecewise constant, and for every breakpoint $z_i$ for $1<i\in [n]$
\begin{theorem}\label{thm:regular-implies-monotone}
Assuming regular distributions and diminishing marginal return, the revenue maximizing rule $x^*$ is monotone piecewise constant, and for every breakpoint $z_i$ for $1<i\in [n]$
$$
\varphi(z_{i})=\frac{R_{n+1-i}-R_{n-i}}{\gamma_{n+1-i}-\gamma_{n-i}}.
$$
\end{theorem}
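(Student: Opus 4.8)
The plan is to read $x^*$ off the upper envelope of a family of affine functions. For a fixed cost $c$, the virtual welfare of action $i$, namely $R_i-\varphi(c)\gamma_i$, is affine in the scalar $v:=\varphi(c)$; write $h_i(v):=R_i-\gamma_i v$, a line of slope $-\gamma_i$ and intercept $R_i$. Then Definition~\ref{def:welfare-max-alloc} reads $x^*(c)\in\arg\max_{i\in[n]}h_i(\varphi(c))$. So I would (a) determine, purely as a function of $v$, which $h_i$ attains the maximum $H(v):=\max_{i\in[n]}h_i(v)$ and where the maximizer changes, using diminishing marginal returns; then (b) compose with the continuous nondecreasing map $c\mapsto\varphi(c)$ (regularity) to get the monotone piecewise constant structure and the value of $\varphi$ at each breakpoint.

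For (a), put $s_i:=\frac{R_i-R_{i-1}}{\gamma_i-\gamma_{i-1}}$ for $1\le i\le n$ (with $\gamma$ strictly increasing, as is implicit under diminishing returns), and use the conventions $s_{n+1}:=0$ and $s_0:=+\infty$. Diminishing marginal returns is exactly the statement $s_1\ge s_2\ge\dots\ge s_n$, i.e.\ the points $(\gamma_i,R_i)_{i\in[n]}$ lie on a concave chain. The key computation is the telescoping identity, valid for $k<i$,
\[
h_i(v)-h_k(v)=\sum_{j=k+1}^{i}(\gamma_j-\gamma_{j-1})(s_j-v),
\]
together with its mirror image $h_i(v)-h_k(v)=\sum_{j=i+1}^{k}(\gamma_j-\gamma_{j-1})(v-s_j)$ for $k>i$. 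Since the $s_j$ are nonincreasing, for $v\in[s_{i+1},s_i]$ every summand on the relevant side is nonnegative, so $h_i(v)\ge h_k(v)$ for all $k$, with strict inequality on the open interval $(s_{i+1},s_i)$. Hence the maximizer of $H$ equals $n$ on $[0,s_n)$, equals $i$ on $(s_{i+1},s_i)$ for $1\le i\le n-1$, and equals $0$ on $(s_1,\infty)$; in this regime every action $i$ is optimal on a (possibly empty, if $s_{i+1}=s_i$) interval of $v$-values.

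For (b), $G$ has a continuous positive density $g$ on $[0,\bar c]$, so $\varphi(c)=c+G(c)/g(c)$ is continuous, and by regularity it is nondecreasing; also $\varphi(0)=0$ since $G(0)=0$, and hence $x^*(0)=\arg\max_i h_i(0)=\arg\max_i R_i=n$. Composing with (a), $x^*(c)=\arg\max_i h_i(\varphi(c))$ is a nonincreasing step function of $c$, so $x^*$ is monotone in the sense of Definition~\ref{def:monotone}, and since its image is finite it is monotone piecewise constant. Its internal breakpoints are precisely the costs at which $\varphi$ crosses a level $s_j$: starting from action $n$ at $c=0$ and decreasing, the $i$-th breakpoint $z_i$ separates action $n+1-i$ from action $n-i$, and by continuity of $\varphi$ it satisfies $\varphi(z_i)=s_{n+1-i}=\frac{R_{n+1-i}-R_{n-i}}{\gamma_{n+1-i}-\gamma_{n-i}}$, the claimed identity. (Under Assumption~\ref{assumption:high-cost-type}, $\varphi(\bar c)\ge\bar c> R_1/\gamma_1=s_1$, so $\varphi$ sweeps through all of $s_1,\dots,s_n$ and there are exactly $\ell=n$ breakpoints; in degenerate cases where consecutive $s_j$ coincide the corresponding action is skipped and the indices shift accordingly, with the same identity at the surviving breakpoints.)

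The only step that is more than bookkeeping is the convex-hull argument inside (a): diminishing marginal returns must be used not merely to show that every action appears on the upper envelope but that the maximizer changes \emph{one index at a time} as $v$ grows, never jumping from action $i$ to action $i-2$. This is exactly what the telescoping identity together with $s_1\ge\dots\ge s_n$ delivers (equivalently, no point $(\gamma_{i-1},R_{i-1})$ lies strictly below the chord through $(\gamma_{i-2},R_{i-2})$ and $(\gamma_i,R_i)$), and it is the place where one should be careful about the weak vs.\ strict form of the assumption. The remaining ingredients --- continuity and monotonicity of $\varphi$, and reading off $\varphi(z_i)$ at each breakpoint --- are routine.
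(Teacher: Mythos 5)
Your proof is correct and is essentially the same argument as the paper's, just cleaner in its packaging: where the paper runs a downward (and symmetric upward) induction on $k$ to compare $h_{n-i}(\varphi(c))$ with $h_k(\varphi(c))$, you compress the induction into a single telescoping identity $h_i(v)-h_k(v)=\sum_{j=k+1}^{i}(\gamma_j-\gamma_{j-1})(s_j-v)$ and read off the sign from $s_1\ge\dots\ge s_n$. Framing the maximizer as the upper envelope of the lines $h_i(v)=R_i-\gamma_i v$ and composing with the monotone map $c\mapsto\varphi(c)$ recovers exactly the paper's identification of the breakpoints $\varphi(z_i)=s_{n+1-i}$; the extra remarks you add (coincident slopes skip an action, $\varphi(0)=0$ pins $x^*(0)=n$) are correct refinements rather than departures.
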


\begin{proof}
According to Definition~\ref{def:welfare-max-alloc}, it suffices to show that for every $c \in [z_{i},z_{i+1})$, the virtual welfare of action $n-i$, $R_{n-i}-\varphi(c)\cdot \gamma_{n-i}$, is (weakly) higher than that of any other action $k\in [n]$.
There are two analogue cases depending on whether $k<n-i$ or not. Starting with $k<n-i$, we prove that $R_{k}-\varphi(c)\cdot \gamma_{k}\leq R_i-\varphi(c)\cdot \gamma_i$ by downward induction on~$k$. Initially, $k=n-i-1$. Since $c\leq z_{i+1}$ and the virtual costs are non-decreasing; $\varphi(c)\leq \varphi(z_{i+1})$. By the choice of $z_i$, this implies that $\varphi(c)\leq \frac{R_{n-i}-R_{n-i-1}}{\gamma_{n-i}-\gamma_{n-i-1}}$. That is, $R_{n-i-1}-\varphi(c)\cdot \gamma_{n-i-1}\leq R_{n-i}-\varphi(c)\cdot \gamma_{n-i}$. This completes the base case.
For the inductive step, suppose $R_{k+1}-\varphi(c)\cdot \gamma_{k+1}\leq R_{n-i}-\varphi(c)\cdot \gamma_{n-i}$ for $k+1<n-i$. By diminishing marginal returns, $\frac{R_{n-i}-R_{n-i-1}}{\gamma_{n-i}-\gamma_{n-i-1}}\leq \frac{R_{k+1}-R_{k}}{\gamma_{k+1}-\gamma_{k}}$. As shown above, $\varphi(c)\leq \frac{R_{n-i}-R_{n-i-1}}{\gamma_{n-i}-\gamma_{n-i-1}}$. Combining the last two inequalities, we have $R_k-\varphi(c)\cdot \gamma_k \leq R_{k+1}-\varphi(c)\cdot \gamma_{k+1}$. Which, by the induction hypothesis, implies that $R_k-\varphi(c)\cdot \gamma_k \leq R_{n-i}-\varphi(c)\cdot \gamma_{n-i}$. The proof for $k>n-i$ is analogue.
\end{proof}

\begin{remark}
We can show that decreasing marginal returns implies that every action is implementable in the untyped principal-agent problem. Notice that this implication goes away when (single-dimensional) types are added (see Proposition~\ref{prop:non-monotone}).
\end{remark}

\end{document}